\documentclass[11pt,reqno]{amsart} 

\usepackage[backend=bibtex, style=alphabetic, maxalphanames=5, maxnames=99, useprefix=true]{biblatex}
\addbibresource{refs.bib}

\usepackage{amsmath,amssymb,amsfonts,amsthm,amsbsy,mathrsfs,mathtools}

\usepackage[foot]{amsaddr}
\usepackage{soul}
\usepackage[usenames,dvipsnames]{xcolor}

\usepackage[hidelinks]{hyperref}
\hypersetup{
    colorlinks,
    linkcolor={red!50!black},
    citecolor={blue!50!black},
    urlcolor={blue!80!black}
}

\usepackage[top=1in, bottom=1.25in, left=1.2in, right=1.2in,footskip=.5in]{geometry}
\usepackage{tikz}

\linespread{1.2}

\usepackage{dsdshorthand}

\newcommand{\optimal}{{\color{red}${}^\textbf{*}$}}

\renewcommand{\@}{\backslash}

\newcommand{\potimes}{\,\hat\otimes\,}
\newcommand{\D}{{\mathbb{D}}}
\newcommand{\sD}{{\mathscr{D}}}
\newcommand{\sO}{{\mathscr{O}}}
\newcommand{\sC}{{\mathscr{C}}}

\makeatletter
\newcommand{\thickhline}{%
    \noalign {\ifnum 0=`}\fi \hrule height 1pt
    \futurelet \reserved@a \@xhline
}
\makeatother

\begin{document}

\title{Automorphic Spectra and the Conformal Bootstrap} 

\author{Petr Kravchuk\textsuperscript{1,2}}
\address[1]{Institute for Advanced Study, Princeton, NJ 08540, USA}
\address[2]{Simons Center for Geometry and Physics, Stony Brook University, Stony Brook, NY 11794, USA}
\curraddr{}
\email{pkravchuk@ias.edu}

\author{Dalimil Maz\'{a}\v{c}\textsuperscript{1}}
\email{dmazac@ias.edu}

\author{Sridip Pal\textsuperscript{1}}
\email{sridip@ias.edu}

\begin{abstract}
We describe a new method for constraining Laplacian spectra of hyperbolic surfaces and 2-orbifolds. The main ingredient is consistency of the spectral decomposition of integrals of products of four automorphic forms. Using a combination of representation theory of $\mathrm{PSL}_2(\mathbb{R})$ and semi-definite programming, the method yields rigorous upper bounds on the Laplacian spectral gap. In several examples, the bound is nearly sharp. For instance, our bound on all genus-2 surfaces is $\lambda_1\leq 3.8388976481$, while the Bolza surface has $\lambda_1\approx 3.838887258$. The bounds also allow us to determine the set of spectral gaps attained by all hyperbolic 2-orbifolds. Our methods can be generalized to higher-dimensional hyperbolic manifolds and to yield stronger bounds in the two-dimensional case. The ideas were closely inspired by modern conformal bootstrap.
\end{abstract}

\date{}

\makeatletter
\gdef\@fpheader{}
\makeatother

\maketitle

\newpage
\setcounter{tocdepth}{1}
\tableofcontents

\section{Introduction}\label{sec:introduction}
The goal of this article is to propose and study a connection between the spectral geometry of hyperbolic manifolds and the conformal bootstrap. These are fields of mathematics and mathematical physics that have been hitherto evolving separately, each having its own methods and goals. Nevertheless, both subjects are deeply rooted in the harmonic analysis on the non-compact Lie group $\mathrm{SO}(1,d)$, which will allow us to build a robust bridge between them. In particular, we will explain how methods of the conformal bootstrap can be adapted to produce new rigorous bounds on the spectra of hyperbolic manifolds.

The study of Laplacian spectra on hyperbolic manifolds has a long history and enjoys a rich interplay with various areas of mathematics~\cite{sarnak03}. Indeed, hyperbolic manifolds are multi-faceted objects which can be simultaneously viewed through the lens of differential and algebraic geometry, harmonic analysis, and even number theory. In this paper, we will focus on the spectral gap $\lambda_1(M)$, defined as the first positive eigenvalue of the Laplace-Beltrami operator on a hyperbolic manifold $M$. Although our methods generalize to $M$ of any dimensionality, in the present work we will restrict to two-dimensional manifolds and orbifolds.

Key open questions in this subject revolve around establishing bounds on the spectral gap. For example, the Selberg's 1/4 conjecture~\cite{selberg65}, which can be viewed as a special case of the functoriality conjecture of the Langlands program~\cite{langlands70}, asserts that $\lambda_1\geq 1/4$ for all hyperbolic surfaces arising from congruence subgroups of $\mathrm{SL}_2(\mathbb{Z})$. In the present work, we will address the question: What is the set of spectral gaps attained by all compact hyperbolic surfaces and orbifolds? Our main tool will be a new method for proving upper bounds on $\lambda_1$ for all orbifolds of a given topological type. This method originates from mathematical physics, specifically from the field of the conformal bootstrap, which we now briefly describe.

The conformal bootstrap is an approach to studying conformal field theories (CFTs) in general dimension. CFTs can be most concretely realized as scaling limits of critical lattice models of statistical mechanics. On general grounds, such models are expected to develop a symmetry under conformal transformations. Proving conformal invariance of critical lattice models is an important and notoriously hard problem which has only been solved in certain two-dimensional examples~\cite{schramm00,smirnov01}. The approach of the conformal bootstrap is orthogonal. It attempts to abstract the notion of conformal field theory by \emph{defining} it as a system satisfying a list of precise axioms. These axioms capture properties generally expected to hold in the continuum limit, such as conformal invariance. In this way, the conformal bootstrap directly studies the continuum limit without reference to any lattice realization. This holds a particular merit in light of universality -- the idea that different lattice models can have identical continuum limits. Indeed, there is abundant evidence that the bootstrap definition of a CFT describes beautiful, rigid, and still largely mysterious mathematical objects whose identity is quite independent of statistical lattice models.

The connection between the spectral geometry of $d$-dimensional hyperbolic manifolds and the conformal bootstrap in $d-1$ dimensions is controlled to a large extent by the group $G=\mathrm{SO}_0(1,d)$. On one hand, $G$ is the isometry group of the $d$-dimensional hyperbolic space. On the other hand, it is the group of conformal isomorphism of $S^{d-1}$, thus underlying the conformal bootstrap of $(d-1)$-dimensional CFTs.

The conformal bootstrap (see~\cite{Poland:2018epd,rychkov2023new} for modern reviews), in its most common incarnation, is the study of correlation functions of the form
\be
	\<O_1(x_1)\cdots O_n(x_n)\>,
\ee
where $x_i$ belong to the conformally compactified Euclidean space $\R^{d-1}\cup\{\infty\} = S^{d-1}$, and $O_i$ are the local operators. The key ingredient which makes the subject non-trivial is the existence of an operator product expansion, which has the following schematic form
\be
	O_1(x_1)O_2(x_2)=\sum_i f_{12i}|x_1-x_2|^{\De_i-\De_1-\De_2}O_i(x_2),
\ee
where $\De_i\in \R$ are the scaling dimensions of the local operators $O_i$. The structure of the correlation functions and of the operator product expansion is constrained by the conformal symmetry $\SO_0(1,d)$.

\begin{figure}[t]
	\centering
	\begin{tikzpicture}
		\node [] at (0,0) {\includegraphics[scale=0.2]{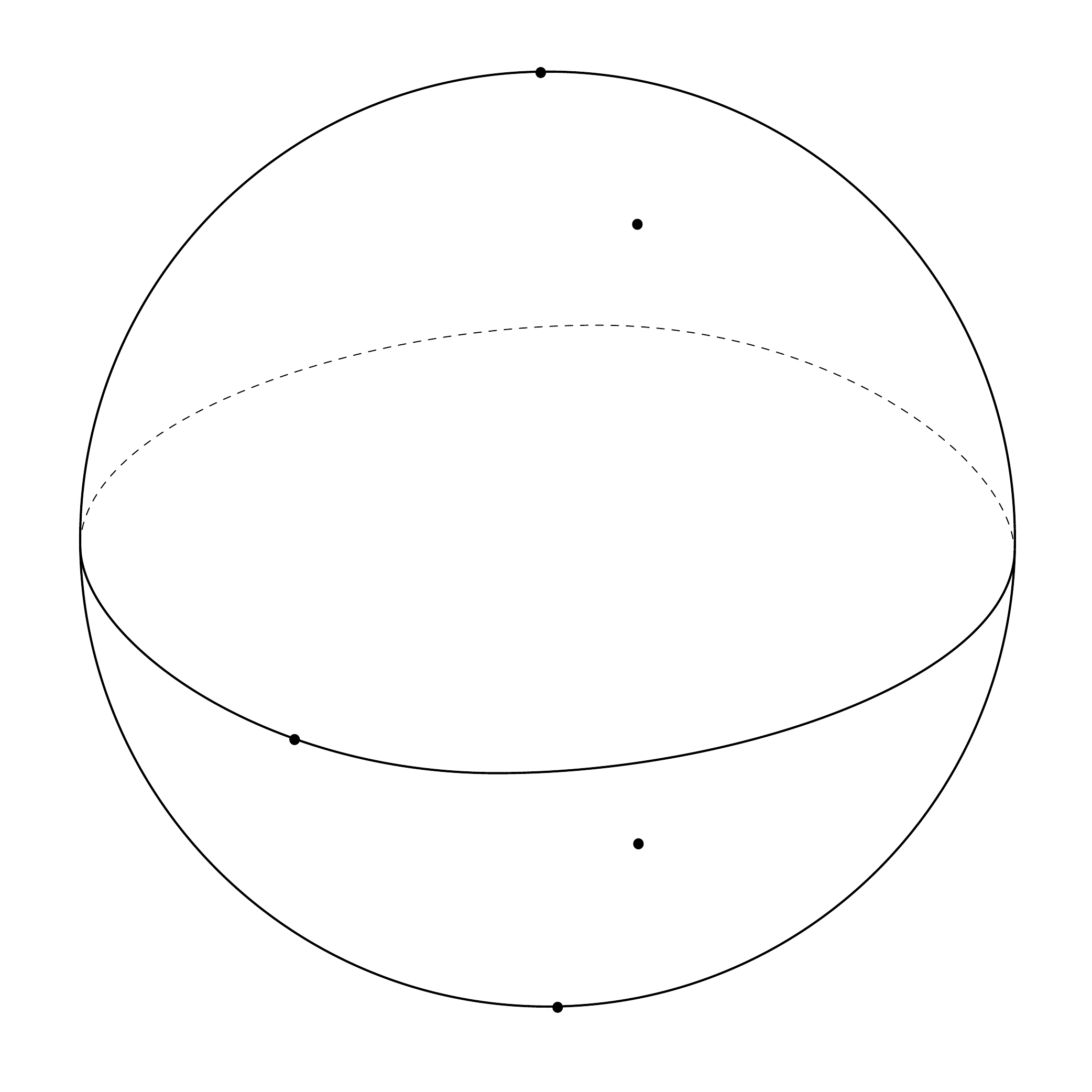}};
		\node [above] at (1.2,1.4) {$\widetilde\sO_n(z_2)$};
		\node [above] at (1.3,-2.4) {$\sO_m(z_1)$};
		\node [above] at (-1.0,-1.3) {$\mathbb{O}_k(x)$};
		\node [above] at (0.1,-3.65) {$0$};
		\node [above] at (0,+3.3) {$\oo$};
	\end{tikzpicture}
	\caption{The operators that we construct for a hyperbolic orbifold are labeled by points on the Riemann sphere. Depending on the type of the operator, it is labeled by a point in the upper or the lower hemisphere, or by a point on the equator.}
	\label{fig:hyperbolicoperators}
\end{figure}

In this work we will consider the case $d=2$ when $G=\SO_0(1,2)=\PSL(2,\R)$.\footnote{Much of the discussion can be straightforwardly generalized to $d>2$} We will show that, given a hyperbolic surface $X$, one can define local operators $\sO_n(z)$ and $\widetilde\sO_n(z)$ where $z$ lives, respectively, in the lower or the upper hemisphere of the Riemann sphere (see Figure~\ref{fig:hyperbolicoperators}). These local operators have well-defined correlation functions, for example
\be\label{eq:introcorr}
	\<\sO_n(z_1)\sO_n(z_2)\widetilde\sO_n(z_3)\widetilde\sO_n(z_4)\>,
\ee
as well as an operator product expansion of the schematic form
\be
	\sO_m(z_1)\sO_n(z_2)=\sum_p f_{mnp}(z_1-z_2)^{p-m-n} \sO_p(z_2)+\cdots,
\ee
and also one for $\sO_m(z_1)\widetilde\sO_n(z_2)$ involving another kind of local operators $\mathbb{O}_k(x)$ with $x$ living on the equator of the Riemann sphere.\footnote{The local operators $\mathbb{O}_k(x)$ also admit an operator product expansion, but we will not study it in this paper.} The spectrum of the scaling dimensions $\De_k$ of $\mathbb{O}_k(x)$ will turn out to be related to the spectrum of the Laplace-Beltrami operator (the Laplacian from here on) on $X$ by $\l_k=\De_k(1-\De_k)$. While some features of this construction are unusual from the point of view of the conformal bootstrap, we will see in Section~\ref{sec:motivation} that the dictionary between the two subjects is remarkably precise, and the unusual features are also present on the conformal field theory side.

This correspondence allows us to study both problems from new points of view. On the one hand, we will use the familiar numerical conformal bootstrap techniques to derive new rigorous bounds on the spectra of hyperbolic surfaces and orbifolds. On the other hand, we hope to use the model given by hyperbolic orbifolds to gain new insights into the conformal bootstrap. For example, even though we do not pursue it in this paper, our discussion gives a practical way of studying the numerical de Sitter bootstrap suggested in~\cite{Hogervorst:2021uvp,DiPietro:2021sjt}, as will be explored elsewhere.

To set up the problem more concretely, let $\mathbb{H} = \{z=x+iy:\,x,y\in\mathbb{R};y>0\}$ be the upper half-plane equipped with the Riemannian metric of constant sectional curvature $-1$, i.e.\
\be
ds^2 = \frac{dx^2+dy^2}{y^2}\,.
\ee
The group of orientation-preserving isometries of $\mathbb{H}$ is $G=\PSL(2,\mathbb{R})$, acting by fractional linear transformations
\be
\begin{pmatrix}
a & b\\ c&d
\end{pmatrix}\cdot z = \frac{a\,z + b}{c\,z +d}\,.
\ee
Let $\Gamma$ be a discrete subgroup of $G$ such that the quotient space $X = \Gamma\backslash \mathbb{H}$ is compact. In other words, $\Gamma$ is a lattice in $G$ with no parabolic elements. If $\Gamma$ has no elliptic elements, $X$ is a smooth hyperbolic surface. If $\Gamma$ is allowed to have elliptic elements, $X$ is a hyperbolic 2-orbifold. Every closed, connected, orientable hyperbolic surface or 2-orbifold arises in this way from some $\Gamma\subset G$ satisfying the above conditions. In what follows, we will refer to the two cases simply as \emph{hyperbolic surface} and \emph{hyperbolic orbifold}, i.e.\ we will assume that these are closed, connected and orientable. Note that any hyperbolic surface is a hyperbolic orbifold.

The Laplacian on $\mathbb{H}$ takes the form $\Delta_{\mathbb{H}} = y^2(\partial^2_x+\partial^2_y)$. When restricted to $\G$-invariant functions, it agrees with the Laplacian $\De_X$ on $X$. We take as the domain of $\De_X$ those functions on $X$ which lift to smooth functions on $\mathbb{H}$. Thus defined, $\De_X$ is essentially self-adjoint~\cite{bucicovschi1999seeleys} and we are interested in its spectrum. Since $X$ is compact, the spectrum of $-\De_X$ is a discrete subset of $\mathbb{R}_{\geq 0}$ whose elements we denote by
\be
\lambda_0=0<\lambda_1 < \lambda_2 < \cdots \rightarrow\infty,
\ee
with multiplicities $d_i>0$. The corresponding eigenfunctions can be viewed as smooth functions on $\mathbb{H}^2$ satisfying
\be
-\Delta_{\mathbb{H}} h_i = \lambda_i\,h_i\,,
\label{eq:SpectralProblem}
\ee
and  $h_i(\gamma\cdot z)=h_i(z)$ for all $\gamma\in\Gamma$. The eigenfunctions $h_i$ are examples of automorphic forms, known as Maass forms.

The above spectral problem has received a lot of attention in both mathematics and physics literature. Despite its apparent simplicity, there is no analytic formula for the eigenvalues and eigenfunctions of any given surface. The lack of exact solvability is partly a reflection of chaoticity of the geodesic motion on $X$~\cite{Hadamard}. Indeed, \eqref{eq:SpectralProblem} is the Schr\"odinger equation coming from quantizing this geodesic motion.

The focus of the present work will be on the low-energy spectrum $\lambda_1$, $\lambda_2$, \ldots.
Specifically, we will introduce a new method, improving and simplifying previous works \cite{Bonifacio:2020xoc,Bonifacio:2021msa}, which will allow us to prove rigorous upper bounds on $\lambda_1$ valid for all $X$ of a fixed topology. For example, we will prove

\begin{theorem}\label{thm:BestBounds}
\phantom{a}
\begin{enumerate}
\item Any hyperbolic orbifold must satisfy $\lambda_1 < 44.8883537$.
\item Any hyperbolic orbifold of genus 2 must satisfy $\lambda_1 < 3.8388976481$.
\item Any hyperbolic orbifold of genus 3 must satisfy $\lambda_1 < 2.6784823893$.
\end{enumerate}
\end{theorem}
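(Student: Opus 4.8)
The plan is to convert the geometric eigenvalue problem into a sum rule via the dictionary already set up, and then to bound $\lambda_1$ by the linear-programming duality that underlies the numerical bootstrap. Concretely, I would start from the four-point function in~\eqref{eq:introcorr} and impose crossing symmetry — the statement that expanding $\<\sO_n(z_1)\sO_n(z_2)\bar\sO_n(z_3)\bar\sO_n(z_4)\>$ in the $(12)(34)$ channel agrees with the crossed channel. Because the operators $\mathbb{O}_k$ appearing in the $\sO\bar\sO$ OPE carry dimensions $\Delta_k$ tied to the Laplacian spectrum through $\lambda_k=\Delta_k(1-\Delta_k)$, the resulting crossing equation becomes a linear sum rule over the spectral data of $X$,
\[
\sum_k p_k \, F_{\Delta_k}(z,\bar z)=0,\qquad p_k\geq 0,
\]
where $F_{\Delta}$ are the crossing vectors built from the relevant conformal blocks and $p_k$ are squared OPE coefficients. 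The $\lambda_0=0$ mode (the constant eigenfunction) plays the role of the identity, and — this is the crucial geometric input — its coefficient is fixed by the total area, since normalizing the constant Maass form produces a factor $1/\mathrm{Area}(X)$.

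Given this sum rule, I would bound $\lambda_1$ by exhibiting a dual functional. Suppose for contradiction that some $X$ has $\lambda_1>\lambda^\ast$; then every non-identity contribution has $\lambda_k\geq\lambda^\ast$, i.e.\ $\Delta_k=\tfrac12+ir_k$ with $\tfrac14+r_k^2\geq\lambda^\ast$. I would search for a linear functional $\alpha$ on the crossing vectors with $\alpha[F_{\Delta}]\geq 0$ for all $\Delta$ satisfying $\Delta(1-\Delta)\geq\lambda^\ast$, normalized so that the identity term has a definite sign; applying $\alpha$ to the sum rule then produces a contradiction, so $\lambda_1\leq\lambda^\ast$. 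The area enters precisely here, because the vacuum weight $1/\mathrm{Area}(X)$ makes feasibility of $\alpha$ area-dependent, and by Gauss--Bonnet the area is controlled by the topology: a hyperbolic orbifold of genus $g$ has $\mathrm{Area}(X)=2\pi|\chi_{\mathrm{orb}}|\geq 4\pi(g-1)$, with equality for the smooth surface. Part~(1) assumes no area bound (orbifold areas are unbounded below), which is why it is the weakest; parts~(2) and~(3) feed in $\mathrm{Area}\geq 4\pi$ and $\mathrm{Area}\geq 8\pi$. Since larger area yields a stronger bound, the binding case is the minimal area realized by the smooth surface, which is exactly why the smooth Bolza surface nearly saturates part~(2).

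To produce $\alpha$ I would follow the SDPB paradigm: restrict $\alpha$ to a finite-dimensional space of derivatives of $F_\Delta$ at the crossing-symmetric point, and convert the positivity requirement ``$\alpha[F_\Delta]\geq 0$ for all $\lambda\geq\lambda^\ast$'' into a semidefinite constraint by writing the relevant functions of $\Delta$ in a basis where positivity becomes positive-semidefiniteness of a matrix of polynomials. Minimizing $\lambda^\ast$ over feasible functionals gives the optimal bound, and the quoted ten-digit numbers come from increasing the number of derivatives and pushing $\lambda^\ast$ to the feasibility boundary. Rigor is obtained by verifying the positivity of the final, explicitly rational functional exactly, so that the computer output is a checkable certificate rather than a floating-point estimate — this is what makes the bounds ``rigorous computer-assisted''.

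The main obstacle is twofold. First, one must establish the crossing equation and the precise blocks $F_\Delta$ for these unusual operators, whose dimensions sit on the principal series $\mathrm{Re}\,\Delta=\tfrac12$ and whose correlators are not those of a standard reflection-positive CFT, and confirm that the coefficients $p_k$ are genuinely nonnegative, since the entire method rests on that positivity — this is the content the earlier sections must supply. Second, the positivity $\alpha[F_\Delta]\geq 0$ must hold over a continuous, unbounded range of $\Delta$, so the genuinely hard part is controlling the functional at large $\lambda$ and converting the continuum positivity into a finite, rigorously verifiable semidefinite condition; achieving control tight enough to nearly saturate the Bolza value in part~(2) is where the analytic and numerical effort concentrates.
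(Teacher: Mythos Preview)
Your overall strategy—crossing equation plus a positive linear functional, implemented via SDPB and then certified in exact arithmetic—is indeed what the paper does. But the mechanism by which the topology enters is not the one you describe, and this is a genuine gap.

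You posit that the identity (the $\lambda_0=0$ contribution) carries a weight $1/\mathrm{Area}(X)$, so that Gauss--Bonnet feeds the genus into the sum rule. In the paper's normalization $\mu(\Gamma\backslash G)=1$ the identity contribution is simply $1$, independent of the area; see the term $(z/(1-z))^{2n}\cdot 1$ in the $t$-channel of Theorem~\ref{theorem:main}. So no area factor is available to distinguish genera, and your proposed route to parts~(2) and~(3) does not go through. (Incidentally, hyperbolic orbifold areas are bounded \emph{below}, not unbounded below; the minimum is $\pi/21$, realized by $[0;2,3,7]$.)

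The paper's actual input from topology is the multiplicity $\ell_n$ of the discrete series $\cD_n$ inside $L^2(\Gamma\backslash G)$, computed by Riemann--Roch, equation~\eqref{eq:RRoch}. For part~(1) one shows every hyperbolic orbifold has $\ell_n>0$ for some $n\in\{1,2,3,4,6\}$, so the single-correlator bound $\lambda_1^{\text{single}}(6)$ applies universally. For parts~(2) and~(3) the key fact is $\ell_1=g$: at genus $g$ there are $g$ independent weight-$2$ modular forms, hence $g$ coherent states $\sO_{1,a}$, and one bootstraps the full system $\langle\sO_{1,a}\sO_{1,b}\bar\sO_{1,c}\bar\sO_{1,d}\rangle$. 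The genus enters as the size of this index set, not as an area factor, and that is what tightens the bound from $8.47$ (single $\sO_1$) to $3.8389$ ($g=2$) and $2.6785$ ($g=3$).

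A secondary point: you propose expanding at the crossing-symmetric point $z=\tfrac12$. The paper explicitly warns that this fails here because the exchanged dimensions $\Delta_k=\tfrac12+i\nu_k$ are complex, so the usual $\rho_t^{\Delta}$ factor is not positive. Instead one expands around $z=0$, where the $t$-channel still converges and each spectral identity is an exact polynomial in $\lambda$; no polynomial approximation of the blocks is needed.
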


The quoted bounds are rigorously valid with the decimal expansion interpreted as an exact rational number. The precise values were obtained using computer-assisted optimization described in the body of the paper. We checked the validity of the proof using exact rational arithmetic. Note that our method also yields simpler bounds which can be derived by hand without resorting to computers. For example, a short pen-and-paper calculation described in Section \ref{ssec:bootstrap} shows that any hyperbolic orbifold must satisfy $\lambda_1 \leq \frac{\sqrt{1297}+55}{2}\approx 45.50694 $, a slighlty weaker version of bound (1) in Theorem~\ref{thm:BestBounds}.

The values appearing in Theorem~\ref{thm:BestBounds} turn out to be very close to $\lambda_1$ of certain well-known hyperbolic surfaces and orbifolds. As shown by Siegel \cite{Siegel}, the hyperbolic orbifold of the smallest area is the unique orbifold of genus 0 and three orbifold points of orders $2,\,3,\,7$. By numerically solving the Laplace equation on this orbifold, we find it has $\lambda_1\approx 44.88835$, indistinguishable from the upper bound (1) of Theorem \ref{thm:BestBounds}. The hyperbolic surface of genus 2 with the largest isometry group is the Bolza surface. Its first positive eigenvalue was computed numerically in \cite{strohmaier2013algorithm}, with the result $\lambda_1 \approx 3.838887258$. Similarly, the most symmetric hyperbolic surface of genus 3 is the Klein quartic, which has $\lambda_1 \approx 2.6779$, see \cite{cook2021properties}.\footnote{See Section~\ref{sec:ff}.} We see that in all three cases, our bound agrees with the known eigenvalues to several decimal places, which makes us hopeful that our method or its generalizations may lead to sharp bounds. In the main text, we will prove several more types of bounds on classes of hyperbolic orbifolds, finding that they are nearly saturated by known orbifolds in almost all cases.

Our bounds improve upon the previous best bounds. For hyperbolic surfaces of genus 2, the previous record is due to Yang and Yau \cite{yang1980eigenvalues}, who showed $\lambda_1 \leq 4$. Note however, that, unlike our bound, the Yang-Yau bound applies to all genus-2 surfaces, and not only those with a hyperbolic metric. For hyperbolic surfaces of genus 3, the previous best bound was found recently by Ros \cite{ros2020first}, who showed $\lambda_1 \leq 2(4-\sqrt{7})\approx 2.7085$. 

We will also ask the following question: what is the image $E\subset \mathbb{R}_{>0}$ of the map $X\mapsto \lambda_1(X)$ when $X$ ranges over all hyperbolic orbifolds? By numerically computing $\l_1$ for a wide class of orbifolds, one may put forward the following Conjecture~\ref{conjecture:theoneandtheonlys}: the structure of $E$ is as shown in Figure~\ref{fig:conjecture}. There, the labels $[g;k_1,\cdots,k_r]$ describe concrete orbifolds and are defined in Section~\ref{sec:bounds}, where we also give a precise statement of the Conjecture~\ref{conjecture:theoneandtheonlys}. Our methods allow us to prove a large part of this conjecture (see Theorem~\ref{thm:setE}).

\begin{figure}[t]
\centering
\includegraphics[width=.75\textwidth]{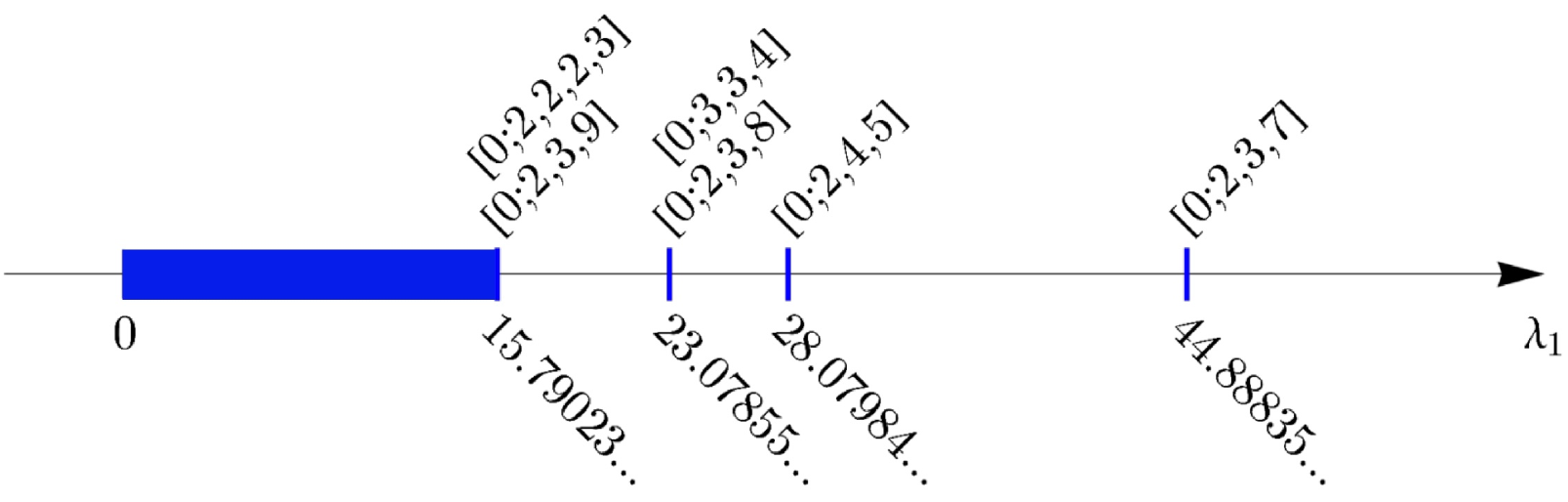}
\caption{\label{fig:conjecture} The conjectured structure of the set of the eigenvalues $\l_1$ attained in hyperbolic orbifolds. There are several discrete points at large $\l_1$ and a continuum below.}
\end{figure}

\subsection{Sketch of the method}
Let us start with a brief sketch of the main idea in mathematical terms. The spectrum of the Laplace operator on $X=\Gamma\backslash\mathbb{H}$ is directly related to the spectrum of unitary irreducible representation of $G=\PSL(2,\mathbb{R})$ appearing in the Hilbert space $L^2(\Gamma\backslash G)$, with $G$ acting by right multiplication. Smooth functions in this Hilbert space form a commutative and associative algebra under pointwise  multiplication. The algebra product is compatible with the action of $G$, and is therefore fully determined in terms of integrals of triple products of modular forms and Maass forms on $\Gamma\backslash \mathbb{H}$. Imposing associativity of the product then severely restricts the allowed spectra of irreducible representations and the integrals of triple products. In practice, we impose associativity as consistency of integrals of quadruple products of $K$-finite vectors in discrete series representations appearing in $L^2(\Gamma\backslash G)$. The resulting constraints can be analyzed using linear and semidefinite programming, and computer-assisted optimization then leads to bounds such as in Theorem \ref{thm:BestBounds}.

\subsection{Relation to previous work}
Let us comment on the relation of our method to the previous works \cite{Bonifacio:2020xoc,Bonifacio:2021msa}. Their authors studied consistency conditions satisfied by quadruple product integrals of Laplacian eigenfunctions on Einstein manifolds \cite{Bonifacio:2020xoc} and hyperbolic manifolds \cite{Bonifacio:2021msa}. They also used these consistency conditions to derive bounds on the Laplacian spectrum.

The main novelty of our approach is that we point out the role played by $G=\PSL(2,\mathbb{R})$ and derive spectral constraints on hyperbolic orbifolds in the framework of representation theory of $G$. This clarifies the relation of this method to the conformal bootstrap and also streamlines practical computations. Another new aspect here is that we consider quadruple integrals of holomorphic forms rather than Laplacian eigenfunctions, which allows us to prove universal bounds at fixed topology.

\begin{table}[h]
\centering
  \begin{tabular}{l @{\hskip 0.3in} l}
\thickhline\vspace{-5pt}\\
the conformal group & $G = \PSL(2,\mathbb{R})$\vspace{5pt}\\
a conformal field theory & a hyperbolic orbifold $\Gamma\backslash \mathbb{H}$ \vspace{5pt}\\
space of field configurations $\Phi$ & $\Gamma\backslash G$\vspace{5pt}\\
a (local) operator & automorphic function $F\in L^2(\Gamma\backslash G)$ \vspace{5pt}\\
Casimir eigenvalue & Laplacian eigenvalue $\lambda = \Delta(1-\Delta)$\vspace{5pt}\\
operator product expansion & pointwise product $F_iF_j = \sum\limits_{k}c_{ij}^{k}F_k$\vspace{5pt}\\
correlation function & overlap integral $\int\limits_{\Gamma\backslash G}\!\!d\mu(g)\,F_1(g)\ldots F_n(g)$\vspace{5pt}\\
\thickhline
\vspace{0pt}
  \end{tabular}
  \caption{A dictionary between conformal field theories and hyperbolic orbifolds}
  \label{tab:dictionary}
\end{table}

\subsection{Physics motivation}
\label{sec:motivation}

Our method was inspired by ideas familiar in the context of the conformal bootstrap. The conformal bootstrap is a set of techniques for rigorously constraining conformal field theories (CFTs) in a general number of spacetime dimensions starting from elementary axioms. These axioms are rooted in the physical principles of conformal invariance and unitarity. The basic idea of the conformal bootstrap goes back to the work of Ferrara, Gatto and Grillo \cite{Ferrara:1973yt} and of Polyakov \cite{Polyakov:1974gs}. It received a renewed impetus when the authors of \cite{Rattazzi:2008pe} explained how it can be implemented using linear programming to derive non-perturbative bounds on the spectral data of conformal field theories. Since then, the conformal bootstrap has given rise to a diverse set of analytical and numerical tools for constraining CFTs, see \cite{Poland:2018epd} for a recent review and \cite{Simmons-Duffin:2016gjk} for a pedagogical exposition. Perhaps the most striking realization has been that the bootstrap bounds on the spectral data are often nearly saturated by interesting CFTs, such as the 3D Ising model \cite{ElShowk:2012ht,Kos:2016ysd}.

While inspired by these ideas, we take a slightly different point of view on the conformal field theory. In particular, we interpret (non-rigorously) the correlation functions as being computed by a Euclidean path integral in $d-1$ dimensions
\be
	\<O_1(x_1)\cdots O_n(x_n)\>=\int_\Phi d\f e^{-S[\f]}O_1(x_1)\cdots O_n(x_n)=\int_{\Phi}d\mu(\f) O_1(x_1)\cdots O_n(x_n),
\ee
where $\Phi$ is the space of all field configurations and $S[\phi]$ is the action (in the sense of Lagrangian mechanics) for the field configuration $\phi$. The second equality interprets $\cD\f e^{-S[\f]}$ as some probability measure $d\mu$ on $\Phi$.  In this picture, the operators $O_i(x_i)$ (at fixed $x_i$) become functions $O_i(x_i):\Phi\to \C$. We will denote the value of $O_i(x_i)$ on a field configuration $\f$ as $F_{i,x_i}(\f)$. With this notation, the correlation function becomes
\be
	\<O_1(x_1)\cdots O_n(x_n)\>=\int_{\Phi}d\mu(\f) F_{1,x_1}(\f)\cdots F_{n,x_n}(\f).
\ee
The role of conformal symmetry in this picture is that we expect the conformal group $G=\SO_0(1,d)$ to act on $\Phi$ and thus also on the functions on $\Phi$. Conformal invariance of the quantum theory means that the measure $d\mu$ is invariant under this action. If $O_i$ are conformal primary operators, then the functions $F_{i,x_i}(\f)$ transform nicely under the action of $\SO_0(1,d)$. General local and non-local operators become more general functions on $\Phi$.

It will become clear in the following section that the spectral theory of hyperbolic manifolds is obtained immediately from this construction by setting $\Phi=\G\@ G$. Indeed, $\G\@ G$ has a natural action by $G$ from the right. The functions on $\Phi=\G\@ G$ are just the functions on $G$ invariant under left multiplication by $\G$, and are known as automorphic functions. The hyperbolic manifold $X$ appears as a quotient of $\Phi$, $X=\Phi/\SO(d)$. In Table~\ref{tab:dictionary}, we show the mapping between the objects from the world of hyperbolic $d$-orbifolds and from the world of conformal field theories in $d-1$ dimensions, specialized to $d=2$ which is the case throughout the paper.

Importantly, this correspondence is obtained by simply setting $\Phi=\G\@ G$, and without any further assumptions or modifications. In particular, the two situations are indistinguishable from the point of view of $G$. This means that the bootstrap equations that we derive are valid (with a caveat that we will come to momentarily) both for conformal field theories \textit{and} hyperbolic orbifolds. Thus, hyperbolic orbifolds provide explicit solutions to the conformal bootstrap equations of the kind that follow from this path-integral picture.

The caveat is that the conformal bootstrap equations that follow from the path-integral picture are not the usual conformal bootstrap equations. In particular, these equations stress Euclidean unitarity instead of Lorentzian unitarity. Here, by Euclidean unitarity we do not mean reflection positivity, but instead the positivity of the measure $d\mu$, i.e.\ the positivity of the Boltzman weights in the statistical interpretation of Euclidean conformal field theory. The unitary representations that are relevant for this setup are those of $G=\SO_0(1,d)$ and not of $\widetilde{\SO}_0(2,d-1)$, the universal cover of $\SO_0(2,d-1)$, as is usual in the conformal field theory.

The conformal bootstrap equations that we discuss here are precisely those proposed recently in the context of de Sitter bootstrap~\cite{Hogervorst:2021uvp,DiPietro:2021sjt} (and their generalizations). However, in our perspective they arise from the standard path integral of a conformal field theory. In this paper we will only study them in the case of $d=2$ and with applications to $\Phi=\G\@ G$. They should also apply to the usual conformal field theories in general $d$. Exploration of this is an ongoing work that will be reported elsewhere.

\subsection*{Note added} When this work was complete, we learned of the upcoming article \cite{Bonifacio:2021aqf}, which contains partially overlapping results. We agreed to coordinate our submissions to the arXiv.

\subsection*{Acknowledgements}
We are indebted to Peter Sarnak for detailed comments on the first version of the article and illuminating discussions. We would also like to thank Leonardo Rastelli for useful discussions. DM gratefully acknowledges funding provided by Edward and Kiyomi Baird as well as the grant DE-SC0009988 from the U.S. Department of Energy. SP acknowledges a debt of gratitude for the funding provided by Tomislav and Vesna Kundic as well as the support from the grant DE-SC0009988 from the U.S. Department of Energy. PK acknowledges the support by DOE grant DE-SC0009988 and the Adler Family Fund at the Institute for Advanced Study. Some computations in this work have been run on the Helios cluster at the Institute for Advanced Study.

\section{How to bootstrap hyperbolic orbifolds}\label{sec:bootstrapMethod}
This section is meant to serve as a pedagogical overview of our method for deriving bounds on the spectral data of hyperbolic orbifolds. We have tried to make the exposition accessible to both mathematicians and physicists. Several technical discussions and proofs are deferred to Section~\ref{sec:proofs}.

\subsection{The space $L^2(\Gamma\backslash G)$}\label{ssec:L2Space}
We will begin by reviewing some background material about $G=\PSL(2,\mathbb{R})$. Recall that we can think of the upper half-plane as $\mathbb{H}=G/K$, where $K = \SO(2,\mathbb{R})/\{\pm I\}$ is a maximal compact subgroup of $G$, which is the stabilizer of $z = i\in \mathbb{H}$. Let us parametrize elements of $K$ as
\be
r_{\theta} = \pm\begin{pmatrix}
\cos\tfrac{\theta}{2} & -\sin\tfrac{\theta}{2}\vspace{0.02in}\\
\sin\tfrac{\theta}{2} &\cos\tfrac{\theta}{2}
\end{pmatrix}\in K\,,
\ee
where $\theta\in\mathbb{R}/2\pi\mathbb{Z}$ and $\pm$ is included because we work with $\PSL(2,\mathbb{R})$ rather than $\SL(2,\mathbb{R})$. This parametrization can be extended to all of $G$ using the Iwasawa decomposition $G = NAK$, which allows us to write every element $g\in G$ in a unique way as
\be
g(x,y,\theta) =
\pm\begin{pmatrix}
1 & x\\
0 &1
\end{pmatrix}
\begin{pmatrix}
\sqrt{y} & 0\\
0 &\frac{1}{\sqrt{y}}
\end{pmatrix}
\begin{pmatrix}
\cos\tfrac{\theta}{2} & -\sin\tfrac{\theta}{2}\vspace{0.02in}\\
\sin\tfrac{\theta}{2} &\cos\tfrac{\theta}{2}
\end{pmatrix}\,,
\label{eq:NAK}
\ee
where $x\in\mathbb{R}$, $y\in\mathbb{R}_{>0}$ are the usual coordinates in $\mathbb{H}$. In other words, we have $G = \mathbb{H}\times S^1$ as a smooth manifold. Note that $g(x,y,\theta)\cdot i = x+i y$. As stated in the Introduction, we will think of a general hyperbolic orbifold as the double quotient $X = \Gamma\backslash G/K$, where $\Gamma$ is a discrete cocompact subgroup of $G$.

Recall that $G$ admits a measure $\mu$ invariant under left and right multiplication, which we will normalize so that $\mu(\Gamma\backslash G) = 1$. This is possible because $\Gamma\backslash\mathbb{H}$ is compact, and hence of finite volume. In the above coordinates, we have
\be
d\mu(g) = \frac{1}{2\pi\vol(\Gamma\backslash \mathbb{H})}\frac{dx\,dy\,d\theta}{y^2}\,.
\ee

In order to study the spectral problem for the Laplacian on $X$, it is convenient to consider the quotient space $\Gamma\backslash G$, depicted in Figure~\ref{fig:principalBundle}. This space is a smooth 3-manifold which is a principal $K$-bundle with base $X$.  $\Gamma\backslash G$ identifies canonically with the unit tangent bundle of the underlying hyperbolic surface. Its advantage is that unlike $X$, $\Gamma\backslash G$ admits an action of $G$ by right multiplication. This allows one to express the spectral problem for the Laplacian on $X$ using representation theory of $G$.

\begin{figure}[h]
\centering
\includegraphics[width=.45\textwidth]{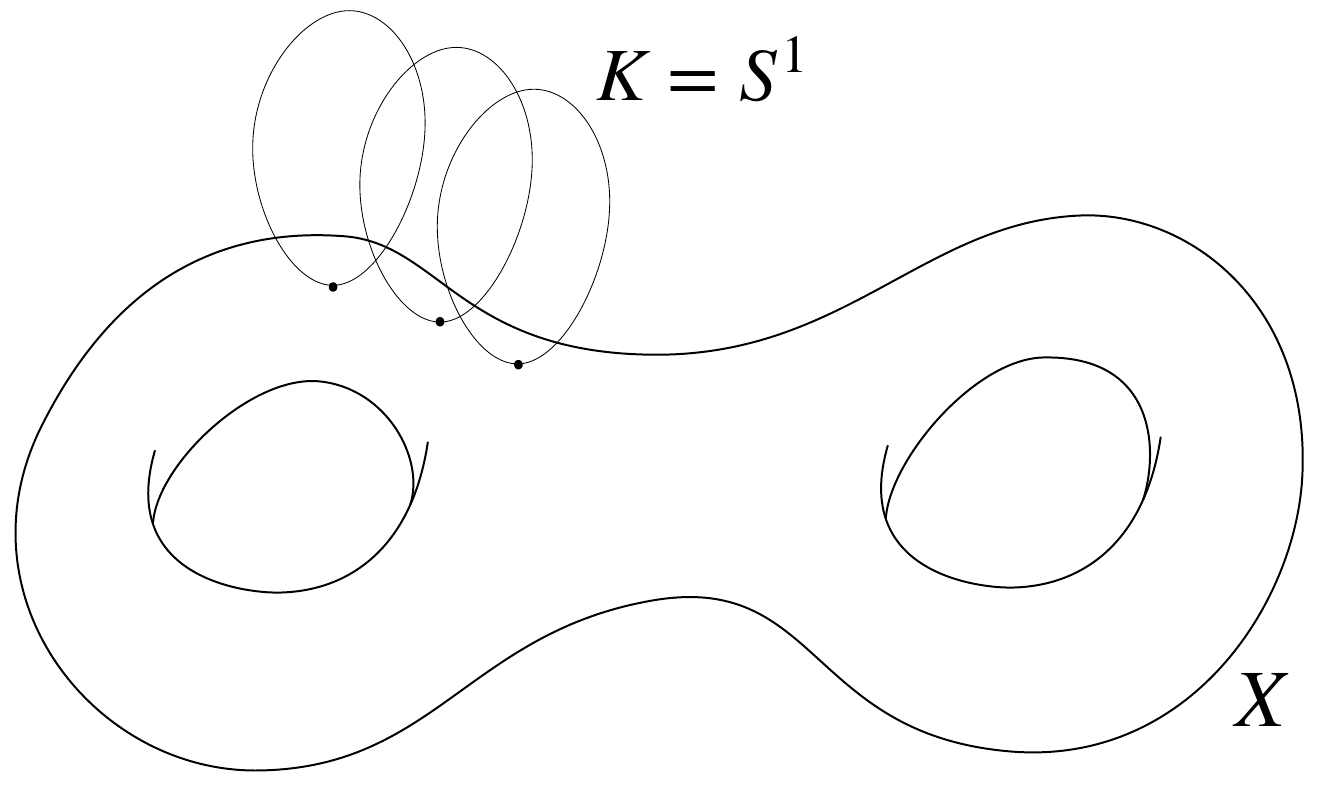}
\caption{A schematic representation of the space $\Gamma\backslash G$ as a principal $K$-bundle over base $X$.  Here $K$ is the circle group and $X=\Gamma\backslash G/K$ is a hyperbolic orbifold.}
\label{fig:principalBundle}
\end{figure}

Let us consider the Hilbert space $L^2(\Gamma\backslash G)$. It is the space of functions $F:G\rightarrow \mathbb{C}$, satisfying $F(\gamma g) = F(g)$ for all $\gamma\in\Gamma$ and all $g\in G$, with finite norm
\be
\|F\|^2 = \!\int\limits_{\Gamma \backslash G}\!\!d\mu(g) |F(g)|^2\,.
\ee
The norm gives rise to the following inner product\footnote{Note that we define $(\cdot,\cdot)$ to be linear in the second argument, in agreement with the usual convention in physics literature.}
\be
(F_1,F_2) = \!\int\limits_{\Gamma \backslash G}\!\!d\mu(g) \overline{F_1(g)}F_2(g)\,.
\label{eq:innerProduct}
\ee
Since $\mu$ is invariant under right multiplication, this inner product is also invariant. In other words, $L^2(\Gamma\backslash G)$ is a unitary representation of $G$ under the action
\be
\tilde{g}\in G: F(g) \mapsto F(g\tilde{g})\,.
\ee
In particular, $L^2(\Gamma\backslash G)$ is a unitary representation of the maximal compact subgroup $K\subset G$. It is instructive to decompose $L^2(\Gamma\backslash G)$ as a direct sum according to the $K$-action
\be
L^2(\Gamma\backslash G) = \bigoplus\limits_{n\in\mathbb{Z}} V_n\,.
\ee
Here $V_n$ consists of elements which transform as $r_{\theta}\cdot F = e^{- i n\theta} F$ under the action of $r_{\theta}\in K$. In particular, elements of $V_0$ are invariant under $K$ and therefore $V_0 = L^2(X)$. More generally, $V_n$ is the space of square-integrable sections of the $n$th power of the canonical line bundle over $X$. Indeed, let $F\in V_n$. Then we can write $F(x,y,\theta) = y^{|n|}e^{-i n\theta}h(x,y)$ where $h$ is a function on $\mathbb{H}$. Invariance of $F$ under left multiplication by $\Gamma$ implies that $h$ transforms under $\Gamma$ like a holomorphic modular form of weight $2n$ if $n\geq 0$ and like an anti-holomorphic modular form of weight $-2n$ when $n\leq 0$
\be
\forall \gamma = \pm\begin{pmatrix}a & b\\c & d\end{pmatrix}\in\Gamma:\qquad h(z) =
\left\{\,
 \begin{array}{@{}ll@{}}
(cz+d)^{-2n}\,h(\gamma\cdot z)\quad &\text{if }n\geq 0\vspace{2pt}\\
(c\bar{z}+d)^{2n}\,h(\gamma\cdot z)\quad &\text{if }n\leq 0.
\end{array}
\right.
\label{eq:formTransformation}
\ee
However, $h$ is not necessarily holomorphic or antiholomorphic.\footnote{By a small abuse of notation, we will use $h(x,y)$ and $h(z)$ interchangeably. $h(z)$ will not be holomorphic unless stated otherwise.} Another way to state \eqref{eq:formTransformation} is that $h(z)dz^n$, $h(z)d\bar{z}^{|n|}$ is $\Gamma$-invariant if $n\geq 0$, $n\leq 0$ respectively. Hence $h(z)dz^n$ is a section of the $n$th power of the canonical line bundle over $X$.

Let $L_{-1},L_{0},L_{1}$ be the basis for the complexified Lie algebra of $G$, introduced in Section~\ref{ssec:lieAlgebra}. Then if $F\in V_n$ is a smooth function, the action of $L_m$ on $F$ is defined, and we have $L_{0}\cdot F=n\,F$ and $L_{\pm 1}\cdot F \in V_{n\mp 1}$. In summary, $L^2(\Gamma\backslash G)$ nicely unifies spaces of sections of all powers of the canonical line bundle over $X$ into a single space, which, unlike the individual direct summands $V_n$, is a representation of $G$.

\subsection{The spectral decomposition}\label{ssec:spectralDecomposition}
The next step is to decompose $L^2(\Gamma\backslash G)$ into unitary irreducible representations of $G$. It is a general results that thanks to the compactness of $\Gamma\backslash G$, this decomposition is a discrete direct sum \cite{GelfandGraev}. The unitary irreducible representations of $\PSL(2,\mathbb{R})$ were classified in \cite{Bargmann}. The complete list consists of
\begin{enumerate}
\item the trivial representation
\item holomorphic discrete series $\cD_n$ and anti-holomorphic discrete series $\bar\cD_n$ with $n\in\mathbb{Z}_{>0}$
\item principal series $\cP^{+}_{i\nu}$ with $\nu\in \R_{\geq 0}$
\item complementary series $\cC_s$ with $s\in (0,\thalf)$\,.
\end{enumerate}
We give a more detailed discussion of these representations in Section \ref{sec:reptheory}. Let us explain the interpretation of the four representation types when they appear as subspaces of $L^2(\Gamma\backslash G)$. Our exposition will follow \cite{Gelbart1975}.

\subsubsection*{The trivial representation} Firstly, the trivial representation appears exactly once in $L^2(\Gamma\backslash G)$ and corresponds to constant functions. Indeed, constant functions are square-integrable due to compactness of $\Gamma\backslash G$. No other functions can transform in the trivial representation since the action of $G$ on $\Gamma\backslash G$ is transitive.

\subsubsection*{The discrete series} An ingredient crucial for our analysis is that holomorphic discrete series representations inside $L^2(\Gamma\backslash G)$ correspond to holomorphic modular forms for $\Gamma$. To see that, first note that $\cD_n$ are lowest-weight representations, meaning that they contain a nonzero vector annihilated by the Lie algebra element $L_{1}$. Suppose then that $F \in L^2(\Gamma\backslash G)$ is a lowest-weight vector of representation $\cD_n$. Then $F\in V_n$ and we have $F(x,y,\theta) = y^{n}e^{-i n\theta}h(z)$ as in the previous subsection. Now, $L_1$ is represented on smooth functions in $L^2(\Gamma\backslash G)$ by the differential operator $L_1 = -e^{i\theta}\left[y(\partial_x + i\partial_y)+\partial_{\theta}\right]$. Therefore, the condition $L_1\cdot F = 0$ is equivalent to $(\partial_x + i\partial_y)h(z) = 0$, i.e.\ $h(z)$ is holomorphic. We conclude that $h(z)$ is a holomorphic modular form of weight $2n$ for the discrete group $\Gamma$.

Conversely, given such a modular form $h(z)$, we can construct a basis for a copy of $\cD_n$ inside $L^2(\Gamma\backslash G)$ by acting with powers of the raising operator $L_{-1} = e^{-i\theta}\left[y(\partial_x - i\partial_y)+\partial_{\theta}\right]$ on the lowest-weight vector $F(x,y,\theta) = y^{n}e^{-i n\theta}h(z)$. In other words, there is a one-to-one correspondence between normalized holomorphic modular forms of weight $2n$ for the group $\Gamma$ and irreducible subspaces of $L^2(\Gamma\backslash G)$ isomorphic to the discrete series representation $\cD_n$.\footnote{We must include the adjective normalized because constant multiples of $h$ all generate the same subspace of $L^2(\Gamma\backslash G)$.}

Note that the defining inner product \eqref{eq:innerProduct} of $L^2(\Gamma\backslash G)$, when restricted to lowest-weight vectors in $V_n$, becomes the Petersson inner product of weight-$2n$ modular forms
\be
(F_1,F_2) =
\frac{1}{\vol(\Gamma\backslash \mathbb{H})}\int\limits_{\Gamma\backslash\mathbb{H}}\!\!dxdy\,y^{2n-2}\,\overline{h_{1}(z)}h_2(z)\,.
\ee

Finally, if $F$ is a lowest-weight vector in $\cD_{n}\subset L^2(\Gamma\backslash G)$, then $\overline{F} = y^{n}e^{i n\theta}\overline{h(z)} \in V_{-n}$ is a highest-weight vector of an anti-holomorphic discrete series $\bar\cD_n$. This means $\cD_{n}$ always appears in $L^2(\Gamma\backslash G)$ together with its dual $\bar\cD_n$.

\subsubsection*{The principal and complementary series} Principal and complementary series representations inside $L^2(\Gamma\backslash G)$ correspond to Maass forms for $\Gamma$, i.e.\ eigenfunctions of the hyperbolic Laplacian on $X$. To see that, it is convenient to label these representations by their value of the quadratic Casimir. In the notation of Section~\ref{ssec:lieAlgebra}, we have $-c_2 = \lambda = \Delta(1-\Delta)$. For the principal series $\cP^{+}_{i\nu}$, we have $\Delta = \frac{1}{2}+i\nu$, and thus $\lambda \in [\frac{1}{4},\infty)$, while for the complementary series $\cC_s$, we have $\Delta = \frac{1}{2}+s$ and so $\lambda \in (0,\frac{1}{4})$. Now, each principal or complementary series representation contains a precisely one-dimensional subspace of vectors invariant under $K$. Let $F\in V_0\subset L^2(\Gamma\backslash G)$ be such vector, belonging to $\cP^{+}_{i\nu}$ or $\cC_s$. In particular, $F(x,y,\theta) = h(x,y)$, where $h(\gamma\cdot z) = h(z)$ for all $\gamma\in\Gamma$. The quadratic Casimir is represented on smooth functions in $L^2(\Gamma\backslash G)$ by the differential operator $c_2 =  \left[y^2(\partial^2_x+\partial^2_y) + 2y\partial_x\partial_{\theta}\right]$. It follows that	
\be
-y^2 (\partial_x^2+\partial_y^2)h(x,y) = \lambda\,h(x,y)\,,
\ee
i.e.\ $h$ is an eigenfunction of the Laplacian on $X$ with eigenvalue $\lambda$.

Conversely, given such an eigenfunction $h$, we can construct a basis for a copy of $\cP^{+}_{i\nu}$ or $\cC_s$ inside $L^2(\Gamma\backslash G)$ by acting with powers of the raising and lowering operators on $h$. In other words, there is a one-to-one correspondence between normalized Maass forms for the group $\Gamma$ and irreducible subspaces of $L^2(\Gamma\backslash G)$ isomorphic to the principal or complementary series representations.

Finally, we note that the defining inner product of $L^2(\Gamma\backslash G)$, when restricted to $V_0$, becomes the standard inner product on $L^2(X)$
\be
(F_1,F_2) =
\frac{1}{\vol(\Gamma\backslash \mathbb{H})}\int\limits_{\Gamma\backslash\mathbb{H}}\!\!dxdy\,y^{-2}\,\overline{h_{1}(z)}h_2(z)\,.
\ee

\subsubsection*{Summary}
We can summarize the above as follows. The decomposition of $L^2(\Gamma\backslash G)$ into irreducibles takes the form
\be\label{eq:SL2RdecompositionFine}
	L^2(\G\@ G)=\C\oplus\bigoplus_{n=1}^\oo (\sD_n\oplus\bar\sD_n)\oplus\bigoplus_{k=1}^\oo \sC_{\l_k}.
\ee
Here $\mathbb{C}$ is the trivial representation. $\sD_n$ and $\bar{\sD}_n$ are isomorphic respectively to $\mathbb{C}^{\ell_n}\otimes \cD_{n}$ and $\mathbb{C}^{\ell_n}\otimes \bar{\cD}_{n}$ for some $\ell_n\geq 0$. Here $\ell_n$ is the number of independent copies of $\cD_n$ inside $L^2(\G\@ G)$, or equivalently the number of linearly independent holomorphic modular forms of weight $2n$ for $\Gamma$. Finally, in the last term, $\lambda_1<\lambda_2<\ldots$ are the distinct positive eigenvalues of the Laplacian on $X$. Let $d_k\geq 1$ be the multiplicity of $\lambda_k$. Then
\be\label{eq:sNota}
\sC_{\l_k} \simeq
\left\{\,
 \begin{array}{@{}lll@{}}
\mathbb{C}^{d_k}\otimes \cC_s &\text{with }\lambda_k = \frac{1}{4}-s^2 &\text{ if }\lambda_k<\frac{1}{4}
\vspace{6pt}\\
\mathbb{C}^{d_k}\otimes \cP^{+}_{i\nu} &\text{with } \lambda_k = \frac{1}{4}+\nu^2&\text{ if }\lambda_k\geq\frac{1}{4}\,.
\end{array}
\right.
\ee

\subsubsection*{Multiplicity of the discrete series} It will be convenient to associate to each $\Gamma$ its \emph{signature} $[g;k_1,\ldots,k_r]$. Here $g\in \mathbb{Z}_{\geq 0}$ is the genus of $X=\Gamma\backslash\mathbb{H}$ and $k_1,\ldots,k_r \in \mathbb\{2,3,\ldots\}$ are the orders of its orbifold points. The multiplicities $\ell_{n}$ of the discrete series $\cD_n$ are entirely determined by the signature. Specifically, one can use the Riemann-Roch theorem to show (\cite{MilneModularForms}, Theorem 4.9)
\be
	\ell_n=(2n-1)(g-1)+\sum_{i=1}^r\left\lfloor n\frac{k_i-1}{k_i}\right\rfloor+\de_{n,1}.
\ee
In particular, the number of linearly independent holomorphic one-forms on $X$ is $\ell_1 = g$. We will be able to use this partial information about the decomposition \eqref{eq:SL2RdecompositionFine} as an input to constrain the remaining spectral data at fixed signature.

\subsubsection*{Remark about notation}
Here and in the following, we try to follow the convention where objects existing inside $L^2(\Gamma\backslash G)$ are denoted by the script font, e.g.\ $\sD_{n}$, $\sC_{\lambda}$ and $\sO(z)$. On the other hand, objects existing independently of their concrete realization in $L^2(\Gamma\backslash G)$ are denoted using the calligraphic font, e.g.\ $\cD_{n}$, $\cP^{+}_{i\nu}$ and $\cO(z)$.

\subsection{Product expansion}\label{ssec:productExpansion}
Given the direct sum of a random collection of unitary irreducible representations of $G$, there is generally no reason for it to arise as $L^2(\Gamma\backslash G)$ for some $\Gamma$. In fact, $L^2(\Gamma\backslash G)$ admits an additional structure whose existence severely restricts spectra which can possibly appear on the RHS of \eqref{eq:SL2RdecompositionFine}.

This extra structure is the pointwise product of smooth functions on $\Gamma\backslash G$. Let  $C^{\infty}(\Gamma\backslash G)$ stand for the space of smooth functions on $\Gamma\backslash G$. Thanks to compactness of $\Gamma\backslash G$, we have the inclusion $C^{\infty}(\Gamma\backslash G)\subset L^2(\Gamma\backslash G)$. The product of smooth functions is smooth, and thus we get the symmetric bilinear map
\be
C^{\infty}(\Gamma\backslash G)\times C^{\infty}(\Gamma\backslash G)\rightarrow C^{\infty}(\Gamma\backslash G):\quad (F_1,F_2)\mapsto F_1F_2\,.
\label{eq:product}
\ee
Crucially, this map commutes with the action of $G$ on $L^2(\Gamma\backslash G)$.

The product $F_1F_2$ can be expanded using the direct sum decomposition \eqref{eq:SL2RdecompositionFine}
\be
F_1F_2 = P_{\C}(F_1F_2) + \sum\limits_{n=0}^{\infty}[P_{\sD_n}(F_1F_2)+P_{\bar\sD_n}(F_1F_2)] + \sum\limits_{k=1}^{\infty}P_{\sC_{\l_k}}\!(F_1F_2)\,,
\label{eq:productExpansion}
\ee
where $P_{\C},\,P_{\sD_n},\,P_{\bar\sD_n},\,P_{\sC_{\l_k}}$ are the orthogonal projections onto the respective subspaces in the decomposition~\eqref{eq:SL2RdecompositionFine}. This expansion is analogous to the operator product expansion (OPE) in conformal field theory.

We will use the notation $H$ to stand for any of the subspaces $\C,\,\sD_n,\,\bar\sD_n,\,\sC_{\l_k}$. A point central to our analysis is that the dependence on $F_1$ and $F_2$ of the individual summands $P_{H}(F_1F_2)$ appearing on the RHS of \eqref{eq:productExpansion} is strongly constrained by $G$-invariance. Indeed, suppose $F_1$ ranges over $H_i^{\infty}$ and $F_2$ ranges over $H_j^{\infty}$, where $H_{i}^{\infty}$ stands for the space of smooth functions inside $H_i$. Then $G$-invariance alone fixes $P_{H_m}(F_1F_2)$ up to finitely many constants. This is because the space of $G$-invariant maps
\be
H_i^{\infty}\times H^{\infty}_{j} \rightarrow H^{\infty}_{m}
\ee
is finite-dimensional. For example, suppose $H_i \simeq \cD_{n_1}$, $H_j \simeq \cD_{n_2}$ and that $H_m$ is irreducible. Then the space of such maps is one-dimensional if $H_m\simeq \cD_{n_3}$ with $n_3\geq n_1+n_2$ and zero-dimensional otherwise. Similarly, suppose $H_i \simeq \cD_{n}$, $H_j \simeq \bar{\cD}_{n}$ and that $H_m$ is irreducible. Then the space of such maps is one-dimensional if $H_m$ is the trivial representation, principal or complementary series, and zero-dimensional otherwise. We will prove these claims in Section~\ref{sec:proofs}.\footnote{Strictly speaking, we only prove an upper bound on the dimension.}

We will see that the finitely-many constants in $P_{H_m}(F_1F_2)$ which are left undetermined by representation theory can be identified with triple product integrals between elements of $H_i$, $H_j$ and $\bar{H}_m$.

The bootstrap is the idea of considering a triple product $F_1F_2F_3\in L^2(\Gamma\backslash G)$ and using the expansion \eqref{eq:productExpansion} twice in succession and in two inequivalent ways
\be
((F_1F_2)F_3) = (F_1(F_2 F_3))\,.
\label{eq:associativity}
\ee
We can take $F_1$, $F_2$ and $F_3$ to range over $K$-finite vectors in $H_i$, $H_j$ and $H_k$. In this way, \eqref{eq:associativity} leads to an infinite set of identities satisfied by the spectral data. The identities are bilinear in the triple product integrals. Before we explain how to obtain such identities in practice, it will be useful to introduce the concepts of coherent states and function correlators.

\subsection{Coherent states}\label{ssec:coherentStates}
Our ultimate goal is to apply \eqref{eq:associativity} with $F_{1},F_{2},F_{3}$ belonging to discrete series representations. To that end, we will first introduce a parametrization of vectors in these representations in the form of coherent states. First, pick an orthonormal basis of lowest-weight vectors in each $\sD_{n}$, i.e.\ $\{F_{n,a} \in \sD_{n}\cap V_n:\,a=1,\ldots,\ell_n\}$, where $(F_{m,a},F_{n,b})=\delta_{m,n}\delta_{a,b}$. Note that $F_{n,a} \in C^{\infty}(\Gamma\backslash G)$. Then $\overline{F_{n,a}}$ is an orthonormal basis of highest-weight vectors in each $\bar{\sD}_{n}$. We will make the following
\begin{definition}[Coherent state]\label{def:coherent}
The coherent state $\sO_{n,a}(z)$ in the holomorphic discrete series representation $\sD_n$ is
$$
\sO_{n,a}(z) = \exp(z\,L_{-1})\cdot F_{n,a}\,,
$$
where $z\in\mathbb{C}$, $|z|<1$. Similarly, the coherent state $\widetilde{\sO}_{n,a}(z)$ in the anti-holomorphic discrete series representation $\bar{\sD}_n$ is
$$
\widetilde{\sO}_{n,a}(z) = z^{-2n} \exp(-z^{-1}L_{1})\cdot \overline{F_{n,a}}\,,
$$
where $z\in\mathbb{C}$, $|z|>1$. Finally, define
$$
\widetilde{\sO}_{n,a}(\infty) = \lim_{z\rightarrow\infty} z^{2n} \widetilde{\sO}_{n,a}(z) = \overline{F_{n,a}}\,.
$$
\end{definition}
Here $\exp(z\,L_{-1})$ and $\exp(-z^{-1}L_{1})$ are elements of the complexified group $G_{\mathbb{C}} = \PSL(2,\mathbb{C})$. We will show in section~\ref{ssec:coherentStatesRigorous} that their action on $F_{n,a}$ is well-defined. Note that $z$ here is distinct from the variable parametrizing $\mathbb{H}$. We will show in Section~\ref{ssec:coherentStatesRigorous} that $\sO_{n,a}(z)\in \sD_n^{\infty}$ for all $|z|<1$ and $\widetilde{\sO}_{n,a}(z)\in \bar{\sD}_n^{\infty}$ for all $|z|>1$. In particular, the coherent states represent smooth functions on $\Gamma\backslash G$. By expanding $\sO_{n,a}(z)$ around $z=0$ and $\widetilde{\sO}_{n,a}(z)$ around $z=\infty$, we can think of them as generating functions for all $K$-finite vectors in $\sD_n$ and $\bar{\sD}_n$.

The coherent states are directly analogous to local primary operators in a 1D CFT, with $z$ parametrizing one-dimensional complexified spacetime. Indeed, the Lie algebra acts on the coherent states just like the 1D conformal algebra acts on local operators
\be
\begin{aligned}
L_{-1}\cdot \sO_{n,a}(z) &= \partial_{z} \sO_{n,a}(z)\\
L_{0}\cdot \sO_{n,a}(z) &= (z\partial_{z}+n) \sO_{n,a}(z)\\
L_{1}\cdot \sO_{n,a}(z) &= (z^2\partial_{z}+2n z) \sO_{n,a}(z)
\end{aligned}
\qquad
\begin{aligned}
L_{-1}\cdot \widetilde{\sO}_{n,a}(z) &= \partial_{z} \widetilde{\sO}_{n,a}(z)\\
L_{0}\cdot \widetilde{\sO}_{n,a}(z) &= (z\partial_{z}+n) \widetilde{\sO}_{n,a}(z)\\
L_{1}\cdot \widetilde{\sO}_{n,a}(z) &= (z^2\partial_{z}+2n z) \widetilde{\sO}_{n,a}(z)\,,
\end{aligned}
\ee
see Section~\ref{ssec:coherentStatesRigorous} for a proof. To simplify notation, in the following, $\sO_n(z)$ will denote any one of $\sO_{n,a}(z)$, e.g.\ $\sO_n(z) = \sO_{n,1}(z)$, and similarly $\widetilde{\sO}_n(z)=\widetilde{\sO}_{n,1}(z)$. Our next goal is to describe the product expansion of coherent states.

\subsubsection*{Product expansion of $\sO\sO$}
We will start with the product $\sO_n(z_1)\sO_n(z_2)$. It is possible to show that $P_H(\sO_n(z_1)\sO_n(z_2))=0$ unless $H=\sD_p$ with $p$ even and $p\geq 2n$, see Lemma~\ref{lemma:DDselection}.

This result implies that the product $\sO_n(z_1)\sO_n(z_2)$ has the following expansion
\be
	\sO_n(z_1)\sO_n(z_2)=\sum_{p=2n\atop p\text{ even}}^{\oo}P_{\sD_p}(\sO_n(z_1)\sO_n(z_2)).
\ee
The restriction on parity of $p$ comes from symmetry of both sides under $z_1\leftrightarrow z_2$.

$P_{\sD_p}(\sO_n(z_1)\sO_n(z_2))$ is uniquely determined by $G$-invariance up to finitely many structure constants $f_{p,a}$, with $a\in\{1,\ldots,\ell_p\}$. The full product expansion takes the form
\be
\sO_n(z_1)\sO_n(z_2) =
\sum_{p=2n\atop p\text{ even}}^{\oo}\sum_{a=1}^{\ell_p}\sum\limits_{m=0}^{\infty}
f_{p,a}\,(z_1-z_2)^{p+2m-2n}
\frac{(p)_m}{m!(2p)_m}
L_{-1}^m\cdot\sO_{p,a}(z_2)\,.
\label{eq:productOO}
\ee
Here $(p)_m = p(p+1)\ldots(p+m-1)$ is the Pochhammer symbol. The structure constants $f_{p,a}$ will be identified with triple product integrals $\int\!d\mu \,F_n F_n \overline{F_{p,a}}$. This is explained in section~\ref{sec:correlators}.

\subsubsection*{Product expansion of $\sO\widetilde{\sO}$}
Similarly, we will show in the proof of Lemma~\ref{lemma:DDbarselection} that
\be
\sO_n(z_1)\widetilde{\sO}_n(z_2) = P_{\mathbb{C}}(\sO_n(z_1)\widetilde{\sO}_n(z_2))+
\sum_{k=1}^{\infty}P_{\sC_k}(\sO_n(z_1)\widetilde{\sO}_n(z_2))\,.
\label{eq:productOOb}
\ee
It follows from $G$-invariance that
\be
P_{\mathbb{C}}(\sO_n(z_1)\widetilde{\sO}_n(z_2)) = \frac{1}{(z_1-z_2)^{2n}}\,,
\ee
where the RHS is interpreted as a constant function on $\Gamma\backslash G$. Note that it is always finite since $|z_1|<1$ and $|z_2|>1$. The overall normalization of the RHS is a consequence of $\|F_n\|^2 = 1$.

In order to describe the projection $P_{\sC_k}(\sO_n(z_1)\widetilde{\sO}_n(z_2))$, we will introduce in section~\ref{sec:continuouscoherent} continuous series coherent states $\mathbb{O}_{k,a}(z)$ for $|z|=1$. They have the property that $\widehat{F}_{k,a}\equiv N_k\int_{|z|=1}|dz|\mathbb{O}_{k,a}(z)$ form a real orthonormal basis of Maass forms. Here $N_k$ are normalization constants.
In terms of these states, the expansion takes the form
\be\label{eq:sect2oob}
	\sO_n(z_1)&\widetilde{\sO}_n(z_2)=\nn\\
	&\frac{1}{(z_1-z_2)^{2n}}
	+\sum_{k=1}^{\infty}\sum_{a=1}^{d_k}\int_{|z|=1}|dz|\frac{z_2^{-2n}c_{k,a}N_k}{(1-z_1z_2^{-1})^{2n-\De_k}(1-z_1z^{-1})^{\De_k}(1-z_2^{-1}z)^{\De_k}}\mathbb{O}_{k,a}(z).
\ee

We can describe the structure constants $c_{k,a}$ alternatively as follows. In the $z_1=0$, $z_2=\infty$ limit, which projects onto $V_0$ the above equation yields
\be
\sO_n(0)\widetilde{\sO}_n(\infty) = F_{n}\overline{F_{n}} =
1 + \sum_{k=1}^{\oo}\sum_{a=1}^{d_k}c_{k,a} \widehat{F}_{k,a}\,,
\label{eq:productOObSimp}
\ee 
which is now an equation between objects on $X$.

\subsection{Correlators}
\label{sec:correlators}
It will be convenient to introduce the following terminology and notation
\begin{definition}[Correlator]
Let $F_1,\ldots,F_n\in C^{\infty}(\Gamma\backslash G)$. The $n$-function correlator $\<F_1\cdots F_n\>$ is the integral
\be
\<F_1\cdots F_n\> = \int\limits_{\G\@G}\! d\mu(g) F_1(g)\cdots F_n(g).
\ee
\end{definition}
Note that $F\mapsto \int_{\Gamma\backslash G}d\mu(g) F(g)$ is a $G$-invariant map from $L^2(\Gamma\backslash G)$ to the trivial representation. Therefore, it must be a constant multiple of the projection $P_{\mathbb{C}}$. Since we had normalized $\mu(\Gamma\backslash G) = 1$, this constant is one and we have $\<F_1\cdots F_n\> = P_{\mathbb{C}}(F_1\cdots F_n)$. Note that the 2-function correlators are related to the inner product on $L^2(\Gamma\backslash G)$ as follows $(F_1,F_2)=\<\overline{F_1}F_2\>$.

Let us discuss correlators of coherent states $\sO_{n,a}(z)$ and $\widetilde{\sO}_{n,a}(z)$. It follows directly from $G$-invariance that the one-function correlators vanish
\be
\<\sO_{n,a}(z)\> = \<\widetilde{\sO}_{n,a}(z)\> = 0\,.
\ee
By the results of the previous subsection, the form of the two-function correlators must be
\be
\<\sO_{m,a}(z_1)\sO_{n,b}(z_2)\> = 0\,, \qquad
\<\sO_{m,a}(z_1)\widetilde{\sO}_{n,b}(z_2)\> = \frac{\delta_{m,n}\delta_{a,b}}{(z_1-z_2)^{2n}}\,.
\ee
The three-function correlators $\<\sO_n(z_1)\sO_n(z_2)\widetilde{\sO}_{p,a}(z_3)\>$ are uniquely determined by $G$-invariance up to an overall constant $f_{p,a}$
\be
\<\sO_n(z_1)\sO_n(z_2)\widetilde{\sO}_{p,a}(z_3)\> = f_{p,a}\,\frac{z_{12}^{p-2n}}{z_{13}^{p}z_{23}^{p}}\,,
\label{eq:threePointFunction}
\ee
where we introduced the notation $z_{ij} = z_{i}-z_{j}$. Coefficients $f_{p,a}$ appearing here are the same as the structure constants appearing in the product expansion \eqref{eq:productOO}. In fact, formulas \eqref{eq:productOO} and \eqref{eq:threePointFunction} are completely equivalent, as can be checked by multiplying both sides of \eqref{eq:productOO} by $\widetilde{\sO}_{p,a}(z_3)$ and evaluating the correlator.

We see that as promised, the structure constants $f_{p,a}$ are related to triple product integrals of modular forms. In particular, for $p=2n$, \eqref{eq:threePointFunction} implies
\be
f_{2n,a} = \<\sO_n(0)\sO_n(0)\widetilde{\sO}_{2n,a}(\infty)\> = \int\limits_{\G\@G}\!\! d\mu\,F_{n,1}F_{n,1}\overline{F_{2n,a}}\,.
\label{eq:f2na}
\ee
It is easy to rewrite this as an integral over $X = \Gamma\backslash\mathbb{H}$. Let $F_{n,a}(x,y,\theta) = y^n e^{-i n\theta} h_{n,a}(z)$, so that $h_{n,a}$ with $a\in\{1,\ldots,\ell_n\}$ is an orthonormal basis for the space of holomorphic modular forms of weight $2n$. Then \eqref{eq:f2na} becomes
\be
f_{2n,a} =  \frac{1}{\vol(\Gamma\backslash \mathbb{H})}\int\limits_{\Gamma\backslash\mathbb{H}}\!\!dxdy\,y^{4n-2}\,h_{n,1}h_{n,1}\overline{h_{2n,a}}\,.
\ee
Alternatively, $f_{2n,a}$ are the coefficients of the expansion of the product $h_{n,1}h_{n,1}$ in terms of $h_{2n,a}$.

Coefficients $f_{p,a}$ with $p>2n$ are proportional to the inner product of the Rankin-Cohen bracket~\cite{Zagier} $[h_{n,1},h_{n,1}]_{p-2n}$ with $h_{p,a}$. Equivalently, they express the Rankin-Cohen bracket $[h_{n,1},h_{n,1}]_{p-2n}$ as a linear combination of $h_{p,a}$.

Similarly, the structure constants $c_{k,a}$ appearing in the product expansion $\sO_n(0)\widetilde{\sO}_n(\infty)$, equation~\eqref{eq:productOObSimp}, are related to the three-function correlators
\be
c_{k,a} = \<\sO_n(0)\widetilde{\sO}_n(\infty) \widehat{F}_{k,a}\> =
 \int\limits_{\G\@G}\!\! d\mu\,F_{n,1}\overline{F_{n,1}}\widehat{F}_{k,a}\,.
\ee
If we let $\widehat{F}_{k,a}(x,y,\theta)=\widehat{h}_{k,a}(x,y)$, where $\widehat{h}_{k,a}$ is an orthonormal basis of real Maass forms, we get
\be
c_{k,a} =
\frac{1}{\vol(\Gamma\backslash \mathbb{H})}\int\limits_{\Gamma\backslash\mathbb{H}}\!\!dxdy\,y^{2n-2}\,h_{n,1}\overline{h_{n,1}}\widehat{h}_{k,a}\,.
\ee
In particular, $\overline{c_{k,a}} = c_{k,a}$.

\subsection{Four-function correlator and the crossing equation}
We are now ready to discuss the main object of our study, namely the four-function correlator $\<\sO_n(z_1)\sO_n(z_2)\widetilde\sO_n(z_3)\widetilde\sO_n(z_4)\>$. It is determined by $G$-invariance in terms of a function of a single variable $g(z)$
\be\label{eq:simplefourpt}
	\<\sO_n(z_1)\sO_n(z_2)\widetilde\sO_n(z_3)\widetilde\sO_n(z_4)\>=
		\frac{1}{z_{12}^{2n}z_{34}^{2n}}
		g(z),
\ee
where
\be
z = \frac{z_{12}z_{34}}{z_{13}z_{24}}
\ee
is the cross-ratio of the four points. The central idea used in this paper is that $g(z)$ can be evaluated using pairwise product expansion in two inequivalent ways. Either we use the product expansion of $\sO_n(z_1)\sO_n(z_2)$, or the product expansion of $\sO_n(z_2)\widetilde\sO_n(z_3)$. In the first case, we get $g(z)$ in the form of an infinite sum over the spectrum of holomorphic modular forms. In the second case, the sum is over the spectrum of Maass forms. Equality of the two expansions can be thought of as a special case of associativity of the product expansion \eqref{eq:associativity}.

In section~\ref{sec:proofs} we will prove the following

\begin{theorem}\label{theorem:main}
	Let $g(z)$ be as in~\eqref{eq:simplefourpt}. Then
	\begin{enumerate}
		\item We have the following expansion, which we refer to as the $s$-channel expansion
		\be\label{eq:schannel}
			g(z)=\sum_{p=2n\atop p\text{ even}}^{\oo}\sum_{a=1}^{\ell_p}|f_{p,a}|^2 \cG_p(z),
		\ee
		where the structure constants $f_{p,a}\in \C$ are as above, and where the sum and its derivatives converge uniformly on compact subsets of $z\in\C\setminus[1,+\oo)$, and
		\be\label{eq:sblock}
			\cG_\De(z)\equiv z^\De {}_2F_1(\De,\De;2\De;z),
		\ee
		where ${}_2F_1$ is the Gauss hypergeometric function.
		\item We have the following expansion, which we refer to as the $t$-channel expansion
		\be\label{eq:tchannel}
		g(z)=\p{\frac{z}{1-z}}^{2n}\p{1+\sum_{k=1}^{\oo}\sum_{a=1}^{d_k}c_{k,a}^2 \cH_{\De_k}(z)},
		\ee
		where the structure constants $c_{k,a}\in \R$ are as above, $\Delta_k=1/2+i\sqrt{\lambda_k-1/4}$, where the sum and its derivatives converge uniformly on compact subsets of $z\in\C\setminus[1,+\oo)$,  and
		\be\label{eq:tblock}
		\cH_\De(z)\equiv {}_2F_1(\De,1-\De;1;\tfrac{z}{z-1}).
		\ee
	\end{enumerate}
\end{theorem}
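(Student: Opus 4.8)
The plan is to evaluate $g(z)$ in two inequivalent ways, corresponding to applying the product expansion \eqref{eq:productExpansion} to the four-function correlator \eqref{eq:simplefourpt} in the two channels, and to recognise each resulting term as a structure constant squared times a universal block fixed by $G$-invariance. Two facts will be used repeatedly: the correlator is $P_\C$ of the fourfold product, so it pairs the two halves of the product as an inner product on $L^2(\G\@ G)$; and a raising operator acts geometrically, $L_{-1}\cdot\sO_{p,a}(z)=\partial_z\sO_{p,a}(z)$, so descendants become $z$-derivatives inside correlators.

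For part (1) I would insert the $s$-channel expansion \eqref{eq:productOO} of $\sO_n(z_1)\sO_n(z_2)$ into the correlator and pair it against $\bar\sO_n(z_3)\bar\sO_n(z_4)$. By the selection rule of Lemma~\ref{lemma:DDselection} only $\sD_p$ with $p$ even and $p\geq2n$ appears, and for each such exchange the remaining three-function correlator $\<\sO_{p,a}(z_2)\bar\sO_n(z_3)\bar\sO_n(z_4)\>$ is fixed by $G$-invariance up to the conjugate constant $\bar f_{p,a}$ (its complex conjugate, by Hermiticity of the pairing), in the form $\bar f_{p,a}\,z_{34}^{p-2n}z_{23}^{-p}z_{24}^{-p}$. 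Replacing $L_{-1}^m$ by $\partial_{z_2}^m$, the entire $(z_1,\dots,z_4)$-dependence of the $\sD_p$ contribution collapses, after multiplying by $z_{12}^{2n}z_{34}^{2n}$ and passing to the cross-ratio, onto $|f_{p,a}|^2$ times a function of $z$ alone; carrying out the $m$-sum with the weights $\tfrac{(p)_m}{m!(2p)_m}$ of \eqref{eq:productOO} against the $(p)_m$ produced by the derivatives yields $\sum_m \tfrac{(p)_m(p)_m}{m!(2p)_m}z^{p+m}=z^p\,{}_2F_1(p,p;2p;z)=\cG_p(z)$. A more robust variant, avoiding the explicit resummation, is to note that the single-$\sD_p$ contribution is an eigenfunction of the (cross-ratio) quadratic Casimir operator with the eigenvalue $\De(1-\De)$ of the exchanged representation ($\De=p$); this is a hypergeometric ODE whose unique solution behaving as $z^p$ at the origin is $\cG_p$.

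For part (2) I would instead expand the adjacent pair $\sO_n(z_2)\bar\sO_n(z_3)$ using \eqref{eq:productOOb}--\eqref{eq:sect2oob}, where by Lemma~\ref{lemma:DDbarselection} only the trivial and continuous-series representations occur. The trivial term $z_{23}^{-2n}$ paired with $\<\sO_n(z_1)\bar\sO_n(z_4)\>=z_{14}^{-2n}$ gives, after the prefactor and the Pl\"ucker identity $z_{13}z_{24}-z_{12}z_{34}=z_{14}z_{23}$, exactly the factor $\p{\tfrac{z}{1-z}}^{2n}$ and the leading $1$. Each $\sC_{\l_k}$ then contributes $\p{\tfrac{z}{1-z}}^{2n}c_{k,a}^2$ times a block which, by $G$-invariance, solves the $t$-channel Casimir equation with eigenvalue $\l_k=\De_k(1-\De_k)$; solving this ODE with the boundary behaviour dictated by the $\sO\bar\sO$ expansion singles out ${}_2F_1(\De_k,1-\De_k;1;\tfrac{z}{z-1})=\cH_{\De_k}(z)$. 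The manifest symmetry of $\cH_\De$ under $\De\leftrightarrow1-\De$ matches the fact that $\De_k$ and $1-\De_k$ label the same representation, and $c_{k,a}\in\R$ was already shown below \eqref{eq:productOObSimp}. One can equivalently read the block off the explicit integrand of \eqref{eq:sect2oob} by performing the $|z|=1$ integral, but the Casimir shortcut is cleaner.

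The main obstacle is the analytic claim that both sums, together with all their $z$-derivatives, converge pointwise on $\C\setminus[1,+\oo)$. The cleanest route is to read $g(z)$ as the Hilbert-space pairing of the two smooth vectors $\sO_n(z_1)\sO_n(z_2)$ and $\bar\sO_n(z_3)\bar\sO_n(z_4)$ of $L^2(\G\@ G)$; the two channel expansions are then the Parseval expansions of this pairing with respect to the orthogonal decomposition \eqref{eq:SL2RdecompositionFine}, resolved either between the first and the last two factors (giving the $\sD_p$ sum) or between the middle and outer pairs (giving the $\sC_{\l_k}$ sum). Cauchy--Schwarz against the finite norms $\|P_H(\cdot)\|$ then yields absolute convergence of the per-representation contributions for each fixed admissible configuration $|z_1|,|z_2|<1<|z_3|,|z_4|$. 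The work is to upgrade this to convergence that is locally uniform in $z$ and stable under differentiation: I would bound the coherent-state norms uniformly on compact subsets of configuration space, combine this with the large-$p$ and large-$\De_k$ asymptotics of $\cG_p$ and $\cH_{\De_k}$ to obtain geometric decay on compacta of $\C\setminus[1,+\oo)$, and conclude term-by-term differentiability by Weierstrass' theorem, the series thereby defining the analytic continuation of $g$ to the full domain. The removed ray $[1,+\oo)$ is precisely the shared branch cut of the blocks---the unit-argument singularity of ${}_2F_1$---where the OPE radius of convergence is saturated.
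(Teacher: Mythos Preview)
Your proposal is correct and follows essentially the same route as the paper: expand the two relevant product pairs using Lemmas~\ref{lemma:DDselection}--\ref{lemma:DDOPE} and~\ref{lemma:DDbarselection}--\ref{lemma:opeDDbar}, interpret the correlator as a Hilbert-space inner product so that the per-representation terms appear with $|f_{p,a}|^2$ and $c_{k,a}^2$, and identify each block via the quadratic Casimir equation together with the appropriate boundary behaviour (holomorphicity at $z=0$ for $\cG_p$, regularity plus the normalisation $\cH_\De(0)=1$ for $\cH_\De$).

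The one place where the paper is cleaner is the convergence of derivatives. You propose to control this by combining uniform bounds on coherent-state norms with large-$p$ and large-$\De_k$ asymptotics of the blocks and then invoke Weierstrass. The paper avoids the block asymptotics entirely: since $\sO_n(z)$ is holomorphic into $\cD_n^\infty$ (Proposition~\ref{prop:coherent}) and the projectors $P_H$ are continuous, one has $P_H(\partial_{z_1}^a\partial_{z_2}^b\,\sO_n(z_1)\sO_n(z_2))=\partial_{z_1}^a\partial_{z_2}^b\,P_H(\sO_n(z_1)\sO_n(z_2))$, so the very same Parseval argument applied to the product of differentiated coherent states gives pointwise convergence of all $z$-derivatives directly. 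Your approach would also work, but it requires extra estimates that the paper's argument sidesteps.
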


This theorem is proved in section~\ref{sec:proofs} in the same way as anyone familiar with conformal field theory would imagine. We first constrain the terms appearing in the product expansions~\eqref{eq:productExpansion}, then compute the conformal blocks~\eqref{eq:sblock} and~\eqref{eq:tblock} using the Casimir equation. The key element, which is different from the usual conformal bootstrap story, is that $g(z)$ as well as the blocks appearing in the above expansions have to be analytic for $z\in\C\setminus[1,+\oo)$, owing to the analyticity properties of the coherent states $\sO_n(z)$: the product $\sO_n(z_1)\sO_n(z_2)$ is analytic for all $z_1$ in the unit disk, regardless of the presence of $z_2$.

Theorem~\ref{theorem:main} provides two expansions for the same quantity $g(z)$. Requiring the two to be consistent, we get the crossing equation
\be\label{eq:crossineq}
\sum_{p=2n\atop p\text{ even}}^{\oo}\sum_{a=1}^{\ell_p}|f_{p,a}|^2 \cG_p(z)=\p{\frac{z}{1-z}}^{2n}\p{1+\sum_{k=1}^{\oo}\sum_{a=1}^{d_k}c_{k,a}^2 \cH_{\De_k}(z)}.
\ee
This consistency condition and its generalizations can be analyzed using linear or semidefinite programming methods as is standard in the conformal bootstrap~\cite{Poland:2018epd}. We will explore some simple consequences of \eqref{eq:crossineq} in the following section, and perform a more extensive analysis in Section~\ref{sec:bounds}.

\subsection{Bootstrap}\label{ssec:bootstrap}
First, we rewrite~\eqref{eq:crossineq} in the following form,
\be\label{eq:crossineq21}
\sum_{p=2n\atop p\text{ even}}^{\oo}S_p\,\cG_p(z)=\p{\frac{z}{1-z}}^{2n}\p{1+\sum_{k=1}^{\oo}T_k\,\cH_{\De_k}(z)},
\ee
where $S_p=\sum_{a=1}^{\ell_p}|f_{p,a}|^2\geq 0$ and $T_k=\sum_{a=1}^{d_k}c_{k,a}^2\geq 0$.

The crossing equation~\eqref{eq:crossineq21} holds order by order in the expansion around $z=0$, starting at $z^{2n}$. Since $\cG_p(z) \sim z^{p}$ as $z\rightarrow 0$, we can use this expansion to write each $S_p$ as a sum over the Laplacian spectrum. Furthermore, note that each term on the LHS is manifestly symmetric under $z\leftrightarrow \frac{z}{z-1}$, which is just the symmetry of \eqref{eq:simplefourpt} under $z_1\leftrightarrow z_2$. On the other hand, the individual terms on the RHS of \eqref{eq:crossineq21} do not have this symmetry. It follows that by antisymmetrizing \eqref{eq:crossineq21} under $z\leftrightarrow \frac{z}{z-1}$, we obtain constraints on the Laplacian spectral data $\{(\lambda_k,T_k)\}$ which do not involve the holomorphic data $\{S_p\}$.

Both types of constraints can be written down efficiently starting from orthogonality of hypergeometric functions,\footnote{For $p=q$ this follows by evaluating the residue at $z=0$. For $p\neq q$ the result can be obtained by observing that $\cG_p(z)$ is an eigenfunction of the differential operator $z^2(1-z)\ptl_z^2-z^2\ptl_z$ with eigenvalue $p(p-1)$, and that this operator is symmetric with respect to the considered bilinear pairing.}
\be\label{eq:orthogonalityG}
\frac{1}{2\pi i}\oint dz z^{-2}\cG_{1-p}(z) \cG_{q}(z) = \delta_{p,q}\,.
\ee
The contour integral computes the residue at $z=0$. Applying this to \eqref{eq:crossineq2}, we find the spectral identities
\be\label{eq:crossingfinal}
-S_p+\cF_{p}^n(0)+\sum_{k=1}^\oo T_k \cF_{p}^n(\l_k)&=0,\qquad \text{for even }p\geq 2n,\\
\cF_{p}^n(0)+\sum_{k=1}^\oo T_k \cF_{p}^n(\l_k)&=0,\qquad\text{for odd }p> 2n\,,\label{eq:crossingfinal2}
\ee
where
\be
\cF_{p}^n(\l) = \frac{1}{2\pi i}\oint dz z^{-2}\cG_{1-p}(z)\left(\tfrac{z}{1-z}\right)^{2n}\cH_{1/2+i\sqrt{\lambda-1/4}}(z)\,.
\ee
The functions $\cF_p^n(\l)$ $(p\geq 2n)$ are in fact polynomials in $\l$ given by the equation~\eqref{eq:Fexpression} below. For example $\cF_{2n}^n(\l)=1$ and $\cF_{2n+1}^{n}(\l)=n-\l$, and the respective spectral identities read
\be
1+\sum\limits_{k=1}^{\infty}T_{k} = S_{2n}\,,\qquad n+\sum\limits_{k=1}^{\infty}(n-\lambda_k)T_{k} = 0\,.\label{eq:simplesumrules}
\ee
Since $T_k\geq 0$, the first identity immediately implies the lower bound on the triple overlaps $f_{2n}\geq 1$. It is also possible to form a linear combination of the $p=2n+1$ and $p=2n+3$ identities from which the contribution of $\lambda = 0$ drops out,
\be
\sum\limits_{k=1}^{\infty}\lambda_{k}[\lambda_k^2 -(9n+1)\lambda_k +12n^2]T_k = 0\,.
\label{eq:identity13}
\ee
The quadratic polynomial $\lambda^2 -(9n+1)\lambda +12n^2$ has two positive real roots for any $n>0$. It follows that \eqref{eq:identity13} can only be satisfied if $\lambda_1$ is at most the greater root\footnote{Strictly speaking, this conclusion only follows if $T_k\neq 0$ (and hence $T_k>0$) for at least one value of $k$. This follows from the second identity in~\eqref{eq:simplesumrules}}
\be
\lambda_1\leq \frac{\sqrt{33 n^2+18 n+1}+9 n+1}{2}\,.
\label{eq:boundToy}
\ee
This inequality is a valid bound for all $X$ which admit a non-vanishing modular form of weight $2n$. It is possible to show that every hyperbolic 2-orbifold admits a modular form of weight at most 12. That means that we get a bound valid for all hyperbolic 2-orbifolds by substituting $n=6$ in \eqref{eq:boundToy}
\be
\lambda_1\leq \frac{\sqrt{1297}+55}{2}\approx 45.50694\,.
\ee
This value is not far from $\lambda_1$ of the orbifold with the smallest area, i.e.\ the $[0;2,3,7]$ orbifold, which has $\lambda_1\approx 44.88835$. The simple bound \eqref{eq:boundToy} as well as the lower bound $f_{2n}\geq 1$ can be systematically improved by exploiting the spectral identities \eqref{eq:crossingfinal} for $p\in \{2n,\ldots,p_{\text{max}}\}$. This will be explained in more detail in Section~\ref{sec:derivation}. The resulting bounds appear to converge rapidly with increasing $p_{\text{max}}$. Remarkably, both are nearly saturated by the $[0;2,3,7]$ orbifold.

\section{Representation theory and product expansions}\label{sec:proofs}
The main goal of this section is to supply a proof of Theorem~\ref{theorem:main}, as well as proofs of several further claims cited in Section~\ref{sec:bootstrapMethod}. Our proofs are facilitated by working with the standard realizations of unitary irreducible representations of $\PSL(2,\mathbb{R})$ in spaces of functions of a single variable, reviewed in Section~\ref{sec:reptheory}.

Many of the results presented in this section are well-known to experts on conformal field theory, although perhaps in a less rigorous form. We include them here for completeness and with the hope that they will be of independent interest.

\subsection{Lie algebra}\label{ssec:lieAlgebra}
Let us review some useful material about the Lie algebra $\mathfrak{g}=\mathfrak{sl}_2(\mathbb{R})$. Choose the following basis for $\mathfrak{g}$
\be
J_{-1} =
\begin{pmatrix}
0 & 1\\ 0 & 0
\end{pmatrix}\qquad
J_{0} =
\frac{1}{2}\begin{pmatrix}
1 & 0\\ 0 & -1
\end{pmatrix}\qquad
J_{1} =
\begin{pmatrix}
0 & 0\\ -1 & 0
\end{pmatrix}\,.
\ee
In unitary representations of $G$, these basis elements are represented by anti-self-adjoint operators $(J_{n})^{\dagger} = -J_{n}$. 
It will also be convenient to introduce a different basis for the complexification $\mathfrak{g}_{\mathbb{C}}$
\be
L_{-1}= \frac{1}{2}(J_{-1}-J_{1})-iJ_{0}\,,\qquad
L_0 = -\frac{i}{2}(J_{-1}+J_{1})\,,\qquad
L_{1}= -\frac{1}{2}(J_{-1}-J_{1})-iJ_{0}\,.
\ee
The point of this basis is that it diagonalizes the adjoint action of $K$ on $\mathfrak{g}_{\mathbb{C}}$. Indeed, $iL_0$ is the generator of $K$
\be
r_{\theta} = e^{-i \theta L_{0}}=
\pm\begin{pmatrix}
\cos\tfrac{\theta}{2} & -\sin\tfrac{\theta}{2}\vspace{0.02in}\\
\sin\tfrac{\theta}{2} &\cos\tfrac{\theta}{2}
\end{pmatrix}\in K
\label{eq:rotation}
\ee
and we have the commutation relations
\be
[L_{m},L_{n}] = (m-n) L_{m+n}\,.
\ee
In unitary representations, these basis elements satisfy $(L_{n})^{\dagger} = L_{-n}$. Finally, we will also need the quadratic Casimir element of $Z(\cU(\mathfrak{g}))$, which takes the form
\be\label{eq:casimir}
c_2 = L_0^2 - \frac{L_{-1}L_{1}+L_{1}L_{-1}}{2}= J_0^2 - \frac{J_{-1}J_{1}+J_{1}J_{-1}}{2}\,.
\ee

Consider now the unitary representation of $G$ on $L^2(\Gamma\backslash G)$, described in Section~\ref{ssec:L2Space}. In this representation, elements of $\mathfrak{g}_{\mathbb{C}}$ act on smooth functions in $L^2(\Gamma\backslash G)$ by differential operators. In particular, the above basis is represented by
\be
\begin{aligned}
L_{-1}\cdot F(x,y,\theta) &= e^{-i\theta}\left[y(\partial_x - i\partial_y)+\partial_{\theta}\right]\!F(x,y,\theta)\\
L_{0}\cdot F(x,y,\theta) &= i\partial_{\theta}F(x,y,\theta)\\
L_{1}\cdot F(x,y,\theta) &=-e^{i\theta}\left[y(\partial_x + i\partial_y)+\partial_{\theta}\right]\!F(x,y,\theta)\,.
\end{aligned}
\label{eq:LactionL2}
\ee
From these formulas, we find that the quadratic Casimir element acts as
\be
c_2\cdot F(x,y,\theta)  = \left[y^2(\partial^2_x+\partial^2_y) + 2y\partial_x\partial_{\theta}\right]\!F(x,y,\theta)\,.
\label{eq:C2L2}
\ee

\subsection{Unitary irreducible representations of $\PSL(2,\R)$}
\label{sec:reptheory}
Our next step is to review the classification of unitary irreducible representations of $\PSL(2,\R)$, in a form that is most useful for our purpose. We start with its double cover $\SL(2,\R)$. 
\begin{theorem}[See, e.g.~\cite{Knapp}]\label{thm:SL2Rirreps}
	Up to equivalence, the only unitary irreducible representations of $\SL(2,\R)$ are
\begin{enumerate}
	\item the trivial representation,
	\item the holomorphic discrete series $\cD_n$ and anti-holomorphic discrete series $\bar\cD_n$, $n\geq 1,\, 2n\in \Z$,
	\item the principal series $\cP^\pm_{i\nu}$ for $\nu\in \R$, except $\cP^-_0$,
	\item the complementary series $\cC_s$ for $s\in (0,\thalf)$,
	\item the limits of discrete series $\cD_{\half}$, $\bar\cD_{\half}$.
\end{enumerate}
	The only equivalences between these representations are $\cP^\pm_{i\nu}\simeq \cP^\pm_{-i\nu}$. (And the decomposition of $\cP^-_0$ is $\cP^-_0\simeq \cD_{\half}\oplus\bar\cD_{\half}$.)
\end{theorem}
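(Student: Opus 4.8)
The plan is to reduce the classification to a purely algebraic problem — classifying the irreducible unitarizable $(\mathfrak{g},K)$-modules — and then to solve that problem by hand using the $\mathfrak{sl}_2$ commutation relations together with the positivity forced by $(L_n)^\dagger = L_{-n}$. First I would pass from a unitary irreducible representation $(\pi,\mathcal{H})$ to its dense subspace $\mathcal{H}_K$ of $K$-finite vectors, invoking two standard facts (see \cite{Knapp}) that I would cite rather than reprove: that $\mathcal{H}_K$ is an irreducible admissible $(\mathfrak{g},K)$-module, and that two unitary irreducibles are unitarily equivalent exactly when their $K$-finite modules are infinitesimally equivalent. This turns the statement into the classification of irreducible $(\mathfrak{g},K)$-modules admitting a $\mathfrak{g}$-invariant positive-definite Hermitian form.

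Next I would decompose under $K=\mathrm{SO}(2)$. Since $iL_0$ generates $K$ by \eqref{eq:rotation}, the module splits into $L_0$-eigenspaces $V=\bigoplus_\mu V_\mu$ with weights $\mu\in\tfrac{1}{2}\mathbb{Z}$ (integer weights descend to $\PSL(2,\R)$, half-integer ones are genuine to $\SL(2,\R)$). From $[L_0,L_{\pm 1}]=\mp L_{\pm 1}$ one gets $L_{\mp 1}\colon V_\mu\to V_{\mu\pm1}$, so the occurring weights lie in a single coset $\mu_0+\mathbb{Z}$ and form an interval. The key multiplicity-one property, special to $\mathfrak{sl}_2$, is that each nonzero $V_\mu$ is one-dimensional: the Casimir $C_2$ acts by a scalar $c$ (Schur), and rewriting it with $[L_{-1},L_1]=-2L_0$ gives
\be
L_1 L_{-1} = L_0^2 + L_0 - C_2, \qquad L_{-1} L_1 = L_0^2 - L_0 - C_2,
\ee
so that $L_1 L_{-1}$ and $L_{-1}L_1$ are scalar on each $V_\mu$; hence a single weight vector generates the whole module, and irreducibility forces $\dim V_\mu\le 1$.

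The heart of the argument is a positivity analysis. For a unit vector $v_\mu\in V_\mu$, using $(L_{-1})^\dagger=L_1$ and the identities above,
\be
\|L_{-1} v_\mu\|^2 = \mu^2 + \mu - c, \qquad \|L_1 v_\mu\|^2 = \mu^2 - \mu - c,
\ee
and unitarity forces both right-hand sides to be $\ge 0$ for every occurring $\mu$, a vanishing being exactly the condition that the string terminate. I would then case-split on the weight interval. If it is bounded below at $\mu=n$ then $n^2-n-c=0$, i.e.\ $c=n(n-1)$, and the factorization $\mu^2+\mu-c=(\mu-n+1)(\mu+n)$ shows positivity upward forces $n>0$; this is the holomorphic discrete series $\cD_n$, with the boundary value $n=\tfrac{1}{2}$ giving the limit $\cD_{1/2}$. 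The bounded-above case gives $\bar\cD_n$ symmetrically, and a two-sided bound produces a finite-dimensional module, unitary only in the trivial case. If the string is unbounded in both directions then both quadratics must be strictly positive for all $\mu$ in the coset; writing $\mu^2\pm\mu-c=(\mu\pm\tfrac{1}{2})^2-(c+\tfrac{1}{4})$ and minimizing over the integer coset yields the principal series ($c\le-\tfrac{1}{4}$, $\Delta=\tfrac{1}{2}+i\nu$) and the complementary series ($-\tfrac{1}{4}<c<0$, $\Delta=\tfrac{1}{2}+s$, $s\in(0,\tfrac{1}{2})$), the endpoint $c=0$ truncating to the trivial representation.

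Finally I would read off equivalences and exhaustiveness. Equal Casimir and equal weight coset give infinitesimal equivalence, hence $\cP^\pm_{i\nu}\simeq\cP^\pm_{-i\nu}$ (both have $c=-(\tfrac{1}{4}+\nu^2)$), while the coset itself (integer versus half-integer) distinguishes the $+$ and $-$ families. The delicate point is the reducibility of $\cP^-_0$: on the half-integer coset at $\nu=0$ one has $c=-\tfrac{1}{4}$, and the norms $\|L_1 v_{1/2}\|^2$ and $\|L_{-1}v_{-1/2}\|^2$ both vanish, so the string breaks at $\pm\tfrac{1}{2}$ into a lowest-weight and a highest-weight piece, giving $\cP^-_0\simeq\cD_{1/2}\oplus\bar\cD_{1/2}$. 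I expect the main obstacle to be precisely this boundary bookkeeping — verifying that the limits of discrete series are genuinely unitary and irreducible and that $\cP^-_0$ splits as claimed — together with the two non-elementary reduction inputs (admissibility and Harish-Chandra's infinitesimal-equivalence theorem); the positivity computation itself is routine once the operator identities are in place.
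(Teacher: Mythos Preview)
Your sketch is correct and follows the standard approach to this classification: pass to the $(\mathfrak{g},K)$-module, use multiplicity-one of the $K$-types together with the scalar action of the Casimir, and run the positivity analysis on $\|L_{\pm 1}v_\mu\|^2$ to determine which weight strings are unitarizable. The boundary cases (limits of discrete series at $n=\tfrac{1}{2}$, reducibility of $\cP^-_0$, the trivial representation at $c=0$) are handled correctly.

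However, there is nothing in the paper to compare against: the paper does \emph{not} prove this theorem. It is stated with the attribution ``See, e.g.~\cite{Knapp}'' and immediately used; the explicit constructions that follow in Section~\ref{sec:reptheory} are models of the representations, not a proof of the classification. So your proposal is not an alternative to the paper's argument---it is a sketch of the cited argument, and indeed the proof you outline is essentially the one found in Knapp and other standard references (Bargmann's original proof, Lang's $\SL_2(\R)$, etc.). The two non-elementary inputs you flag (admissibility of unitary irreducibles and Harish-Chandra's infinitesimal-equivalence theorem) are exactly the ones that make this a ``cite rather than prove'' result in a paper like this.
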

The explicit construction of these representations is given, for example, in~\cite{Knapp}. We will only give the constructions for those which are representations of $\PSL(2,\R)$. The following proposition follows from the above classification and the explicit constructions.
\begin{proposition}\label{prop:PSL2Rirreps}
	Up to equivalence, the only unitary irreducible representations of $\PSL(2,\R)$ are
\begin{enumerate}
	\item the trivial representation,
	\item the holomorphic discrete series $\cD_n$ and anti-holomorphic discrete series $\bar\cD_n$, $n\geq 1,\, n\in \Z$,
	\item the principal series $\cP^+_{i\nu}$ for $\nu\in \R$,
	\item the complementary series $\cC_s$ for $s\in (0,\thalf)$\,.
\end{enumerate}
The only equivalences between these representations are $\cP^+_{i\nu}\simeq \cP^+_{-i\nu}$.
\end{proposition}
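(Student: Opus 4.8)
The plan is to deduce the proposition directly from the $\SL(2,\R)$ classification of Theorem~\ref{thm:SL2Rirreps}, by identifying which of those representations descend to the quotient $\PSL(2,\R)=\SL(2,\R)/\{\pm I\}$ and then restricting the known equivalences to the survivors. The starting observation is that a unitary irreducible representation of $\PSL(2,\R)$ is exactly a unitary irreducible representation of $\SL(2,\R)$ on which the central element $-I$ acts as the identity. Thus the entire problem reduces to checking, entry by entry in the list of Theorem~\ref{thm:SL2Rirreps}, whether $-I$ acts trivially.

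To carry this out I would first pin down the action of $-I$. Since $-I$ is central, Schur's lemma forces it to act by a single scalar, which moreover squares to the identity and hence equals $\pm 1$. On the other hand $-I$ lies in the maximal compact $K$: its lift to $\SL(2,\R)$ is $r_{2\pi}$, so $-I = e^{-2\pi i L_{0}}$ and it acts on a vector of $K$-weight $m$ (an $L_{0}$-eigenvalue) by $e^{-2\pi i m}$. Therefore the common scalar is $+1$ precisely when all the $K$-weights occurring in the representation are integers, and $-1$ precisely when they are all half-integral; the representation descends to $\PSL(2,\R)$ if and only if its $K$-weights are integral.

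Next I would run through the $\SL(2,\R)$ list using the explicit $K$-weight content from the constructions in~\cite{Knapp}. The trivial representation has weight $0$ and descends. The discrete series $\cD_n,\bar\cD_n$ have extremal weight $\pm n$ with weights spaced by integers, so they descend exactly when $n\in\Z$, eliminating the half-integral values of $n$ allowed for $\SL(2,\R)$. The even principal series $\cP^+_{i\nu}$ and the complementary series $\cC_s$ each carry a $K$-fixed vector and hence have only integer weights, so both descend, whereas the odd principal series $\cP^-_{i\nu}$ has half-integer weights and does not. Finally the limits of discrete series $\cD_{\half},\bar\cD_{\half}$ have half-integer weights and are eliminated as well (consistent with the reducibility $\cP^-_0\simeq\cD_{\half}\oplus\bar\cD_{\half}$). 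This leaves exactly the four families in the proposition, and intersecting the $\SL(2,\R)$ equivalences $\cP^{\pm}_{i\nu}\simeq\cP^{\pm}_{-i\nu}$ with the survivors retains only $\cP^{+}_{i\nu}\simeq\cP^{+}_{-i\nu}$.

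The step needing the most care---indeed the only genuinely nontrivial input---is determining the precise parity of the $K$-weights in each representation, in particular separating the even principal series $\cP^+$ from the odd one $\cP^-$ and confirming that the complementary series sits in the even family. This is where the explicit single-variable models of Section~\ref{sec:reptheory} enter: in those models the character of $M=\{\pm I\}$ labelling the induced representation directly controls the parity of the $K$-types, so the bookkeeping, while routine, is the crux of the argument.
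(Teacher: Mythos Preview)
Your proposal is correct and matches the paper's intended argument exactly: the paper simply asserts that the proposition ``follows from the above classification and the explicit constructions,'' and you have spelled out precisely that deduction---using Schur's lemma to reduce the question to the parity of the $K$-weights and then reading off the integrality from the explicit models. If anything, your write-up is more detailed than what the paper provides.
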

We will often use the notation $\cP^+_{s}$ with $s=i\nu$ ($\nu\in \R$) to treat $\cC_s$ and $\cP^+_s$ in a uniform way, and refer to them collectively as the continuous series. We will label the continuous series representations using the parameter $\De=\thalf+s$, which is real and in $(\thalf,1)$ for the complementary series, and complex with $\mathrm{Re}\, \De=\thalf$ for principal series.

We now give explicit constructions of all non-trivial unitary irreps of $\PSL(2,\R)$. For this, we first identify $\SL(2,\R)$ with $\SU(1,1)$ by conjugation inside $\SL(2,\C)$
\be
	\SU(1,1)=\begin{pmatrix}
		1 & i \\ i & 1
	\end{pmatrix}^{-1}
	\SL(2,\R)
	\begin{pmatrix}
		1 & i \\ i & 1
	\end{pmatrix}.
\ee
The elements $u\in\SU(1,1)$ have the explicit form
\be\label{eq:uparameters}
	u=\begin{pmatrix}
		\a & \b \\ \bar\b & \bar\a
	\end{pmatrix},\quad 
	u^{-1}=\begin{pmatrix}
	\bar\a & -\b \\ -\bar\b & \a
\end{pmatrix},
\ee
with $|\a|^2-|\b|^2=1$. Note that inside $\SU(1,1)$, elements of $K$ become diagonal matrices
\be
r_{\theta} = e^{-i \theta L_{0}}=
\begin{pmatrix}
e^{i\theta/2} & 0 \\ 0 & e^{-i\theta/2}
\end{pmatrix}\in K\,.
\ee

For $z\in \C$ we define the action of $\SU(1,1)$ by fractional linear transformations
\be
	u\. z \equiv  \frac{\a z+\b}{\bar\b z+\bar\a}.
\ee
It is easy to check that this action leaves invariant the unit circle $\ptl \D=\{z:|z|=1\}$, and thus also the unit disk $\D=\{z:|z|<1\}$, and the complementary disk $\D'\equiv \{z:|z|>1\}\cup\{\oo\}$. We will realize our representations in the spaces of functions on $\D,\D'$, and $\ptl \D$.

\subsubsection*{Discrete series}

The anti-holomorphic discrete series representations $\bar\cD_n$ are realized in the space of holomorphic functions $f$ on $\D$ for which the following norm is finite\footnote{We normalize $d^2z=dxdy$ where $z=x+iy$.}
\be\label{eq:Dbnorm}
	\|f\|^2_{\bar\cD_n}=\int\limits_{|z|<1}\!\!d^2z(1-|z|^2)^{2n-2}|f(z)|^2.
\ee
This norm also defines an inner product which makes $\bar\cD_n$ into a Hilbert space.\footnote{See~\cite{Hedenmalm}, Proposition 1.2, for the proof of completeness.} The group action is given by, with the notation of~\eqref{eq:uparameters},
\be\label{eq:Dbaction}
	(u\.f)(z)\equiv (-\bar\b z+\a)^{-2n} f(u^{-1}\.z).
\ee
It is easy to check that~\eqref{eq:Dbnorm} is invariant under this action, giving rise to a unitary representation of $\SU(1,1)\simeq \SL(2,\R)$. The basis $\{L_{-1},L_0,L_{1}\}$ of $\mathfrak{g}_{\mathbb{C}}$ is represented in $\bar\cD_n$ by the differential operators
\be
\begin{aligned}
(L_{-1}\cdot f)(z) &= -\partial_zf(z)\\
(L_{0}\cdot f)(z) &= -(z\partial_z+n)f(z)\\
(L_{1}\cdot f)(z) &= -(z^2\partial_z+2nz)f(z)\,.
\end{aligned}
\label{eq:Ldiscrete}
\ee
From this, we find the eigenvalue of the quadratic Casimir $c_2 = n(n-1)$. Let us decompose $\bar\cD_n$ into irreducible representations of $K$. This can be done by choosing the following basis for $\bar\cD_n$
\be
f_{j}(z) = z^{j}\quad\text{with}\quad j\in\mathbb{Z}_{\geq 0}.
\ee
Indeed, we have
\be
L_0 \cdot f_{j} = -(n+j)f_{j}\,.
\ee
In other words, the spectrum of $L_0$ in $\bar\cD_n$ consists of $\{-n,-n-1,\ldots\}$ and each of these weights appears exactly once. The highest-weight vector $f_0$ is annihilated by $L_{-1}$.

We could view the discrete series representations $\cD_n$ as complex conjugate to $\bar\cD_n$. That is, we could view them as being realized in the space of anti-holomorphic functions on $\D$, with the norm
\be
\|f\|^2_{0,\cD_n}=\int\limits_{|z|<1}d^2z(1-|z|^2)^{2n-2}|f(z)|^2,
\ee
and action given by
\be
	(u\.f)(z)\equiv (-\b \bar{z}+\bar\a)^{-2n} f(u^{-1}\.z).
\ee
However, it will be more convenient for us to consider a different realization, in which the group action is formally the same as~\eqref{eq:Dbaction}. Specifically, for $f$ anti-holomorphic in $\D$ we define $I f$ by
\be
	(If)(z)=z^{-2n}f((\bar{z})^{-1}).
\ee
We then find
\be
	(I(u\.f))(z)
	=\p{{-\bar\b z+\a}}^{-2n} (If)\p{u^{-1}\.z}\,.
\ee
It is easy to check that $I$ is a bijection between anti-holomorphic functions in $\D$ and holomorphic functions $f$ in $\D'$. By functions holomorphic in $\D'=\{z:|z|>1\}\cup\{\oo\}$ we mean functions such that $f(z)$ is holomorphic for finite $z$ and $|f(z)|\leqslant C|z|^{2n}$ for some $C>0$. This notion depends on $n$, but this will not cause any confusion in what follows.

Thus we will view $\cD_n$ as realized in the space of holomorphic functions in $\D'$, with action~\eqref{eq:Dbaction}, and the norm defined by
\be
	\|f\|^2_{\cD_n}\equiv\|I^{-1}f\|^2_{0,\cD_n}.
\ee
The quadratic Casimir for $\cD_n$ is also $c_2 = n(n-1)$. A basis for $\cD_n$ consisting of eigenvectors of $L_0$ can be chosen as $f_{j}(z) = z^{-2n-j}$ for $j\in\mathbb{Z}_{\geq 0}$ and we have
\be
L_0\cdot f_j= (n+j)f_j\,.
\ee
In other words, the spectrum of $L_0$ in $\bar\cD_n$ consists of $\{n,n+1,\ldots\}$ and each of these weights appears exactly once. The lowest-weight vector $f_0$ is annihilated by $L_{1}$.

\subsubsection*{Principal series}

The principal series representations $\cP_{i\nu}^+$ are realized in the space $L^2(\ptl\D)$ of (the equivalence classes of) square-integrable functions on the unit circle $\ptl\D$. In particular, we have the norm
\be\label{eq:Pnorm}
	\|f\|_{\cP^+_{i\nu}}^2\equiv\int\limits_{|z|=1} |dz|\, |f(z)|^2.
\ee
The group action is defined by
\be\label{eq:ContAction}
	(u\.f)(z) \equiv |-\b \bar{z}+\bar\a|^{-2\De} f(u^{-1}\.z),
\ee
where $\De=\half+i\nu$. It is easy to check that the norm~\eqref{eq:Pnorm} is invariant under this action. The quadratic Casimir for $\cP_{i\nu}^+$ is $c_2 = \Delta(\Delta-1)$. The set of functions $f_j(z) = z^j$ with $j\in\mathbb{Z}$ forms a basis for $\cP_{i\nu}^+$ and we have $L_0\cdot f_j = - j f_j$. Thus, the spectrum of $L_{0}$ consists of all integers and each integer weight appears exactly once. There are no lowest-weight or highest-weight vectors.

\subsubsection*{Complementary series}

The complementary series representations $\cC_{s}$ are realized in Hilbert space that is the completion of $L^2(\ptl\D)$ with respect to the norm
\be\label{eq:Cnorm}
	\|f\|_{\cC_s}^2\equiv\int\limits_{|z|=1,|w|=1} |dz\|dw|\, \frac{\overline{f(z)}f(w)}{|z-w|^{2-2\De}}.
\ee
Here $\De=\half+s$. The group action is defined again by~\eqref{eq:ContAction}. Just like for $\cP_{i\nu}^+$, the quadratic Casimir is $c_2 = \Delta(\Delta-1)$, and $f_j(z) = z^j$ with $j\in\mathbb{Z}$ is a basis for $\cC_{s}$.

\subsubsection*{Real and complex representations} Note that the representations $\cD_n$ and $\bar\cD_n$ are complex and complex-conjugate to each other. On the other hand, the complementary series representations $\cC_s$ and the principal-series representations $\cP_{i\nu}$ are real. It is straightforward to see in the former case, where a complex conjugation (i.e.\ an anti-unitary involution) can be defined as the point-wise complex conjugation of the functions in $\cC_s$. In the case of $\cP_{i\nu}$, the point-wise complex conjugation is an anti-unitary map to $\cP_{-i\nu}$, and needs to be followed by an isomorphism $\cP_{-i\nu}\simeq \cP_{i\nu}$. The resulting anti-unitary operator on $\cP_{i\nu}$ can be verified to be an involution. Alternatively, it is straightforward to construct the complex conjugation in the basis of $L_0$ eigenvectors. We will from now on assume that a standard complex conjugation has been fixed on $\cP_{i\nu}$ and $\cC_s$.

\subsubsection*{Cautionary remark}
We caution the reader that there is no direct relationship between the following two types of functions:
\begin{enumerate}
\item functions $f(z)$ representing vectors in the models of irreducible representations described in this subsection
\item automorphic functions $h(z)$, corresponding to vectors in the irreducible subspaces of $L^2(\Gamma\backslash G)$, described in Section~\ref{ssec:spectralDecomposition}
\end{enumerate}
In particular, the $z$ variables in (1) and (2) are not directly related. For example, a lowest-weight vector in the model of $\cD_{n}$ described in this subsection is a constant multiple of $f_0(z) = z^{-2n}$. On the other hand, we saw in Section~\ref{ssec:spectralDecomposition} that lowest-weight vectors belonging to the $\sD_{n}$ subspace of $L^2(\Gamma\backslash G)$ are weight-$2n$ modular forms $h(z)$. Unlike $f_0(z)$, these functions depend on $\Gamma$, there can be several linearly independent ones, and certainly none of them equals $z^{-2n}$!

On the other hand, the $z$ variable in (1) is the same as the variable parametrizing the coherent states $\sO_n(z)$, $\widetilde{\sO}_n(z)$, which is the subject of the next subsection.

\subsection{Coherent states in the discrete series}\label{ssec:coherentStatesRigorous}
In Section~\ref{ssec:coherentStates}, we defined the coherent states $\sO_n(z)\in\sD_{n}$ by formally acting with group elements in $\PSL(2,\mathbb{C})$ on lowest-weight vectors. In this subsection, we will identify these coherent states in the model for the discrete series described in Section \ref{sec:reptheory}. This is an important stepping stone towards the proof of Theorem~\ref{theorem:main}.

Let us first consider $\cD_n$, realized in the space of functions holomorphic in $\D'$. For a given $w\in \D$, we define the coherent state $\cO(w)\in\cD_n$ as the following function of $z\in \D'$
\be\label{eq:coherent}
\cO(w)(z) \equiv \sqrt{\frac{2n-1}{\pi}}\frac{1}{(z-w)^{2n}}.
\ee
Let us repeat for clarity that the variable $w\in \D$ labels different coherent states, and the variable $z\in \D'$ reminds us that each element of $\cD_n$ is a holomorphic function of $z\in \D'$.  

For $\bar\cD_n$, we define $\widetilde\cO(w)$ by the same formula~\eqref{eq:coherent}, but now $w\in \D'$ and $z\in \D$. Explicitly,
\be\label{eq:barcoherent}
\widetilde \cO(w)(z) \equiv \sqrt{\frac{2n-1}{\pi}}\frac{1}{(z-w)^{2n}}.
\ee
It is easy to verify that $\widetilde\cO$ satisfies the same transformation rule and has the same properties as $\cO$, except that $\widetilde\cO(w)$ is defined and holomorphic for $w\in\D'$. 
For future reference, we note that for $w\in \D'$ and $z\in \D'$
\be\label{eq:coherentinversion}
z^{-2n}\overline{\p{\widetilde \cO(w)((\bar{z})^{-1})}}&=z^{-2n}\sqrt{\frac{2n-1}{\pi}}\frac{1}{(z^{-1}-\bar{w})^{2n}}=(\bar{w})^{-2n}\cO((\bar{w})^{-1}).
\ee

In the following, we will suppress the second variable from our notation, and call the first variable $z$. The coherent states have the inner products
\be\label{eq:coherentinner}
	(\cO(z_1),\cO(z_2))_{\cD_n}=\frac{1}{({\bar{z}_1}z_2-1)^{2n}}\.
\ee
and 
\be\label{eq:coherentinner2}
	(\widetilde\cO(z_1),\widetilde\cO(z_2))_{\bar{\cD}_n}=\frac{1}{({\bar{z}_1}z_2-1)^{2n}}\.
\ee

It is easy to verify that $\|\cO(z)\|_{\cD_n}<\oo$. The action of $\PSL(2,\R)=\mathrm{PSU}(1,1)$ on $\cO(z)$ is easy to work out from \eqref{eq:Dbaction}
\be\label{eq:coherentaction}
u\.\cO(z) = (-\bar\beta z+\a)^{-2n}\cO(u\.z).
\ee
The infinitesimal form of this action is
\be
\begin{aligned}
L_{-1}\cdot (\cO(z)) &= \partial_{z} \cO(z)\\
L_{0}\cdot (\cO(z)) &= (z\partial_{z}+n) \cO(z)\\
L_{1}\cdot (\cO(z)) &= (z^2\partial_{z}+2n z) \cO(z)\,.
\end{aligned}
\label{eq:coherentactionLs}
\ee

For a unitary representation $V$ of $G$ we write $V^\oo$ for its Frech\'et space of smooth vectors~\cite{Warner}. We have the following simple result, the holomorphicity part of which means that all objects constructed out of $\cO(z)$ by continuous operations are holomorphic in $z$.
\begin{proposition}\label{prop:coherent}
	The coherent states $\cO$ and $\widetilde\cO$ take values in $\cD^\oo_n$, $\bar\cD^\oo_n$. They are holomorphic as functions $\D\to \cD^\oo_n$ and $\D'\to \bar\cD^\oo_n$.
	The span of the coherent states $\cO(z)\in \cD_n$ for $z\in\D$ is dense in $\cD_n$. The span of the coherent states $\widetilde\cO(z)\in\bar\cD_n$ for $z\in\D'$ is dense in $\bar\cD_n$.
\end{proposition}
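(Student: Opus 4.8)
The plan is to reduce every claim to the explicit expansion of $\cO(w)$ in the $L_0$-eigenbasis $f_j(z)=z^{-2n-j}$, $j\in\Z_{\geq 0}$, of $\cD_n$ from Section~\ref{sec:reptheory}. Expanding the geometric series in \eqref{eq:coherent}, which is legitimate because $|w|<1<|z|$, gives
\be
\frac{1}{(z-w)^{2n}}=\sum_{j=0}^{\infty}\binom{2n+j-1}{j}\,w^{j}\,z^{-2n-j},
\ee
so that in the orthonormal basis $\hat f_j=f_j/\|f_j\|_{\cD_n}$ one has $\cO(w)=\sum_{j\geq 0}c_j(w)\hat f_j$ with $|c_j(w)|^2=\binom{2n+j-1}{j}|w|^{2j}$; summing this reproduces $\|\cO(w)\|_{\cD_n}^2=(1-|w|^2)^{-2n}$, in agreement with \eqref{eq:coherentinner}. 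The decisive feature is that, since $|w|<1$, the coefficients $c_j(w)$ decay geometrically in $j$, locally uniformly on $\D$. To see that $\cO(w)\in\cD_n^\oo$, recall that the Casimir $C_2=n(n-1)$ acts as a scalar on the irreducible $\cD_n$, so the standard elliptic element of $\cU(\mathfrak{g}_{\mathbb{C}})$ controlling smoothness (the sum of the Casimirs of $G$ and $K$) reduces to a function of $L_0$ alone; a vector is then smooth precisely when its $\hat f_j$-coefficients decay faster than any power of $j$, which geometric decay certainly ensures.

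Next I would prove holomorphicity as a map into the Fréchet space $\cD_n^\oo$, whose topology is generated by the seminorms $v\mapsto\|u\cdot v\|_{\cD_n}$ for $u\in\cU(\mathfrak{g}_{\mathbb{C}})$~\cite{Warner}. Since $L_{\pm1}$ shift the $L_0$-weight by one and act on the $j$-th basis vector with coefficients polynomial in $j$, applying any fixed $u$ to $\cO(w)=\sum_j c_j(w)\hat f_j$ yields a series whose coefficients are again $O(|w|^{j}\,\mathrm{poly}(j))$, hence still convergent locally uniformly on $\D$. Each truncation is a finite $\C$-linear combination of the holomorphic scalar functions $c_j(w)$ with fixed smooth vectors, and the truncations converge to $\cO(w)$ locally uniformly in every seminorm; the locally uniform limit of holomorphic maps being holomorphic, $w\mapsto\cO(w)$ is holomorphic $\D\to\cD_n^\oo$.

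For density, suppose $v=\sum_j v_j\hat f_j\in\cD_n$ satisfies $(v,\cO(w))=0$ for all $w\in\D$. By the expansion above and the convention that $(\cdot,\cdot)$ is conjugate-linear in its first slot, $(v,\cO(w))=\sum_j v_j^{*}\,c_j(w)$ is a power series in $w$ on $\D$ vanishing identically; since every $c_j$ is a nonzero multiple of $w^j$, all $v_j$ vanish and $v=0$. Thus the span of $\{\cO(w):w\in\D\}$ is dense. (Equivalently, the $\cO(w)$ are the reproducing kernels of the weighted Bergman space $\cD_n$, which always span a dense subspace.) The assertions for $\bar\cO$ follow verbatim after exchanging the roles of $\D$ and $\D'$, or directly from the intertwiner $I$ and complex conjugation introduced in Section~\ref{sec:reptheory}.

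I expect the only genuinely delicate point to be holomorphicity into $\cD_n^\oo$ as opposed to merely into the Hilbert space $\cD_n$: one must control convergence in all the enveloping-algebra seminorms at once, which rests on the standard but slightly technical fact that $\cU(\mathfrak{g}_{\mathbb{C}})$ acts on the $K$-eigenbasis by weight shifts with polynomially bounded matrix coefficients. Smoothness and density are then immediate consequences of the geometric decay of the $c_j(w)$.
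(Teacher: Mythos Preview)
Your proof is correct and complete. It takes a genuinely different route from the paper's argument, so a brief comparison is in order.

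The paper never expands $\cO(w)$ in the $L_0$-eigenbasis. Instead, it exploits the transformation law $u\cdot\cO(z)=(\b^*z+\a^*)^{-2n}\cO(u\cdot z)$ to trade questions about the $G$-action for questions about the $z$-dependence. For smoothness, the paper shows that $z\mapsto\cO(z)$ is \emph{weakly} holomorphic into $\cD_n$ (via the reproducing-kernel identity $(f,\cO(z))=\mathrm{const}\cdot f(z)$), invokes the general principle that weakly smooth Hilbert-space-valued maps are smooth, and then deduces $\cO(z)\in\cD_n^\oo$ from the transformation law. The upgrade from holomorphicity into $\cD_n$ to holomorphicity into $\cD_n^\oo$ is handled by a second use of the transformation law together with the abstract fact $(\cD_n^\oo)^\oo=\cD_n^\oo$, which reduces everything to smoothness of $u\mapsto u\cdot\cO(0)$. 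Density is argued, as you also note parenthetically, via the reproducing kernel.

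Your approach is more hands-on: the explicit coefficients $c_j(w)$ make the geometric decay manifest, and the characterization of smooth vectors by rapid decay of $K$-Fourier coefficients (correct here since $C_2$ is scalar, so the analytic Laplacian reduces to a function of $L_0$) gives smoothness directly. Holomorphicity into the Fr\'echet space then follows from locally uniform convergence of truncations in every $\cU(\mathfrak{g})$-seminorm, which your polynomial-growth observation justifies. This is more elementary and self-contained, at the cost of being specific to the discrete series model; the paper's argument is slicker and transports more readily to other coherent-state constructions, but leans on the black-box facts ``weakly smooth $\Rightarrow$ smooth'' and $(\cD_n^\oo)^\oo=\cD_n^\oo$.
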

\begin{proof}
	We prove the density statement for $\bar\cD_n$, and the statement for $\cD_n$ follows by complex conjugation. Let $K_w(z)\equiv\tfrac{2n-1}{\pi}(1- \bar{w} z)^{-2n}$. Note that for $w\in\D$ we have $K_w\in \bar\cD_n$. The function $K_w$ is the reproducing kernel for $\bar\cD_n$, i.e.\ for $w\in \D$ and $f\in \bar\cD_n$ we have
	\be\label{eq:reproduceid}
	(K_w,f)=f(w),
	\ee
	see~\cite{Hedenmalm} Corollary 1.5. Clearly, 
	\be
	K_w=\sqrt{\tfrac{2n-1}{\pi}} (\bar{w})^{-2n}\widetilde\cO((\bar{w})^{-1}).
	\ee
	Suppose now that $\overline{\mathrm{Span}\{\widetilde\cO(w)|w\in \D\}}\neq \bar\cD_n$. Then there is a non-zero $f\in\bar\cD_n$ such that $(\widetilde\cO(z),f)=0$ for all $z\in\D$. The above discussion then clearly implies that $f=0$, which is a contradiction. This proves denseness of the span.
	
	Now we would like to prove that for any $z\in \D$, $\cO(z)\in \cD_n^\oo$. This is equivalent to the function $u\mapsto u\. \cO(z)$ being a smooth function of $u$. Using~\eqref{eq:coherentaction}, this is equivalent to $\cO(z)$ being a smooth function for $z\in \D$. The above relation to the reproducing kernel implies that $(f,\cO(z))$ is a holomorphic function of $z\in \D$ for any $f\in\cD_n$, which in particular means that the map $z\mapsto \cO(z)$ is a weakly smooth function. But any weakly smooth function is smooth~\cite{Neto}. Finally, we have $(f,\ptl_{\bar z}\cO(z))=\ptl_{\bar z}(f,\cO(z))=0$ for any $f\in \cD_n$, and so $\ptl_{\bar z}\cO(z)=0$, i.e. $\cO(z)$ satisfies the Cauchy-Riemann equations.
	
	It remains to show that $\cO:\D\to \cD_n^\oo$ is a smooth function (the derivatives in $\cD_n^\oo$ and $\cD_n$ then automatically coincide). This is not immediate because $\cD_n^\oo$ has a stronger topology than $\cD_n$. However, using~\eqref{eq:coherentaction}, this is equivalent to $u\mapsto u\.\cO(0)$ being smooth in $\cD_n^\oo$, which is true due to the standard fact that $(\cD_n^\oo)^\oo=\cD_n^\oo$.
	
	The analogous statements for $\bar\cD_n$ follow by complex conjugation.
\end{proof}

We are ready to make contact with the definition of the coherent states in Section \ref{ssec:coherentStates}. Let us consider the subspaces $\sD_n$  of $L^2(\Gamma\backslash G)$ which appear in the decomposition~\eqref{eq:SL2RdecompositionFine}. For all $n$ that are present, we choose a \textit{unitary} isomorphism of unitary representations $\tau_n:\C^{\ell_n}\otimes \cD_n\to \mathscr{D}_n$. In physics literature, this is referred to as ``choosing a basis of local operators.'' Given $\tau_n$, we define the isomorphism $\bar\tau_n:\C^{\ell_n}\otimes \bar\cD_n\to \bar{\mathscr{D}}_n$ by
\be
	\bar\tau_n(v\otimes f)=\overline{\tau_n(\bar v\otimes \tl f)},
\ee
where
\be
	\tl f(z)=z^{-2n}\overline{f((\bar{z})^{-1})}.
\ee
We can now define the coherent states inside $\sD_n$, $\bar{\sD}_n$ as follows
\be
\begin{aligned}
	\mathscr{O}_{n,a}(z)&\equiv \tau_n(e_a\otimes \cO(z))\in \mathscr{D}_n^\oo\subseteq C^\oo(\G\@G),\\
	\widetilde{\mathscr{O}}_{n,a}(z)&\equiv \bar\tau_n(e_a\otimes \widetilde\cO(z))\in \bar{\mathscr{D}}_n^\oo\subseteq C^\oo(\G\@G)\,,
\end{aligned}
\label{eq:coherentDefinition}
\ee
where $a\in\{1,\cdots,\ell_n\}$ and $e_a$ is the standard basis of $\C^{\ell_n}$.
A consequence of these definitions is that $\mathscr{O}_{n,a}(z),\widetilde{\mathscr{O}}_{n,a}(z)$ transform as coherent states in~\eqref{eq:coherentaction} and \eqref{eq:coherentactionLs} and satisfy
\be\label{eq:coherentconj}
	\overline{\p{\mathscr{O}_{n,a}(z)}}=(\bar{z})^{-2n}\widetilde{\mathscr{O}}_{n,a}((\bar{z})^{-1}).
\ee
Indeed, 
\be
	\overline{\p{\mathscr{O}_{n,a}(z)}}&=\overline{\bar\tau_n(e_a\otimes \widetilde\cO(z))}=\tau_n(e_a\otimes \tl {\widetilde\cO(z)})\nn\\
	&=\tau_n(e_a\otimes z^{-2n}\cO((\bar{z})^{-1}))=(\bar{z})^{-2n}\widetilde{\mathscr{O}}_{n,a}((\bar{z})^{-1}),
\ee
where we used~\eqref{eq:coherentDefinition} and~\eqref{eq:coherentinversion}.
The proof that \eqref{eq:coherentDefinition} agrees with Definition~\ref{def:coherent} is left as an exercise to the reader.

\subsection{Coherent states in continuous series}
\label{sec:continuouscoherent}

For the principal and complementary series representations, which we denote by $R=\cP^+_{s}$ or $R=\cC_s$, we define $\mathrm{O}$ as a $R^\oo$-valued distribution on $C^\oo(\ptl\D)$ by
\be
	\mathrm{O}(f)=N_s f
\ee
for any test function $f\in C^\oo(\ptl\D)$, where $N_s>0$ is defined by $N_s\|1\|_R=1$ with $1$ being the constant function. The fact that $\mathrm{O}(f)$ is well-defined and continuous follows from the identification $R^\oo=C^\oo(\ptl\D)$.${}^{\ref{foot:smoothvectors}}$ As is customary when working with distributions, we will abuse the notation by introducing $\mathrm{O}(z)$, formally depending on $z\in\ptl\D$, and write
\be
	\int_{|z|=1}|dz|\mathrm{O}(z)f(z)\equiv \mathrm{O}(f).
\ee
With this notation, the condition $N_s\|1\|_R=1$ implies
\be
	\big\|\int_{|z|=1}|dz|\mathrm{O}(z)\big\|^2=1.
\ee
The naturally-defined action of $G$ is
\be
	(u\.\mathrm{O})(z) \equiv |\bar{\b} z+\bar{\a}|^{2-2\De} \mathrm{O}(u\.z),
\ee
where $\De=\thalf+s$.

Now, for each $\sC_{\l_k}$ we choose an unitary isomorphism $\kappa_k:\C^{d_k}\otimes R\to \sC_{\l_k}$ (where $R$ is the appropriate irrep for the value of $\l_k$) which also preserves the complex conjugation (which in the case of $\sC_{\l_k}$ is defined by point-wise complex conjugation on $G\@\G$). We define
\be
	\mathbb{O}_{k,a}(f)\equiv \kappa_k(e_a\otimes \mathrm{O}(f))\in \sC_{\l_k}^\oo.
\ee

\subsection{General properties of $L^2(\G\@G)$ and product expansions}

In this subsection we collect some useful technical results that will be needed later in the paper.

For a continuous representation $V$ of $G$ let $V^\oo$ denote the smooth representation of $G$ on the smooth vectors of $V$. The following proposition is a direct consequence of Corollary III.7.9 in \cite{BorelWallach}.
\begin{proposition} $(L^2(\G\@G))^\oo=C^\oo(\G\@G)$ as smooth representations of $G$.\label{prop:smooth}
\end{proposition}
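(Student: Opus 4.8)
The plan is to prove the two inclusions separately and then check that the two Fréchet topologies coincide, which together give the asserted identity of smooth $G$-representations. Throughout, write $R$ for the right-regular representation on $L^2(\G\@G)$, whose infinitesimal action on smooth functions is by the left-invariant differential operators recorded in \eqref{eq:LactionL2}, and let $dR(u)$ denote the action of $u\in\cU(\mathfrak{g})$. The essential simplification is that $\G\@G$ is a \emph{compact} manifold, so there are no issues with supports or decay, $C^\oo(\G\@G)$ agrees with its compactly supported version, and $C^\oo(\G\@G)$ embeds continuously into every $L^2$-Sobolev space $H^s(\G\@G)$.

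For the inclusion $C^\oo(\G\@G)\subseteq (L^2(\G\@G))^\oo$, I would argue directly. Given $F\in C^\oo(\G\@G)$, the map $(g,x)\mapsto F(xg)$ is jointly smooth, and compactness lets me bound all its derivatives uniformly. Then for $X\in\mathfrak{g}$ the difference quotients $t^{-1}(R(\exp tX)F-F)$ converge \emph{uniformly} on $\G\@G$, hence in $L^2$, to the smooth function $dR(X)F$; iterating shows that all iterated directional derivatives of $g\mapsto R(g)F$ exist and are continuous as $L^2$-valued maps, so $F$ is a smooth vector.

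The substantive direction is $(L^2(\G\@G))^\oo\subseteq C^\oo(\G\@G)$, and here the tool is elliptic regularity. If $F$ is a smooth vector then $dR(u)F\in L^2$ for every $u\in\cU(\mathfrak{g})$; a short argument identifies this strong $L^2$-derivative with the distributional derivative of $F$ along the corresponding left-invariant operator on $\G\@G$. The crucial point is to apply this not to the Casimir $C_2$, whose principal symbol is indefinite, but to the elliptic element $E = 1-(J_{-1}^2+J_0^2+J_1^2)\in\cU(\mathfrak{g})$: since $J_{-1},J_0,J_1$ is a basis of $\mathfrak{g}$, the associated left-invariant vector fields span the tangent space at every point, so the principal symbol of $E$ is positive definite. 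Consequently $E^kF\in L^2$ as a distribution for all $k$, elliptic regularity on the compact manifold $\G\@G$ gives $F\in H^{2k}(\G\@G)$ for every $k$, and Sobolev embedding yields $F\in C^\oo(\G\@G)$.

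Finally, to upgrade the set-theoretic equality to an identity of smooth representations I would match the Fréchet topologies. The topology on $(L^2)^\oo$ is generated by the seminorms $v\mapsto\|dR(u)v\|_{L^2}$, while the $C^\oo$-topology is generated by the Sobolev norms $\|\cdot\|_{H^{2k}}$; ellipticity of $E$ shows that $\|F\|_{H^{2k}}$ is controlled by $\|E^kF\|_{L^2}+\|F\|_{L^2}$ and conversely, so the two families of seminorms are mutually dominating. I expect the main obstacle to be precisely the bookkeeping that legitimizes this whole scheme: verifying that the strong $L^2$-derivatives of a smooth vector really are honest distributional derivatives on $\G\@G$, and choosing the elliptic representative $E$ rather than the Casimir. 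Once elliptic regularity on the compact quotient is in hand the remaining steps are routine; this is exactly the content of Corollary III.7.9 in \cite{BorelWallach}, which treats the general reductive case.
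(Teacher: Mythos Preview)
Your proposal is correct, and in fact you supply more than the paper does: the paper's entire argument is a one-line citation of Corollary III.7.9 in \cite{BorelWallach}, with no details given. Your sketch via the elliptic element $E=1-(J_{-1}^2+J_0^2+J_1^2)$ and Sobolev embedding on the compact quotient is precisely the standard proof underlying that reference, and you correctly identify (and flag) the one genuinely non-formal step, namely that the strong $L^2$-derivatives of a smooth vector agree with the distributional ones.
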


Now let $U,V$, $U\subseteq V$ be continuous representations of $G$. A simple exercise verifies that $U^\oo=U\cap V^\oo$. For the decomposition~\eqref{eq:SL2RdecompositionFine} this and Proposition~\ref{prop:smooth} imply that the notation $\sD_n^\oo$, etc., that we used to denote the space of smooth functions inside $\sD_n$, etc., is consistent with the notation smooth representations introduced above.

Using the fact that the topology on $C^\oo(\G\@G)$ is that of uniform convergence of all derivatives, we obtain the following proposition.
\begin{proposition}
	The $n$-function correlator $\<\cdots\>$ is a well-defined continuous multilinear functional on the smooth vectors of the representations appearing in~\eqref{eq:SL2RdecompositionFine}.\label{prop:continuous}
\end{proposition}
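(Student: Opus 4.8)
The plan is to establish continuity of the $n$-function correlator by reducing it to the continuity of ordinary integration against a smooth measure, using the fact noted just before the statement that the topology on $C^\infty(\Gamma\backslash G)$ is that of uniform convergence of all derivatives. First I would recall from Proposition~\ref{prop:smooth} that the smooth vectors of $L^2(\Gamma\backslash G)$ coincide with $C^\infty(\Gamma\backslash G)$, so that for each representation $H$ appearing in~\eqref{eq:SL2RdecompositionFine} the smooth vectors $H^\infty$ embed continuously into $C^\infty(\Gamma\backslash G)$: indeed, $H^\infty = H \cap C^\infty(\Gamma\backslash G)$ carries the subspace topology, and the inclusion $H^\infty \hookrightarrow C^\infty(\Gamma\backslash G)$ is continuous by definition. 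Thus it suffices to prove that the map
\be
	(F_1,\ldots,F_n)\mapsto \int_{\Gamma\backslash G} d\mu(g)\, F_1(g)\cdots F_n(g)
\ee
is a continuous multilinear functional on $C^\infty(\Gamma\backslash G)^{\times n}$ with the Fr\'echet topology.

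The key step is the elementary estimate that pointwise multiplication is jointly continuous on $C^\infty(\Gamma\backslash G)$ and that integration against the finite measure $\mu$ is continuous. Concretely, since $\Gamma\backslash G$ is compact, uniform convergence of all derivatives controls the sup norm, and one has the bound
\be
	\Big|\int_{\Gamma\backslash G} d\mu(g)\, F_1(g)\cdots F_n(g)\Big|\leq \mu(\Gamma\backslash G)\,\prod_{i=1}^n \|F_i\|_{\infty}=\prod_{i=1}^n \|F_i\|_{\infty},
\ee
using $\mu(\Gamma\backslash G)=1$. The sup norm $\|\cdot\|_\infty$ is one of the defining seminorms of the Fr\'echet topology on $C^\infty(\Gamma\backslash G)$ (it is the $0$-th order seminorm), so this inequality exhibits the functional as bounded by a product of continuous seminorms in each argument separately. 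Multilinearity is immediate from linearity of the integral and of the product in each factor. Continuity of a multilinear map on Fr\'echet spaces follows from such separate seminorm bounds together with joint boundedness, which the displayed estimate provides directly.

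I would then assemble these observations: precompose the integration functional with the continuous inclusions $H_i^\infty \hookrightarrow C^\infty(\Gamma\backslash G)$ to conclude that $\langle F_1\cdots F_n\rangle$ is a continuous multilinear functional on $\prod_i H_i^\infty$. The main obstacle, such as it is, is purely bookkeeping: one must be careful that the seminorm estimate gives genuine \emph{joint} continuity of the multilinear map rather than mere separate continuity in each variable. On Fr\'echet spaces these are not automatically equivalent for general multilinear maps, but the explicit bound by the product of sup norms is a joint bound and therefore does establish joint continuity directly; no appeal to abstract separate-versus-joint continuity theorems is needed. I expect this to be the only point requiring care, and it is resolved by writing down the product estimate explicitly as above.
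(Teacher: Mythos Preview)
Your proposal is correct and follows exactly the route the paper indicates: the paper does not spell out a proof but merely states that the proposition follows ``using the fact that the topology on $C^\infty(\Gamma\backslash G)$ is that of uniform convergence of all derivatives,'' and your argument via the sup-norm estimate $\big|\int F_1\cdots F_n\,d\mu\big|\leq \prod_i \|F_i\|_\infty$ is precisely the way to unpack that remark. Your care about joint versus separate continuity, and about the identification of $H^\infty$ with a subspace of $C^\infty(\Gamma\backslash G)$ with matching topology (which the paper justifies just above the proposition), are the only nontrivial points and you handle both correctly.
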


In the rest of this subsection, let $H_1$, $H_2$, and $H_3$ be any three direct summands in~\eqref{eq:SL2RdecompositionFine}. Let $P_3$ be the orthonormal projector $P_3:L^2(\G\@G)\to H_3$. Note that the following map $C:H_1^\oo\times H_2^\oo\to H_3$
\be
	C(F_1,F_2)\equiv P_3(F_1F_2)
\ee
is well-defined. Furthermore, it is continuous (in the product topology, since the $P_3$ is continuous and the convergence in $H_2^\oo$ and $H_3^\oo$ implies uniform convergence), and $G$-equivariant. 
This means that we can view $C$ as an element
\be
	C\in \Hom_G(H_1^\oo\potimes H_2^\oo,H_3),
\ee
where $\potimes$ denotes the projective tensor product. We then have
\be
	\Hom_G(H_1^\oo\potimes H_2^\oo,H_3)=\Hom_G(H_1^\oo\potimes H_2^\oo,H_3^\oo),
\ee
which is due to $H_1^\oo\potimes H_2^\oo$ being a smooth representation of $G$ (\cite{Warner}, 4.4.1.10).

This implies that every term in~\eqref{eq:productExpansion} is in $C^\oo(\G\@ G)$. Furthermore, the following lemma holds.
\begin{lemma}
	The expansion~\eqref{eq:productExpansion} converges in $(L^2(\G\@G))^\oo=C^\oo(\G\@ G)$.
\end{lemma}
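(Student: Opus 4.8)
The plan is to upgrade the evident $L^2$-convergence of the orthogonal expansion $F_1F_2=\sum_H P_H(F_1F_2)$ to convergence in the stronger Fréchet topology of $(L^2(\G\@G))^\oo=C^\infty(\G\@G)$. The starting point is that this topology is generated by the seminorms $q_u(F)=\|u\cdot F\|_{L^2}$ with $u$ ranging over the universal enveloping algebra $\cU(\mathfrak g)$; concretely one may take the powers of the positive elliptic operator $\mathscr D=1-J_{-1}^2-J_0^2-J_1^2\in\cU(\mathfrak g)$, so that the topology is equivalently given by $\{\|\mathscr D^k F\|_{L^2}\}_{k\geq 0}$. Since $C^\infty(\G\@G)$ is complete in this topology, it suffices to show that for each $u\in\cU(\mathfrak g)$ the partial sums of \eqref{eq:productExpansion} converge in the seminorm $q_u$; their limit is then automatically $F_1F_2$, because it already equals $F_1F_2$ in $L^2$.

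The key step is a commutation identity. Each projector $P_H$ ($H$ one of $\C,\sD_n,\bar\sD_n,\sC_{\l_k}$) is a bounded $G$-equivariant operator on $L^2(\G\@G)$, hence preserves smooth vectors and, upon differentiating the relation $P_H\circ\pi(g)=\pi(g)\circ P_H$, intertwines the infinitesimal action: $P_H(X\cdot F)=X\cdot P_H(F)$ for all $X\in\mathfrak g$ and all smooth $F$. Iterating, I get $P_H(u\cdot F)=u\cdot P_H(F)$ for every $u\in\cU(\mathfrak g)$. (That each $P_H(F_1F_2)$ is itself smooth, so that these expressions make sense, is precisely what was established above through the identity $\Hom_G(H_1^\oo\potimes H_2^\oo,H_3)=\Hom_G(H_1^\oo\potimes H_2^\oo,H_3^\oo)$.)

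With this in hand the conclusion is immediate. Fix $u\in\cU(\mathfrak g)$ and write $F=F_1F_2\in C^\infty(\G\@G)=(L^2(\G\@G))^\oo$, so that $u\cdot F\in L^2(\G\@G)$. Because \eqref{eq:SL2RdecompositionFine} is an orthogonal Hilbert-space direct sum, the expansion $\sum_H P_H(u\cdot F)=u\cdot F$ converges unconditionally in $L^2$, and hence for any finite collection $S$ of summands the tail satisfies $\big\|\sum_{H\notin S}P_H(u\cdot F)\big\|_{L^2}\to 0$ as $S$ exhausts the index set. Denoting by $F_S=\sum_{H\in S}P_H F$ the corresponding partial sum of \eqref{eq:productExpansion}, the commutation identity gives $u\cdot(F-F_S)=\sum_{H\notin S}P_H(u\cdot F)$, so that $q_u(F-F_S)\to 0$. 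As this holds for every $u$, the partial sums converge to $F_1F_2$ in each generating seminorm, i.e.\ in $C^\infty(\G\@G)$, as claimed.

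I expect the only genuinely nontrivial ingredient to be the topological identification underlying the first paragraph: that the smooth-vector topology of $L^2(\G\@G)$ coincides with the standard $C^\infty$ topology and is captured by the seminorms $\|\mathscr D^k\cdot\|_{L^2}$ of a single elliptic operator. This is exactly the content of Proposition~\ref{prop:smooth} together with elliptic regularity on the compact manifold $\G\@G$ (the operator $\mathscr D$ is elliptic since $J_{-1},J_0,J_1$ span $\mathfrak g$, and it is $\geq 1$ because each $J_i$ is anti-self-adjoint in a unitary representation). Everything else is a formal manipulation of orthogonal projections, and the fact that the $P_H$ preserve smoothness and commute with $\cU(\mathfrak g)$ has already been recorded above, so no further analytic input is required.
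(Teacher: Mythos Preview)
Your proof is correct and follows essentially the same route as the paper's. Both arguments reduce convergence in the smooth topology to convergence in each seminorm $F\mapsto\|D\cdot F\|_{L^2}$ by using that the $G$-equivariant projectors $P_H$ commute with the action of $\cU(\mathfrak g)$ on smooth vectors, and then invoking the $L^2$-convergence of the orthogonal decomposition applied to $D(F_1F_2)$; the paper phrases the seminorms with an additional supremum over compact subsets of $G$ (following Warner), but since the representation is unitary this extra supremum is redundant and the two formulations coincide.
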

\begin{proof}
	It is convenient to view the topology on $(L^2(\G\@G))^\oo$ as defined by seminorms labeled by $D\in \cU(\mathfrak{g})$ and compact subsets of $G$~\cite{Warner}. Continuity of $P_3$ implies $DP_3(F_1F_2)=P_3(D(F_1F_2))$, which implies that~\eqref{eq:productExpansion} still converges in $L^2(\G\@G)$ after acting with $D$ (it becomes just~\eqref{eq:SL2RdecompositionFine} applied to $D(F_1F_2)$),
	\be
		D(F_1F_2) = DP_{\C}(F_1F_2) + \sum\limits_{n=0}^{\infty}[DP_{\sD_n}(F_1F_2)+DP_{\bar\sD_n}(F_1F_2)] + \sum\limits_{k=1}^{\infty}DP_{\sC_{\l_k}}\!(F_1F_2)\,.\label{eq:Dsum}
	\ee
	On the other hand, since~\eqref{eq:Dsum} comes from~\eqref{eq:SL2RdecompositionFine}, it follows that the $L^2$ norms of its remainders are $G$-invariant, and so after acting with $g\in G$ on~\eqref{eq:Dsum}, the result converges uniformly on $G$. This shows that~\eqref{eq:productExpansion} converges with respect to any seminorm on $(L^2(\G\@G))^\oo$.
\end{proof}
This lemma allows an almost unrestricted use of the expansion~\eqref{eq:productExpansion} in function correlators.

Finally, we will need the following proposition.
\begin{proposition}\label{prop:delta}
	Let $R$ be $\cD_n$, $\bar\cD_n$, $\cP_s^+$ or $\cC_s$, and $Y$ be, respectively, $\D',\D,\ptl\D$ or $\ptl\D$. Let $z\in Y$ and the delta functional $\de_z:R^\oo\to \C$ be given by $\de_z(f)=f(z)$, where $f\in R^\oo$ is interpreted as a function in the construction of $R$ from section~\ref{sec:reptheory}. Then $\de_z$ is a well-defined continuous linear functional.
\end{proposition}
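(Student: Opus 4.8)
The proof splits naturally according to whether $R$ is a discrete series or a continuous series representation, and the two cases are of quite different character. For the discrete series the functional $\delta_z$ is bounded already on the full Hilbert space $R$, whereas for the continuous series it is only defined after passing to the smooth vectors; I would treat them separately.

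For the discrete series I would reduce $\delta_z$ to pairing against the reproducing kernel, so that continuity holds on $R$ and hence a fortiori on the finer space $R^\oo$. Concretely, for $R=\bar\cD_n$ with $z\in\D$, recall the kernel $K_z(w)=\tfrac{2n-1}{\pi}(1-z^*w)^{-2n}$ from the proof of Proposition~\ref{prop:coherent}. Since $|1-z^*w|\geq 1-|z|>0$ for $|w|\leq 1$ and the weight $(1-|w|^2)^{2n-2}$ is integrable for $n\geq 1$, one checks $K_z\in\bar\cD_n$, and the reproducing identity~\eqref{eq:reproduceid} gives $\delta_z(f)=f(z)=(K_z,f)$. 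By Cauchy--Schwarz, $|\delta_z(f)|\leq \|K_z\|_{\bar\cD_n}\|f\|_{\bar\cD_n}$, so $\delta_z$ is a bounded, hence continuous, linear functional. For $R=\cD_n$ with $z\in\D'$ I would transport this statement through the unitary map $I$ of Section~\ref{sec:reptheory}: since $I$ identifies $\cD_n$ with the space of anti-holomorphic functions on $\D$ carrying a conjugate reproducing kernel, and $(If)(z)=z^{-2n}f((z^*)^{-1})$, point evaluation at $z\in\D'$ is again given by an inner product against a fixed element of $\cD_n$, hence bounded.

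For the continuous series the situation is genuinely different, and this is where the main point lies: $\delta_z$ is \emph{not} continuous, indeed not even well-defined, on the Hilbert space $R$, since its vectors are only equivalence classes of $L^2(\ptl\D)$ functions. The resolution is to use the smooth vectors. I would invoke the identification $R^\oo=C^\oo(\ptl\D)$ of footnote~\ref{foot:smoothvectors}, understood as an isomorphism of topological vector spaces, where $C^\oo(\ptl\D)$ carries its usual Fr\'echet topology of uniform convergence of all derivatives. On this space point evaluation is manifestly continuous: the estimate $|\delta_z(f)|=|f(z)|\leq \sup_{|w|=1}|f(w)|$ bounds $\delta_z$ by the zeroth-order seminorm, which is one of the defining seminorms of the $C^\oo(\ptl\D)$ topology. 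This disposes of both $\cP^+_s$ and $\cC_s$ simultaneously, since they share the same space of smooth vectors.

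The one step requiring care, and the main obstacle, is precisely the identification $R^\oo=C^\oo(\ptl\D)$ in the continuous-series case. The inclusion $C^\oo(\ptl\D)\subseteq R^\oo$ is easy, as the Lie algebra acts by smooth differential operators preserving $C^\oo(\ptl\D)$; the reverse inclusion, that every smooth vector is represented by a genuinely smooth function and that the two topologies coincide, is the substantive input. Since this identification is recorded in footnote~\ref{foot:smoothvectors} and was already used to define the continuous-series coherent states in Section~\ref{sec:continuouscoherent}, I would simply cite it. Granting it, the continuity of $\delta_z$ is immediate as above, completing all four cases.
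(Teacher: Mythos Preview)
Your proposal is correct and follows essentially the same approach as the paper: for the discrete series you use the reproducing kernel to exhibit $\delta_z$ as an inner product on the full Hilbert space, and for the continuous series you invoke the identification $R^\oo=C^\oo(\ptl\D)$ and bound point evaluation by the sup-norm seminorm. The paper's proof is terser but structurally identical, making the same split and citing the same two mechanisms.
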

\begin{proof}
For all these representations, $R^\oo$ contains only smooth (or even holomorphic) functions, and evaluation at a point is a continuous functional (delta functional) on $R^\oo$. Indeed, $\cD_n$ and $\bar\cD_n$ consist of holomorphic functions and the delta functional is given by inner product with the reproducing kernel, and so is well-defined and continuous even without restricting to smooth vectors. For principal and complementary series $(\cP^+_s)^\oo=\cC_s^\oo=C^\oo(\ptl\D)$, with the standard topology of uniform convergence of all derivatives.\footnote{For principal series, see~\cite{BorelWallach} Corollary III.7.9. For complementary series, we couldn't find a proof in the literature, and so we sketch an elementary argument in appendix~\ref{app:complementarysmooth}.\label{foot:smoothvectors}} Again, the delta-functional is well-defined and continuous.
\end{proof}

\subsection{Correlators of coherent states}

In this subsection we discuss the properties of the correlators of the coherent states in $\sD_n$ and $\bar\sD_n$. First of all, we note that Proposition~\ref{prop:continuous}, Proposition~\ref{prop:coherent}, and the continuity of $\tau_n,\bar\tau_n$ imply that all the correlators below are holomorphic in the arguments $z_i$ of the coherent states when $z_i$ are in their domains.

\subsubsection*{Two-function correlators}

Equations~\eqref{eq:coherentDefinition},~\eqref{eq:coherentconj}, unitarity of $\tau_p$ and orthogonality of the decomposition~\eqref{eq:SL2RdecompositionFine} together with~\eqref{eq:coherentinner} imply
\be\label{eq:twopointnormalized}
\<\mathscr{O}_{n,i}(z_1)\widetilde{\mathscr{O}}_{m,j}(z_2)\>&=z_2^{-2n}(\sO_{m,j}((\bar{z}_2)^{-1}),\sO_{n,i}(z_1))=z_2^{-2n}\de_{m,n}\de_{i,j}(\cO((\bar{z}_2)^{-1}),\cO(z_1))\nn\\
&=\frac{\de_{m,n}\de_{i,j}}{(z_1-z_2)^{2n}}.
\ee

\subsubsection*{Three-function correlators}

$G$-invariance of the  three-functions correlators implies that
\be
	\<(u\.\sO_{k,a}(z_1))(u\.\sO_{l,b}(z_2))(u\.\widetilde\sO_{m,c}(z_3))\>=\<\sO_{k,a}(z_1)\sO_{l,b}(z_2)\widetilde\sO_{m,c}(z_3)\>
\ee
for any $u\in\mathrm{PSU}(1,1)$. Evaluating the left-hand side using~\eqref{eq:coherentaction} we obtain a non-trivial constraint on $\<\sO_{k,a}(z_1)\sO_{l,b}(z_2)\widetilde\sO_{m,c}(z_3)\>$ as a function of $z_1,z_2,z_3$,
\be
&\<\sO_{k,a}(z_1)\sO_{l,b}(z_2)\widetilde\sO_{m,c}(z_3)\>\nn\\
&=(\bar{\b}z_1+\bar{\a})^{-2k}(\bar{\b}z_2+\bar{\a})^{-2l}(\bar{\b}z_3+\bar{\a})^{-2m}\<\sO_{k,a}(u\.z_1)\sO_{l,b}(u\.z_2)\widetilde\sO_{m,c}(u\.z_3)\>,\label{eq:3ptPSU}
\ee
where $\a$ and $\b$ are defined via~\eqref{eq:uparameters}.
A standard observation is that
\be
	u\.z_i-u\.z_j=(\bar{\b}z_i+\bar{\a})^{-1}(\bar{\b}z_j+\bar{\a})^{-1}(z_i-z_j),
\ee
which shows that the following expression satisfies~\eqref{eq:3ptPSU}
 \be\label{eq:threeptcoherentgeneral}
\<\sO_{k,a}(z_1)\sO_{l,b}(z_2)\widetilde\sO_{m,c}(z_3)\>=\frac{f_{klm;abc}}{z_{12}^{k+l-m}z_{13}^{k+m-l}z_{23}^{l+m-k}},
\ee
where $f_{klm;abc}\in \C$ is a constant. In fact, this is the unique form which is both holomorphic in $z_1,z_2,z_3$ and satisfies~\eqref{eq:3ptPSU}. 

To see this, note that holomorphicity allows one to analytically continue~\eqref{eq:3ptPSU} to be valid for $u\in\PSL(2,\C)$ as long as $u\.z_i$ stay in their domains. Given some values of $z_1,z_2,z_3$, one can find $u\in \PSL(2,\C)$ such that $u\.z_1=0,u\.z_2=\half,u\.z_3=2$. Then~\eqref{eq:3ptPSU} expresses $\<\sO_{k,a}(z_1)\sO_{l,b}(z_2)\widetilde\sO_{m,c}(z_3)\>$ in terms of $\<\sO_{k,a}(0)\sO_{l,b}(\half)\widetilde\sO_{m,c}(2)\>$, which proves that there is indeed a unique solution, up to an overall constant.

Finally, we note  holomorphicity also implies that $f_{klm;abc}$ can only be non-zero for $k+l-m\leq 0$ since otherwise the limit $z_1\to z_2$ of~\eqref{eq:threeptcoherentgeneral} is singular, while the correlator has to be analytic for $z_1,z_2\in \D$.

\subsubsection*{Four-function correlator}

We now consider the four-function correlator
\be
\<\sO_{k,a}(z_1)\sO_{l,b}(z_2)\widetilde\sO_{m,c}(z_3)\widetilde\sO_{n,d}(z_4)\>.
\ee
$G$-invariance and holomorphicity imply the following proposition.
\begin{proposition}\label{prop:fourptred}
	\be\label{eq:fourpt}
	\<\sO_{k,a}(z_1)\sO_{l,b}(z_2)\widetilde\sO_{m,c}(z_3)\widetilde\sO_{n,d}(z_4)\>=
	\frac{1}{z_{12}^{k+l}z_{34}^{m+n}}
	\p{\frac{z_{24}}{z_{14}}}^{k-l}
	\p{\frac{z_{14}}{z_{13}}}^{m-n}g(z),
	\ee
	where $g:\C\@[1,+\oo)\to\C$ is a holomorphic function, $z_{ij}=z_i-z_j$, and $z$ is the cross-ratio
	\be\label{eq:crossratio}
	z=\frac{z_{12}z_{34}}{z_{13}z_{24}}.
	\ee
\end{proposition}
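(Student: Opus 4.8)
The plan is to run the standard conformal-bootstrap reduction of a four-point function to a function of the cross-ratio, but with careful attention to holomorphicity, which is what simultaneously forces the reduction and fixes the analyticity domain $\C\setminus[1,+\oo)$.

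First I would record the covariance of the correlator. Since $\<F_1\cdots F_4\>=P_{\mathbb{C}}(F_1\cdots F_4)$ and the pointwise product is $G$-equivariant, the correlator is invariant under the simultaneous action of any $u\in\mathrm{PSU}(1,1)$ on all four coherent states. Feeding in the transformation law \eqref{eq:coherentaction}, which holds verbatim for both $\sO$ and $\bar\sO$, yields
\be
G(u\.z_1,u\.z_2,u\.z_3,u\.z_4)=\prod_{i}(\b^*z_i+\a^*)^{2n_i}\,G(z_1,z_2,z_3,z_4),
\ee
with $(n_1,n_2,n_3,n_4)=(k,l,m,n)$ and $G$ the correlator. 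Next I would check that the prefactor in \eqref{eq:fourpt} carries exactly the same covariance. Using $|\a|^2-|\b|^2=1$ one finds $z_{ij}\mapsto z_{ij}/[(\b^*z_i+\a^*)(\b^*z_j+\a^*)]$, so each $z_{ij}^{\,p}$ contributes $(\b^*z_i+\a^*)^{-p}(\b^*z_j+\a^*)^{-p}$; tallying the exponent of $(\b^*z_i+\a^*)$ across $z_{12}^{-(k+l)}z_{34}^{-(m+n)}z_{24}^{k-l}z_{13}^{-(m-n)}z_{14}^{-k+l+m-n}$ gives precisely $2k,2l,2m,2n$, matching the correlator, while the cross-ratio $z$ is manifestly invariant. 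Hence $g:=G/(\text{prefactor})$ is $\mathrm{PSU}(1,1)$-invariant, and it is holomorphic wherever defined because the correlator is holomorphic in each $z_i$ (as established from Propositions~\ref{prop:continuous} and~\ref{prop:coherent}) and the prefactor is holomorphic and nonvanishing away from coincidences.

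The key step is to upgrade invariance under the real group $\mathrm{PSU}(1,1)$ to dependence on the cross-ratio alone. Invariance under the three-real-dimensional group by itself only cuts configuration space down to a five-real-dimensional quotient, so real invariance is not enough. Here holomorphicity does the work: $\mathrm{PSU}(1,1)$ is a maximal totally real submanifold of $G_\C=\PSL(2,\C)$, and a holomorphic function invariant under it is, by analytic continuation (the identity theorem), invariant under all of $\PSL(2,\C)$. Since the cross-ratio is a complete $\PSL(2,\C)$-invariant of four distinct points, $g$ descends to a holomorphic function of $z$ alone, giving \eqref{eq:fourpt}.

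Finally I would identify the analyticity domain, which is the one genuinely nonstandard point. Because $z_1,z_2\in\D$ and $z_3,z_4\in\D'$ never coincide, the differences $z_{13},z_{14},z_{23},z_{24}$ never vanish, which rules out $z=1$ and $z=\oo$ via $1-z=z_{14}z_{23}/(z_{13}z_{24})$. To exclude the open ray $(1,+\oo)$ I would use that $z$ is real exactly when the four points are concyclic: the region constraint forces $\{z_1,z_2\}$ and $\{z_3,z_4\}$ onto the two complementary arcs into which $\ptl\D$ cuts that circle, so the pairs are unlinked, and unlinked concyclic configurations have cross-ratio in $(-\oo,1)$ rather than $(1,+\oo)$. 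Thus the image of the cross-ratio map lies in $\C\setminus[1,+\oo)$ and $g$ is holomorphic there. I expect this last point — pinning the domain to the cut plane $\C\setminus[1,+\oo)$, and confirming that $g$ extends holomorphically across the whole cut plane rather than merely on the open image of the cross-ratio map — to be the main obstacle; the extension to the full cut plane is ultimately secured by the convergent channel expansions of Theorem~\ref{theorem:main}.
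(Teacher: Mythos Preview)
Your approach is essentially the paper's: both promote real $\mathrm{PSU}(1,1)$-covariance to (local) $\PSL(2,\C)$-covariance via holomorphicity of the coherent states, and then reduce to the cross-ratio. The paper phrases this as analytic continuation of the correlator to all configurations where a circle separates $\{z_1,z_2\}$ from $\{z_3,z_4\}$, followed by the gauge choice $(z_1,z_2,z_3,z_4)=(0,z,1,\infty)$; you phrase it as showing the stripped function $g$ is $\PSL(2,\C)$-invariant. Same content.

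The one real issue is your last paragraph. Your concern that the cross-ratio image of $\D^2\times(\D')^2$ might be a proper open subset of $\C\setminus[1,+\infty)$ is unfounded, and your proposed fix via Theorem~\ref{theorem:main} would be circular (that theorem is about the very $g$ this proposition defines). In fact surjectivity follows immediately from the $\PSL(2,\C)$ picture you already have: since $\PSL(2,\C)$ acts transitively on oriented circles in $\hat\C$, every separating-circle configuration is $\PSL(2,\C)$-equivalent to one in $\D^2\times(\D')^2$, and the cross-ratio is invariant. So it suffices to note, as the paper does, that $(0,z,1,\infty)$ admits a separating circle iff $1\notin[0,z]$, i.e.\ iff $z\notin[1,+\infty)$; one direction is your unlinked-concyclic argument, and the other is that any open disk containing $0$ and $z$ but not $1$ does the job (such a disk exists precisely when $1\notin[0,z]$). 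That closes the gap without any appeal to the channel expansions.
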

\begin{proof}
	The argument is similar to the one for three-function correlators given above, so we will be schematic. Holomorphicity of $\sO$ and $\widetilde\sO$ allows us to extend $\mathrm{PSU}(1,1)\simeq \PSL(2,\R)$ invariance to $\PSL(2,\C)$ invariance as long as $z_i$ remain in their domains under the transformation. But then this ``local'' $\PSL(2,\C)$ invariance lets us analytically continue the correlator to all configurations of $z_i\in \C\cup\{\oo\}$ in which there is a circle separating $z_1,z_2$ from $z_3,z_4$. After this continuation the correlator is fully $\PSL(2,\C)$-invariant. 
The function $g(z)$ can be obtained by setting $z_1=0,z_2=z,z_3=1,z_4=\oo$,\footnote{\label{footnote:infinity}By the definition of $\widetilde\cO(z)\in \cD_n$, setting $z\to\oo$ is to be understood as the limit $z^{2n}\widetilde\cO(z)$ as $z\to\oo$.} which by the previous condition is possible for $z\not\in[1,+\oo)$. Conversely, any allowed configuration of $z_i$ can be mapped by $\PSL(2,\C)$ to a configuration of such form, which establishes the result.
\end{proof}

\subsection{Product expansion of coherent states}

In this subsection, we will use the technology developed so far to constrain the product expansion of coherent states.

\subsubsection*{Expansion of $\sO_n(z_1)\sO_n(z_2)$} We will prove the following

\begin{lemma}\label{lemma:DDselection}
		We have $P_H(\sO_n(z_1)\sO_n(z_2))=0$ unless $H=\sD_p$ with $p$ even and $p\geq 2n$.
\end{lemma}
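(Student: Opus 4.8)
The plan is to prove two selection rules: that $P_{\sD_p}(\sO_n(z_1)\sO_n(z_2))$ can only be nonzero when $p$ is \emph{even}, and when $p \geq 2n$. First I would establish that $\sO_n(z_1)\sO_n(z_2)$ can only project onto holomorphic discrete series summands $\sD_p$, by ruling out all other summand types in \eqref{eq:SL2RdecompositionFine}. The cleanest route is via the two-function correlator: for any target summand $H$, the projection $P_H(\sO_n(z_1)\sO_n(z_2))$ is nonzero if and only if some correlator $\<\sO_n(z_1)\sO_n(z_2)\,\Psi\>$ is nonzero with $\Psi$ ranging over $\bar H^\infty$ (since $P_{\mathbb{C}}$ computes the invariant pairing and the decomposition is orthogonal). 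Pairing against the trivial representation gives $\<\sO_n(z_1)\sO_n(z_2)\>$, which vanishes by the one- and two-point results already recorded. Pairing against $\bar\sD_p$ or $\sC_{\l_k}$ reduces to a three-function correlator of the form $\<\sO\sO\,\bar\sO\>$ or $\<\sO\sO\,\mathbb{O}\>$; the key constraint is holomorphicity in $z_1,z_2\in\D$ together with the $G$-invariant functional form. In particular, a copy of $\sD_p$ inside $L^2(\Gamma\backslash G)$ is dual to $\bar\sD_p$, so the relevant nonvanishing pairing is $\<\sO_n(z_1)\sO_n(z_2)\bar\sO_{p,a}(z_3)\>$, which is exactly the three-function correlator~\eqref{eq:threeptcoherentgeneral} specialized to $k=l=n$, $m=p$.

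Next I would extract both numerical selection rules directly from that three-function correlator. Setting $k=l=n$, $m=p$ in~\eqref{eq:threeptcoherentgeneral} gives a structure $\propto z_{12}^{p-2n} z_{13}^{-p} z_{23}^{-p}$. The remark immediately following~\eqref{eq:threeptcoherentgeneral} already supplies the $z_{12}$-analyticity argument: the correlator must be analytic as $z_1\to z_2$ with both in $\D$ (this is the crucial analyticity property of coherent states emphasized after Theorem~\ref{theorem:main}), which forces the exponent $p-2n\geq 0$, i.e.\ $p\geq 2n$. This gives the lower bound on $p$. For the parity constraint, I would invoke the manifest symmetry of the product $\sO_n(z_1)\sO_n(z_2)$ under $z_1\leftrightarrow z_2$: since $\sO_{n,a}(z)=\sO_{n,1}(z)$ for a single chosen form, the left-hand side is symmetric, so the projection $P_{\sD_p}(\sO_n(z_1)\sO_n(z_2))$ must likewise be symmetric. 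Exchanging $z_1\leftrightarrow z_2$ in the correlator multiplies it by $(-1)^{p-2n}=(-1)^p$, so a nonzero symmetric contribution requires $p$ even. This is essentially the parity remark already made after the displayed $\sO\sO$ expansion.

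The main obstacle, and the step requiring the most care, is the first one: rigorously excluding \emph{all} summand types other than $\sD_p$, rather than merely exhibiting the nonvanishing case. Concretely, I must argue that $P_{\bar\sD_p}(\sO_n(z_1)\sO_n(z_2))=0$ and $P_{\sC_{\l_k}}(\sO_n(z_1)\sO_n(z_2))=0$ for every $p,k$. For $\bar\sD_p$, the dual pairing involves $\<\sO_n(z_1)\sO_n(z_2)\,\sO_{p,a}(z_3)\>$, a correlator of three holomorphic coherent states with no antiholomorphic insertion; $G$-invariance forces a functional form whose total homogeneity degree cannot be matched while keeping holomorphicity in all three disk variables, so it must vanish. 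For the continuous series, the pairing is against the continuous-series coherent state $\mathbb{O}_{k,a}$, a distribution on $\ptl\D$; here I would combine Proposition~\ref{prop:delta} with the $G$-invariant form of $\<\sO\sO\,\mathbb{O}\>$ and argue that the required covariance under $G$ is incompatible with the holomorphicity of $\sO_n(z_1)\sO_n(z_2)$ in $\D$, again via an analyticity/homogeneity mismatch. Making these vanishing statements fully rigorous—rather than relying on the heuristic ``unique $G$-invariant form'' counting—is where the weight of the proof lies; the numerical selection rules $p$ even and $p\geq 2n$ then follow almost immediately from the single surviving correlator.
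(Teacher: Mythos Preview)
Your proposal is correct and follows essentially the same strategy as the paper: write down the unique $G$-invariant functional form for the projection onto each summand type (trivial, $\sC_{\l_k}$, $\bar\sD_p$, $\sD_p$), and eliminate all but $\sD_p$ with $p\geq 2n$ even by checking where that form fails the required holomorphicity on the relevant domain. The only packaging difference is that the paper accesses each projection via $\de_{z_3}\circ\pi$ on the abstract model (Proposition~\ref{prop:delta}) rather than via correlators against coherent states in the dual, and it pinpoints the obstruction in each excluded case as a concrete singularity (at $z_1=z_2$ for $H=\C$ and $H=\sC_{\l_k}$, at $z_1=z_3$ for $H=\bar\sD_p$) rather than a ``homogeneity mismatch''---the invariant form does exist, it is just singular.
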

This lemma implies that the product $\sO_n(z_1)\sO_n(z_2)$ has the following expansion
\be\label{eq:preOPEDD}
	\sO_n(z_1)\sO_n(z_2)=\sum_{p=2n\atop p\text{ even}}^{\oo}P_{\sD_p}(\sO_n(z_1)\sO_n(z_2)).
\ee

Using $G$-invariance we can say more about the individual terms. In particular, we have
\begin{lemma}\label{lemma:DDOPE}
		This expansion takes the form
		\be
		\sO_n(z_1)\sO_n(z_2)=\sum_{p=2n\atop p\text{ even}}^{\oo}\sum_{a=1}^{\ell_p}f_{p,a}\tau_p(e_a\otimes C_{p}(z_1,z_2))
		\ee
		where $f_{p,a}$ are as in~\eqref{eq:threePointFunction} and $C_{p}(z_1,z_2)\in \cD_p$ is defined by
		\be
		C_{p}(z_1,z_2)(z)=\sqrt{\frac{2p-1}{\pi}}\frac{1}{z_{12}^{2n-p}z_{13}^pz_{23}^p}.
		\ee
		Equation~\eqref{eq:productOO} follows by expanding in small $z_1-z_2$ and comparing with the definition of the coherent state~\eqref{eq:coherent}.
	\end{lemma}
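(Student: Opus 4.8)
The plan is to combine Lemma~\ref{lemma:DDselection}, which restricts the sum to even $p\geq 2n$, with the upper bound $\dim\Hom_G(\cD_n^\oo\potimes\cD_n^\oo,\cD_p)\leq 1$ from the $\Hom_G$ analysis earlier in this section. That analysis shows $P_{\sD_p}(\sO_n(z_1)\sO_n(z_2))$ is the value of a continuous $G$-equivariant map $C\in\Hom_G(\sD_n^\oo\potimes\sD_n^\oo,\sD_p)$. Writing $\sD_n\cong\C^{\ell_n}\otimes\cD_n$ and $\sD_p\cong\C^{\ell_p}\otimes\cD_p$ through the unitary isomorphisms $\tau_n,\tau_p$ of~\eqref{eq:coherentDefinition}, and using that $G$ acts trivially on the multiplicity factors, $C$ factors as $T^{(p)}\otimes m_p$, with $T^{(p)}:\C^{\ell_n}\otimes\C^{\ell_n}\to\C^{\ell_p}$ and $m_p$ the essentially unique intertwiner $\cD_n^\oo\potimes\cD_n^\oo\to\cD_p$. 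Evaluating on $\sO_n(z_i)=\tau_n(e_1\otimes\cO(z_i))$ then gives $P_{\sD_p}(\sO_n(z_1)\sO_n(z_2))=\sum_{a=1}^{\ell_p}f_{p,a}\,\tau_p(e_a\otimes\Phi_p(z_1,z_2))$ for constants $f_{p,a}$, where $\Phi_p(z_1,z_2)=m_p(\cO(z_1)\otimes\cO(z_2))\in\cD_p$.

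The next step is to identify $\Phi_p$ with $C_p$ by exhibiting the latter and checking covariance. First, $C_p(z_1,z_2)\in\cD_p$: as a function of $z$ its only singularities sit at $z_1,z_2\in\D$, so it is holomorphic on $\D'$ and decays like $z^{-2p}$ at infinity, meeting the growth condition defining $\cD_p$. Second, I would verify the intertwining relation
\[
(\b^* z_1+\a^*)^{-2n}(\b^* z_2+\a^*)^{-2n}\,C_p(u\cdot z_1,u\cdot z_2)=u\cdot C_p(z_1,z_2),
\]
whose left factors come from transforming each $\cO(z_i)$ via~\eqref{eq:coherentaction} and whose right-hand side is the $\cD_p$-action~\eqref{eq:Dbaction}. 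This reduces to the two elementary $\mathrm{PSU}(1,1)$ identities
\[
u\cdot z_1-u\cdot z_2=\frac{z_1-z_2}{(\b^* z_1+\a^*)(\b^* z_2+\a^*)},\qquad z_i-u^{-1}\cdot z=\frac{(\b^* z_i+\a^*)\,(u\cdot z_i-z)}{-\b^* z+\a},
\]
which, substituted into the defining formula for $C_p$, make both sides agree after cancelling the Jacobian factors $(-\b^* z+\a)^{\pm 2p}$. Hence $C_p$ generates a nonzero element of $\Hom_G(\cD_n^\oo\potimes\cD_n^\oo,\cD_p)$; by density of coherent states (Proposition~\ref{prop:coherent}), continuity, and the one-dimensional bound, $\Phi_p$ is a constant multiple of $C_p$, and the constant is absorbed into $f_{p,a}$. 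As a check, $C_p$ is symmetric under $z_1\leftrightarrow z_2$ exactly when $p$ is even, matching Lemma~\ref{lemma:DDselection}.

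To pin down the constants and recover~\eqref{eq:productOO}, I would pair both sides with $\bar\sO_{p,a}(z_3)$. Exactly as in~\eqref{eq:twopointnormalized}, the conjugation relation~\eqref{eq:coherentconj} together with the reproducing-kernel identity for $\cD_p$ reduces $\<\tau_p(e_b\otimes\psi)\,\bar\sO_{p,a}(z_3)\>$ to a multiple of the evaluation $\psi(z_3)$, which is legitimate because $z_3\in\D'$ and $\psi\in\cD_p^\oo$ is holomorphic there (Proposition~\ref{prop:delta}). Taking $\psi=C_p(z_1,z_2)$ reproduces the three-function correlator $\propto z_{12}^{p-2n}z_{13}^{-p}z_{23}^{-p}$, matching~\eqref{eq:threePointFunction} and identifying the coefficients with the structure constants $f_{p,a}$ defined there. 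Finally, expanding $C_p(z_1,z_2)(z)$ about $z_1=z_2$ via $(z_1-z)^{-p}=\sum_{m}\binom{-p}{m}(z_1-z_2)^m(z_2-z)^{-p-m}$ and rewriting each $(z_2-z)^{-2p-m}$ through $\partial_{z_2}^m(z_2-z)^{-2p}$, hence through $L_{-1}^m\cdot\cO(z_2)$ by~\eqref{eq:coherentactionLs}, collects the Pochhammer coefficients $\tfrac{(p)_m}{m!\,(2p)_m}$ and yields~\eqref{eq:productOO}; the series converges in $\cD_p^\oo$ by the holomorphicity in Proposition~\ref{prop:coherent}.

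The main obstacle is the input $\dim\Hom_G(\cD_n^\oo\potimes\cD_n^\oo,\cD_p)\leq 1$, on which all the rigidity of the $(z_1,z_2)$-dependence rests; I would take it from the analysis accompanying Lemma~\ref{lemma:DDselection}, where the projective tensor product and the passage to smooth vectors are set up. Granting this, the two points requiring genuine care are the abstract-to-concrete passage — that a $G$-intertwiner is determined by its values on the dense but incomplete span of coherent states, which follows from continuity of $P_{\sD_p}$ and $\tau_p$ together with Proposition~\ref{prop:coherent} — and the justification of the termwise $z_1-z_2$ expansion, which is controlled by the $C^\oo(\G\@G)$-convergence of the product expansion proved earlier in this section. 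The covariance computation and the Pochhammer bookkeeping, while the technical core, are then entirely routine.
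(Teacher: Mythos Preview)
Your proposal is correct and rests on the same ingredients as the paper, but the organization differs. The paper proceeds more directly: writing $P_{\sD_p}(\sO_n(z_1)\sO_n(z_2))=\tau_p\bigl(\sum_a e_a\otimes h_a\bigr)$, it uses the reproducing-kernel identity for $\cD_p$ to compute $h_a(z_3)=\sqrt{\tfrac{2n-1}{\pi}}\,\langle\sO_n(z_1)\sO_n(z_2)\bar\sO_{p,a}(z_3)\rangle$ immediately, and then $G$-invariance of this three-function correlator yields both the functional form $C_p(z_1,z_2)$ and the constant $f_{p,a}$ in one stroke. You instead front-load the abstract $\Hom_G$ one-dimensionality, exhibit $C_p$ as a candidate intertwiner by a direct covariance check, and only afterwards pair with $\bar\sO_{p,a}$ to fix the normalization. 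The paper's route avoids the separate covariance verification and the density/continuity argument, since the reproducing kernel already evaluates $h_a$ pointwise; your route makes the representation-theoretic rigidity more explicit and also spells out the Pochhammer expansion for~\eqref{eq:productOO}, which the paper leaves as a remark. Neither approach has a gap.
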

\begin{proof}[Proof of Lemma~\ref{lemma:DDselection}]
	Consider first $P_\C(\sO_n(z_1)\sO_n(z_2))$. This gives us a holomorphic $G$-invariant function of $z_1,z_2\in \D$. The only such function is (locally)
	\be
	P_\C(\sO_n(z_1)\sO_n(z_2))=A (z_1-z_2)^{-2n}
	\ee
	for some $A\in \C$. However, for $n>0$ this is singular at $z_1=z_2$ unless $A=0$. We conclude $P_\C(\sO_n(z_1)\sO_n(z_2))=0$.
	
	Consider now $P_{\sC_{\l_k}}(\sO_n(z_1)\sO_n(z_2))$ for some $k$, and let $\pi:\sC_{\l_k}\to \cP_{s}^+$ or $\pi:\sC_{\l_k}\to \cC_{s}$ (depending on $\l_k$) be any non-trivial homomorphism. By Proposition~\ref{prop:delta} and the discussion preceding it, 
	\be
	\de_{z_3}(\pi(P_{\sC_{\l_k}}(\sO_n(z_1)\sO_n(z_2))))
	\ee
	is a holomorphic function of $z_1,z_2\in \D$ and a smooth function of $z_3$ that is also $G$-invariant. The only possible (local) form is 
	\be
	\frac{A z_3^{\De_k}}{(z_1-z_2)^{2n-{\De_k}}(z_1-z_3)^{\De_k}(z_2-z_3)^{\De_k}}.
	\ee
	However, this is again singular at $z_1=z_2$ unless $A=0$. So $\pi(P_{\sC_{\l_k}}(\sO_n(z_1)\sO_n(z_2)))$ is zero for any $\pi$, and thus
	$P_{\sC_{\l_k}}(\sO_n(z_1)\sO_n(z_2))=0$.
	
	Consider $P_{\bar\sD_{p}}(\sO_n(z_1)\sO_n(z_2))$ and any $\pi:\bar\sD_{p}\to\bar\cD_n$.
	We now have to consider a holomorphic function of $z_1,z_2,z_3\in\D$. The only possibility is
	\be
	\de_{z_3}(\pi(P_{\bar\sD_n}(\sO_n(z_1)\sO_n(z_2))))=\frac{A}{(z_1-z_2)^{2n-p}(z_1-z_3)^{p}(z_2-z_3)^{p}}.
	\ee
	But this is singular at $z_1=z_3$ unless $A=0$. We conclude $P_{\bar\sD_{p}}(\sO_n(z_1)\sO_n(z_2))=0$.
	
	Consider finally $P_{\sD_{p}}(\sO_n(z_1)\sO_n(z_2))$ and any $\pi:\sD_{p}\to\cD_n$.
	We now have to consider a holomorphic function of $z_1,z_2\in\D$ and $z_3\in \D'$. The only possibility is
	\be
	\de_{z_3}(\pi(P_{\sD_n}(\sO_n(z_1)\sO_n(z_2))))=\frac{A}{(z_1-z_2)^{2n-p}(z_1-z_3)^{p}(z_2-z_3)^{p}}.
	\ee
	This has the required regularity and invariance properties provided $p\geq 2n$. However, for odd $p$ the right-hand side is odd under the permutation $z_1\leftrightarrow z_2$ (unless $A=0$), while the left-hand side is even.
	We conclude $P_{\sD_{p}}(\sO_n(z_1)\sO_n(z_2))=0$ unless $p\geq 2n$ and $p$ is even.
\end{proof}

\begin{proof}[Proof of Lemma~\ref{lemma:DDOPE}]
	Consider now a single term $\psi_p=P_{\sD_p}(\sO_n(z_1)\sO_n(z_2))\in \sD_p$ in~\eqref{eq:preOPEDD}. Since $\sD_p\simeq \C^{\ell_p}\otimes \cD_p$, there exist vectors $h_1,\cdots, h_{\ell_p}\in \cD_p$, uniquely determined, such that
	\be
	\psi_p=\tau_p\p{\sum_{a=1}^{\ell_p}e_a\otimes h_a}.
	\ee
 	Recall that $\cD_p$ is a reproducing kernel Hilbert space, and $\cO(z)$ is essentially the reproducing kernel:
	\be
	\p{(\bar{w})^{-2p}\cO((\bar{w})^{-1}),h_a}=\sqrt{\frac{\pi}{2p-1}}h_a(w).
	\ee
	Unitarity of $\tau_p$ and the definition~\eqref{eq:coherentDefinition} then imply
	\be
	h_a(w)=\sqrt{\frac{2p-1}{\pi}}\p{(\bar{w})^{-2p}\sO_{p,a}((\bar{w})^{-1}),\sO_n(z_1)\sO_n(z_2)}.
	\ee
	Using~\eqref{eq:coherentconj} we find
	\be
	h_a(z_3)=\sqrt{\frac{2p-1}{\pi}}\<\sO_n(z_1)\sO_n(z_2)\widetilde\sO_{p,a}(z_3)\>.
	\ee
	$G$-invariance and holomorphicity of coherent states imply that~\eqref{eq:threePointFunction} holds for some constants $f_{p,a}$, so we conclude
	\be
	\psi_p=\sum_{a=1}^{\ell_p}f_{p,a}\tau_p(e_a\otimes C_{p}(z_1,z_2)),
	\ee
	where $C_{p}(z_1,z_2)\in \cD_p$ is defined by
	\be\label{eq:cpExplicit}
	C_{p}(z_1,z_2)(z_3)=\sqrt{\frac{2p-1}{\pi}}\frac{1}{z_{12}^{2n-p}z_{13}^pz_{23}^p}.
	\ee
\end{proof}

\subsubsection*{Expansion of $\sO_n(z_1)\widetilde{\sO}_n(z_2)$} Our discussion of the product $\sO_n(z_1)\widetilde{\sO}_n(z_2)$ has the same structure as above. First, we will prove
\begin{lemma}\label{lemma:DDbarselection}
		We have $P_H(\sO_n(z_1)\widetilde\sO_n(z_2))=0$ unless $H=\C$ or $H=\sC_{\l_k}$.
	\end{lemma}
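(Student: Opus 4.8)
The plan is to mirror, essentially verbatim, the proof of Lemma~\ref{lemma:DDselection}, testing each summand $H$ of the decomposition~\eqref{eq:SL2RdecompositionFine} in turn. Fixing $H$, I would choose any nontrivial $G$-homomorphism $\pi$ from $H$ into the corresponding standard model ($\cD_p$, $\bar\cD_p$, $\cP^+_s$ or $\cC_s$ of Section~\ref{sec:reptheory}) and compose with the delta functional $\delta_{z_3}$ of Proposition~\ref{prop:delta}. Since $\pi(P_H(\sO_n(z_1)\bar\sO_n(z_2)))=0$ for \emph{every} such $\pi$ forces $P_H(\sO_n(z_1)\bar\sO_n(z_2))=0$, it suffices to study the scalar function
\[
\phi(z_1,z_2,z_3)\equiv \delta_{z_3}\!\left(\pi\big(P_H(\sO_n(z_1)\bar\sO_n(z_2))\big)\right),
\]
which by $G$-equivariance of $P_H$ and $\pi$ is $G$-invariant, hence a three-function correlator. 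Exactly as in~\eqref{eq:threeptcoherentgeneral} and the proof of Proposition~\ref{prop:fourptred}, holomorphicity of $\sO_n$ and $\bar\sO_n$ promotes $\SL(2,\R)$-invariance to local $\SL(2,\C)$-invariance, so $\phi$ is rigidly fixed up to a constant to the standard form $\dfrac{A}{z_{12}^{2n-\De}\,z_{13}^{\De}\,z_{23}^{\De}}$, where $\De$ is the weight of the model: $\De=p$ for $H=\sD_p$ or $H=\bar\sD_p$, $\De=0$ for $H=\C$, and $\De=\De_k$ for $H=\sC_{\l_k}$.

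All the content then lies in comparing the forced poles of this rigid form against the domains in which the three variables live. Here $z_1\in\D$ is the domain of $\sO_n$ and $z_2\in\D'$ the domain of $\bar\sO_n$, while $z_3$ ranges over the set $Y$ attached to $H$ by Proposition~\ref{prop:delta}. For $H=\sD_p$ one has $z_3\in\D'$, so the coincidence $z_2=z_3$ \emph{is} reachable and the factor $z_{23}^{-p}$ is singular there; for $H=\bar\sD_p$ one has $z_3\in\D$, so $z_1=z_3$ is reachable and $z_{13}^{-p}$ is singular. In both cases $p\geq 1$, so regularity of $\phi$ at the coincidence (argued below) forces $A=0$, giving $P_{\sD_p}=P_{\bar\sD_p}=0$, which is precisely the assertion of the lemma. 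By contrast, for $H=\C$ the only singular factor sits at $z_1=z_2$, and for $H=\sC_{\l_k}$ the singular factors sit at $z_1=z_3$ and $z_2=z_3$ with $z_3\in\ptl\D$; in each of these channels the would-be singular locus is never attained, because $\D$, $\D'$ and $\ptl\D$ are pairwise disjoint, so no constraint arises and these summands are correctly left unconstrained (the $\C$-channel indeed reproducing the advertised $(z_1-z_2)^{-2n}$).

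The one nontrivial point — and the crux distinguishing the killed channels from the surviving ones — is the regularity of $\phi$ at the reachable coincidence, say $z_2=z_3$ for $H=\sD_p$. I would argue it by keeping the two sources of holomorphy logically separate. For fixed $z_1\in\D$ and $z_2\in\D'$, the vector $\pi(P_{\sD_p}(\sO_n(z_1)\bar\sO_n(z_2)))$ is a genuine element of $\cD_p$, i.e.\ a function holomorphic on all of $\D'$, so its evaluation $\delta_{z_3}$ is finite for \emph{every} $z_3\in\D'$, in particular at $z_3=z_2$; no pole can develop there. Holomorphic dependence on the label $z_2$ is a separate matter, supplied by Proposition~\ref{prop:coherent} together with continuity of $\tau_n,\bar\tau_n$ and of the orthogonal projection $P_{\sD_p}$, while Proposition~\ref{prop:delta} gives continuity of $\delta_{z_3}\circ\pi$. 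Combining these, $\phi$ is jointly holomorphic on $\D\times\D'\times\D'$ with $z_2$ and $z_3$ varying independently, so nothing distinguishes the diagonal $z_2=z_3$ and $\phi$ is regular there; the same argument at $z_1=z_3$ handles $\bar\sD_p$. This is exactly the step that has no analogue in the surviving channels, where the forced pole lives at an \emph{unreachable} coincidence, and the remaining bookkeeping is identical to Lemma~\ref{lemma:DDselection}.
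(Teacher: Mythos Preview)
Your proposal is correct and follows essentially the same approach as the paper: write down the unique $G$-invariant form for $\delta_{z_3}(\pi(P_H(\sO_n(z_1)\bar\sO_n(z_2))))$ in each channel, then observe that for $H=\sD_p$ the coincidence $z_2=z_3\in\D'$ is reachable and for $H=\bar\sD_p$ the coincidence $z_1=z_3\in\D$ is reachable, forcing $A=0$ in both cases. Your more explicit justification of regularity at the reachable coincidence (separating holomorphy in the model variable from holomorphy in the coherent-state label) is a welcome elaboration of what the paper leaves implicit.
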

	The projection onto $H=\C$ is easy to compute. For this we just need the inner product
	\be
	(1,\sO_n(z_1)\widetilde\sO_n(z_2))=\<\sO_n(z_1)\widetilde\sO_n(z_2)\>=\frac{1}{(z_1-z_2)^{2n}}.
	\ee
	Thus we have
	\be\label{eq:presum}
	\sO_n(z_1)\widetilde\sO_n(z_2)=\frac{1}{(z_1-z_2)^{2n}}+\sum_k P_{\sC_{\l_k}}(\sO_n(z_1)\widetilde\sO_n(z_2)).
	\ee
	Above we have abused the notation slightly by writing $c$ for a constant function on $G$ identically equal to $c\in \C$.
	Similarly to Lemma~\ref{lemma:DDOPE}, we can say more about $P_{\sC_{\l_k}}(\sO_n(z_1)\widetilde\sO_n(z_2))$. First, recall that we have
	unitary isomorphisms $\kappa_k:\C^{d_k}\otimes R_k\to \sC_{\l_k}$ where $R_k=\cP_{s_k}^{+}$ or $R_k=\cC_{s_k}$ depending on whether
	$\l_k\geq\tfrac{1}{4}$ or $\l_k<\tfrac{1}{4}$, which are compatible with the standard complex conjugations defined on $\C^{d_k}\otimes R_k$ and $\sC_{\l_k}$. We have
	\begin{lemma}
		\label{lemma:opeDDbar}
		\be
		P_{\sC_{k}}(\sO_n(z_1)\widetilde\sO_n(z_2))=\sum_{a=1}^{d_k}c_{k,a}\kappa_k(e_a\otimes \tl C_{k}(z_1,z_2))
		\ee
		for the real $c_{k,a}$ defined in section~\ref{sec:bootstrapMethod}, and where $\tl C_{k}(z_1,z_2)\in R_k$ is defined by
		\be
		\tl C_{k}(z_1,z_2)(z_0)=\frac{N_{k}z_2^{-2n}}{(1-z_1z_2^{-1})^{2n-\De_k}(1-z_1z_0^{-1})^{\De_k}(1-z_2^{-1}z_0)^{\De_k}}.
		\ee
		Here $N_{k}\in \mathbb{C}$ is a constant such that the constant function equal to $N_{k}$ has unit norm in $R_k$ and is real under the standard complex conjugation in $R_k$. Comparing with the definition of $\mathbb{O}_{k,a}$ in Section~\ref{sec:continuouscoherent} results in the formula~\eqref{eq:sect2oob}.
	\end{lemma}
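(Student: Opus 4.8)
The plan is to follow the proof of Lemma~\ref{lemma:DDOPE} almost verbatim, with the reproducing‑kernel extraction replaced by the delta functional of Proposition~\ref{prop:delta}. Since $\sO_n(z_1)\bar\sO_n(z_2)\in C^\oo(\G\@G)$ and $P_{\sC_k}$ is continuous and $G$-equivariant, the projection lies in $\sC_k^\oo$, so the unitary isomorphism $\kappa_k:\C^{d_k}\otimes R_k\to\sC_{\l_k}$ lets us write
\[
P_{\sC_k}(\sO_n(z_1)\bar\sO_n(z_2))=\kappa_k\Big(\sum_{a=1}^{d_k}e_a\otimes g_a(z_1,z_2)\Big),\qquad g_a(z_1,z_2)\in R_k^\oo=C^\oo(\ptl\D),
\]
and it remains only to identify the $R_k^\oo$-valued functions $g_a(z_1,z_2)$. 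By Proposition~\ref{prop:coherent} the coherent states depend holomorphically on their arguments, and since the pointwise product and $P_{\sC_k}$ are continuous, the left-hand side is holomorphic in $z_1\in\D$ and $z_2\in\D'$ as a $\sC_k^\oo$-valued function; as $\kappa_k$ is an isomorphism of smooth representations, each $g_a$ inherits this holomorphicity. Applying the continuous functional $\de_{z_0}$ of Proposition~\ref{prop:delta}, the scalar function $(z_1,z_2,z_0)\mapsto g_a(z_1,z_2)(z_0)$ is therefore holomorphic in $z_1\in\D$, holomorphic in $z_2\in\D'$, and smooth in $z_0\in\ptl\D$.

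Next I would determine $g_a(z_1,z_2)(z_0)$ by the same reasoning as in the proof of Lemma~\ref{lemma:DDselection}. Because $P_{\sC_k}$ and $\kappa_k$ commute with $G$ while $\sO_n,\bar\sO_n$ transform by~\eqref{eq:coherentaction} and $R_k$ acts by~\eqref{eq:ContAction}, the function $g_a(z_1,z_2)(z_0)$ is a $G$-covariant three-point kinematic structure built from a weight-$n$ insertion at $z_1\in\D$, a weight-$n$ insertion at $z_2\in\D'$, and a dimension-$\De_k$ continuous-series insertion at $z_0\in\ptl\D$. The unique local form compatible with these weights and analyticity is
\[
g_a(z_1,z_2)(z_0)=\frac{A_a\,z_2^{-2n}}{(1-z_1z_2^{-1})^{2n-\De_k}(1-z_1z_0^{-1})^{\De_k}(1-z_2^{-1}z_0)^{\De_k}}
\]
for some constant $A_a\in\C$. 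Unlike in Lemma~\ref{lemma:DDselection}, this expression is regular throughout $z_1\in\D,\,z_2\in\D',\,z_0\in\ptl\D$ (the factors $1-z_1z_0^{-1}$ and $1-z_2^{-1}z_0$ never vanish there), so no selection rule forces $A_a=0$, and the expression is manifestly a smooth function of $z_0$, i.e.\ an element of $R_k^\oo$, as required. Writing $A_a=N_{\De_k}c_{k,a}$ yields the claimed $\tl C_k(z_1,z_2)$.

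It then remains to check that the constants $c_{k,a}$ so defined coincide with those of Section~\ref{sec:correlators} and are real. For this I would take the limit $z_1\to0$, $z_2\to\oo$ (the latter understood as $z_2^{2n}\bar\sO_n(z_2)$), under which $\tl C_k(z_1,z_2)\to N_{\De_k}\cdot 1$, so that $P_{\sC_k}(\sO_n(0)\bar\sO_n(\oo))=\sum_a c_{k,a}\,\kappa_k(e_a\otimes N_{\De_k}\cdot 1)=\sum_a c_{k,a}\widetilde F_{k,a}$, since $\widetilde F_{k,a}=\int_{|z|=1}|dz|\,\mathbb O_{k,a}(z)=\kappa_k(e_a\otimes N_{\De_k}\cdot 1)$. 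Comparing with~\eqref{eq:productOObSimp} and using orthonormality of the $\widetilde F_{k,a}$ recovers $c_{k,a}=\<\sO_n(0)\bar\sO_n(\oo)\widetilde F_{k,a}\>$, exactly the structure constant of Section~\ref{sec:correlators}, whose reality was established there. Substituting this back into~\eqref{eq:presum} produces~\eqref{eq:sect2oob}.

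The main obstacle is the determination of the covariant form in the second step. The continuous-series action~\eqref{eq:ContAction} carries modulus factors $|\cdots|^{-2\De}$, so a priori a $G$-covariant three-point structure could depend on $\bar z_1$ and $\bar z_2$ through factors such as $|1-z_1z_0^{-1}|^{\cdots}$; the crucial input that removes this ambiguity and selects the purely holomorphic answer is the holomorphicity of the coherent states in Proposition~\ref{prop:coherent}, which forces $g_a(z_1,z_2)(z_0)$ to be holomorphic in $z_1$ and $z_2$. Making the solution of the covariance constraints rigorous — rather than merely reading off the ``obvious'' conformally covariant answer — is the one place where genuine care is needed, and I would handle it by analytically continuing and mapping three generic points to a standard configuration by an element of $\SL(2,\C)$, exactly as in the proof of Proposition~\ref{prop:fourptred}.
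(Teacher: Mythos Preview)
Your proposal is correct and follows essentially the same route as the paper: decompose via $\kappa_k$, evaluate against $\de_{z_0}$, and invoke $G$-covariance plus holomorphicity (exactly as in the proof of Lemma~\ref{lemma:DDbarselection}) to pin down the three-point structure up to constants. The only genuine difference is in how reality of $c_{k,a}$ is obtained: the paper uses that $\kappa_k$ was chosen to intertwine the complex conjugations on $\C^{d_k}\otimes R_k$ and on $\sC_{\l_k}$, so that conjugating both sides forces $c_{k,a}\in\R$ directly; you instead pass to the limit $z_1\to0$, $z_2\to\oo$ to identify $c_{k,a}$ with the triple product integral of Section~\ref{sec:correlators} and import reality from there. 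Both arguments are valid, and yours has the small advantage of simultaneously verifying that these constants coincide with the ones introduced earlier, which the paper's proof leaves implicit.
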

	
\begin{proof}[Proof of Lemma~\ref{lemma:DDbarselection}]
	The proof is completely analogous to the proof of Lemma~\ref{lemma:DDselection}, so we will be brief. We must have
	\be
	P_{\C}(\sO_n(z_1)\widetilde\sO_n(z_2))&=A(z_1-z_2)^{-2n},\label{eq:proof1}\\
	\de_{z_3}(\pi(P_{\sD_p}(\sO_n(z_1)\widetilde\sO_n(z_2))))&=\frac{A}{(z_1-z_2)^{2n-p}(z_1-z_3)^{p}(z_2-z_3)^{p}},\label{eq:proof2}\\
	\de_{z_3}(\pi(P_{\bar\sD_p}(\sO_n(z_1)\widetilde\sO_n(z_2))))&=\frac{A}{(z_1-z_2)^{2n-p}(z_1-z_3)^{p}(z_2-z_3)^{p}},\label{eq:proof3}\\
	\de_{z_3}(\pi(P_{\sC_{\l_k}}(\sO_n(z_1)\widetilde\sO_n(z_2))))&=\frac{A z_2^{-2n}}{(1-z_1z_2^{-1})^{2n-\De_k}(1-z_1z_3^{-1})^{\De_k}(1-z_2^{-1}z_3)^{\De_k}},\label{eq:proof4}
	\ee
	where $A$ denotes a generic constant (it can differ between these equations). We find that~\eqref{eq:proof2} fails to be holomorphic at $z_2=z_3$,~\eqref{eq:proof3} fails to be holomorphic at $z_1=z_3$, while~\eqref{eq:proof1} and~\eqref{eq:proof4} satisfy all the regularity conditions and cannot be excluded. This completes the proof.
\end{proof}

\begin{proof}[Proof of Lemma~\ref{lemma:opeDDbar}]
	Let $\psi_k=P_{\sC_{\l_k}}(\sO_n(z_1)\widetilde\sO_n(z_2))\in \sC_{\l_k}$ and let $R=\cP_s^+$ or $R=\cC_s$ depending on whether $\l_k\geq \tfrac{1}{4}$ or $\l_k<\tfrac{1}{4}$,  since $\lambda_k=1/4-s^2$ (see \eqref{eq:sNota}). Recall that we chose a unitary isomorphism $\kappa_k:\C^{d_k}\otimes R\to\sC_k$ which also preserves the natural complex conjugations defined on $\C^{d_k}\otimes R$ and $\sC_k$. We must have
	\be
	\psi_k=\sum_{a=1}^{d_k}\kappa_k(e_a\otimes f_a)
	\ee
	for some $f_a\in \cP_s^+$ or $f_a\in \cC_s$ depending on whether $\l_k\geq \tfrac{1}{4}$ or $\l_k<\tfrac{1}{4}$. 
	Sending $\psi_k$ to $f_a$ we obtain a projection map $\pi_a:\sC_k\to \cP_s^+$ or $\pi_a:\sC_k\to \cC_s$. As in the proof of Lemma~\ref{lemma:DDbarselection}, we must have
	\be
	f_a(z_0)=\de_{z_0}(\pi_a(P_{\sC_{\l_k}}(\sO_n(z_1)\widetilde\sO_n(z_2))))= 
	\frac{\tl c_{k,a} z_2^{-2n}}{(1-z_1z_2^{-1})^{2n-\De_k}(1-z_1z_0^{-1})^{\De_k}(1-z_2^{-1}z_0)^{\De_k}}
	\ee
	for some $\tl c_{k,a}\in \C$. Using~\eqref{eq:coherentconj} it is easy to check that $\sO_n(0)\tl\sO_n(\oo)$ is a real function. With $z_1=0$ and $z_2=\oo$,${}^{\ref{footnote:infinity}}$ $f_a(z_0)$ becomes $f_a(z_0)=\tl c_{k,a}$. This, together with the fact that $\kappa_k$ preserves complex conjugation, implies that we can write $\tl c_{k,a}=c_{k,a}N_k$ with $c_{k,a}\in\R$ and $N_k\in \C$ characterized in the statement of Lemma~\ref{lemma:opeDDbar} (which determines $N_k$ up to a sign, which we chose once in an arbitrary way for every $R$). So, we conclude that
	\be\label{eq:proof:OObarOPE}
	P_{\sC_{k}}(\sO_n(z_1)\widetilde\sO_n(z_2))=\sum_{a=1}^{d_k}c_{k,a}\kappa_k(e_a\otimes \tl C_{k}(z_1,z_2))
	\ee
	where $\tl C_{k}(z_1,z_2)\in R$ is defined by
	\be
	\tl C_{k}(z_1,z_2)(z_0)=\frac{N_k z_2^{-2n}}{(1-z_1z_2^{-1})^{2n-\De_k}(1-z_1z_0^{-1})^{\De_k}(1-z_2^{-1}z_0)^{\De_k}}.
	\ee
\end{proof}

\subsection{Proof of Theorem~\ref{theorem:main}}

\begin{proof}[Proof of Theorem~\ref{theorem:main}, part (1)]
Using~\eqref{eq:coherentconj}, we have
	\be
	\overline{\p{\widetilde\sO_n(z_3)\widetilde\sO_n(z_4)}}=(\bar{z_{3}})^{-2n}(\bar{z_4})^{-2n}\sO_n((\bar{z_3})^{-1})\sO_n((\bar{z_4})^{-1}),
	\ee
	to which we can apply the expansion from Lemma~\ref{lemma:DDOPE}. Applying the same lemma to $\sO_n(z_1)\sO_n(z_2)$ we find
	\be\label{eq:proofCBDD}
	\<\sO_n(z_1)\sO_n(z_2)\widetilde\sO_n(z_3)\widetilde\sO_n(z_4)\>&=\p{\bar{z_{3}}^{-2n}\bar{z_4}^{-2n}\sO_n(\bar{z_3}^{-1})\sO_n(\bar{z_4}^{-1}),\sO_n(z_1)\sO_n(z_2)}\nn\\
	&=\sum_{p=2n\atop p\text{ even}}^{\oo}\sum_{a=1}^{\ell_p}|f_{p,a}|^2 z_{12}^{-2n}z_{34}^{-2n}\cG_p(z),
	\ee
	where $z$ is the cross-ratio~\eqref{eq:crossratio}  and $\cG_p$ is defined by
	\be
	z_{12}^{-2n}z_{34}^{-2n}\cG_p(z)=z_{3}^{-2n}z_4^{-2n}\p{C_{p}(\bar{z_3}^{-1},\bar{z_4}^{-1}),C_{p}(z_1,z_2)}.
	\ee
	The following lemma then brings the expansion~\eqref{eq:proofCBDD} to the form stated in (1) of Theorem~\ref{theorem:main},
	\begin{lemma}\label{lemma:toproveGp}
		$\cG_\De(z)=z^\De{}_2F_1(\De,\De;2\De;z)$, where ${}_2F_1$ is the Gauss hypergeometric function.
	\end{lemma}
	
	All the infinite sums of vectors leading to~\eqref{eq:proofCBDD} converged with respect to the Hilbert space norms. Therefore, the sum~\eqref{eq:proofCBDD} converges pointwise for $z\in \C\setminus[1,+\oo)$. To show that the convergence in~\eqref{eq:schannel} is uniform on compact subsets in $z$, it suffices to show that~\eqref{eq:proofCBDD} converges uniformly on compact subsets in $z_1,\cdots, z_4$. For the latter, note that the situation is essentially that of the sum
	\be
		(v(x),u(x))=\sum_{k=1}^\oo (P_kv(x), P_ku(x)),
	\ee
	where $v(x),u(x)$ are continuous families of vectors in a Hilbert space, parameterized by $x$ varying in a compact space $Q$, and $P_k$ are orthogonal projection operators with $\sum_{k=1}^\oo P_k=\mathrm{id}$ and with $P_kP_l=0$ for $k\neq l$. The above sum converges pointwise for $x\in Q$. To see that the convergence is uniform we first estimate the remainders as
	\be
		|\sum_{k=N}^\oo (P_kv(x), P_ku(x))|=|(\sum_{k=N}^\oo P_kv(x), u(x))|\leq \|\sum_{k=N}^\oo P_kv(x)\| \|u(x)\|,
	\ee
	and note that the functions $a_N(x)=\|\sum_{k=N}^\oo P_kv(x)\|=\sqrt{\sum_{k=N}^\oo \|P_kv(x)\|^2}\geq 0$ are continuous on $Q$ and, as $N\to\oo$, monotonically converge to 0 pointwise in $x$. Dini's theorem then implies that $a_N(x)$ converge to 0 uniformly in $x$. Since $\|u(x)\|$ is continuous, and hence bounded, this implies the uniform convergence of the sum.

	 To show the convergence for the derivatives of~\eqref{eq:proofCBDD}, we apply the same arguments to the products like $\ptl^a_1\sO_n(z_1)\ptl^b_2\sO_n(z_2)$ and use the fact that continuity of the projectors $P_R$ implies $\ptl^a_1\ptl^b_2 P_R(\sO_n(z_1)\sO_n(z_2))=P_R(\ptl^a_1\sO_n(z_1)\ptl^b_2\sO_n(z_2))$.
\end{proof}

\begin{proof}[Proof of Theorem~\ref{theorem:main}, part (2)]
Similarly to part (1), using~\eqref{eq:presum} and Lemma~\ref{lemma:opeDDbar} twice, we find
	\be
	&\<\sO_n(z_1)\sO_n(z_2)\widetilde\sO_n(z_3)\widetilde\sO_n(z_4)\>\nn\\
	&=z_{14}^{-2n}z_{23}^{-2n}+\sum_{k=1}^\oo\sum_{a=1}^{d_k}c_{k,a}^2
	z_{2}^{-2n}z_3^{-2n}(\tl C_{k}(\bar{z_3}^{-1},\bar{z_2}^{-1}),\tl C_{k}(z_1,z_4))\nn\\
	&=z_{14}^{-2n}z_{23}^{-2n}+\sum_{k=1}^\oo\sum_{a=1}^{d_k}c_{k,a}^2
	z_{12}^{-2n}z_{34}^{-2n}\p{\frac{z}{1-z}}^{2n}\cH_{\De_k}(z),
	\ee
	where $\cH_{\De_k}(z)$ is defined by
	\be
	z_{12}^{-2n}z_{34}^{-2n}\p{\frac{z}{1-z}}^{2n}\cH_{\De_k}(z)=z_{2}^{-2n}z_3^{-2n}(\tl C_{k}(\bar{z_3}^{-1},\bar{z_2}^{-1}),\tl C_{k}(z_1,z_4)).
	\ee
	We obtain the expansion of the required form by applying the following lemma.
	\begin{lemma}\label{lemma:toproveH}
			$\cH_{\De_k}(z)= {}_2F_1(\De_k,1-\De_k;1;\tfrac{z}{z-1}).$
	\end{lemma}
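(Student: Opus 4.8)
The plan is to follow the strategy indicated in the main text and determine $\cH_\De$ as an eigenfunction of the quadratic Casimir, rather than by evaluating the continuous-series inner product head-on. The starting observation is that the factor $\tl C_k(z_1,z_4)\in\sC_{\l_k}$ is, by Lemma~\ref{lemma:opeDDbar}, the image of the $\sC_k$-projection of $\sO_n(z_1)\bar\sO_n(z_4)$, so the pair of operators at $z_1$ and $z_4$ (equivalently, by conjugation, the pair at $z_2,z_3$ in the other factor) is fused into a state transforming in $\sC_{\l_k}$, on which the quadratic Casimir acts by the scalar $\De_k(\De_k-1)$. Using the first-order differential realization of the generators on coherent states from~\eqref{eq:coherentactionLs} (which, as recorded after Definition~\ref{def:coherent}, holds in identical form for $\bar\sO_n$), I would form the ``two-particle'' operators $\mathcal{L}_a=L_a^{(1)}+L_a^{(4)}$ and assemble the Casimir $\mathcal{C}_2=\mathcal{L}_0^2-\tfrac12(\mathcal{L}_{-1}\mathcal{L}_1+\mathcal{L}_1\mathcal{L}_{-1})$ as a second-order differential operator in $z_1,z_4$ acting on the four-function correlator.

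Since $\mathcal{C}_2$ commutes with the diagonal $G$-action, acting on the correlator written in the form of Proposition~\ref{prop:fourptred} it reduces to an ordinary differential operator in the cross-ratio $z$ alone. The $t$-channel block contributes $z_{12}^{-2n}z_{34}^{-2n}\p{\tfrac{z}{1-z}}^{2n}\cH_{\De}(z)$ to the correlator, so the eigenvalue equation $\mathcal{C}_2=\De(\De-1)$ becomes a second-order linear ODE for $\cH_\De(z)$. I would carry out this reduction in the frame $z_1=0,\,z_2=z,\,z_3=1,\,z_4=\oo$ (legitimate for $z\notin[1,+\oo)$ by the analytic continuation argument in the proof of Proposition~\ref{prop:fourptred}), where the cross-ratio equals $z$ and the algebra is shortest. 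After stripping the prefactor, I expect the ODE to become the hypergeometric equation once one substitutes $w=\tfrac{z}{z-1}$: concretely $w(1-w)\cH''+(1-2w)\cH'-\De(1-\De)\cH=0$, which is exactly the hypergeometric equation with parameters $a=\De$, $b=1-\De$, $c=1$ (note $a+b+1=2$ and $ab=\De(1-\De)=\l$, matching $-C_2$).

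It then remains to select the correct solution and fix its normalization. At $w=0$ (i.e.\ $z=0$) the two indicial exponents both vanish because $c=1$, so one solution is holomorphic and equals $1$ at the origin while the other carries a logarithm. Because the coherent states are holomorphic and $g(z)$ is regular at $z=0$ by Proposition~\ref{prop:fourptred}, the block must be the holomorphic solution, i.e.\ a constant multiple of ${}_2F_1(\De,1-\De;1;w)$. To pin the constant to $1$, I would evaluate the defining inner product in the $z\to 0$ limit of the chosen frame, using that the coherent-state normalization is arranged (via $N_{\De}$) so that $\tl C_k(0,\oo)$ is the unit-norm constant function in $R$; this yields $\cH_\De(0)=1$ and hence $\cH_\De(z)={}_2F_1(\De,1-\De;1;\tfrac{z}{z-1})$.

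The main obstacle is twofold and essentially computational. First is the bookkeeping in reducing the two-variable operator $\mathcal{C}_2$ to a clean single-variable ODE: the prefactor $z_{12}^{-2n}z_{34}^{-2n}\p{\tfrac{z}{1-z}}^{2n}$ generates inhomogeneous terms that must conspire to leave the $n$-independent hypergeometric operator. Second is the normalization step, which requires carefully taking the degenerate limit of the continuous-series inner product. An alternative that sidesteps the ODE entirely is to compute the inner product directly: for the principal series one integrates the explicit product of $\tl C_k$'s over $\ptl\D$ against the $L^2(\ptl\D)$ norm and recognizes an Euler-type integral representation of ${}_2F_1$, then extends to the complementary series by analyticity in $\De$. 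I would keep the direct computation in reserve as an independent check on the normalization, but expect the Casimir route to be cleaner for exhibiting the hypergeometric structure and more uniform across the two series.
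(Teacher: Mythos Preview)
Your proposal is correct and follows essentially the same route as the paper: derive the Casimir differential equation for the $t$-channel block, recognize it (after the change of variable $w=\tfrac{z}{z-1}$) as the hypergeometric equation with $a=\De$, $b=1-\De$, $c=1$, exclude the logarithmic solution by regularity of $g(z)$ at $z=0$, and fix the constant to $1$ from $\cH_\De(0)=(N_{\De_k},N_{\De_k})_R=1$. The only cosmetic wrinkle is that in your frame $z_1=0$, $z_4=\infty$ the $(1,4)$ Casimir has nothing to differentiate; in practice you either act before fixing the frame or use $G$-invariance of the correlator to trade $\mathcal{C}_2^{(14)}$ for $\mathcal{C}_2^{(23)}$, which acts on the live variables $z_2=z$, $z_3=1$.
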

	This leads to the expansion~\eqref{eq:tchannel}. As in the case of the $s$-channel expansion, it converges pointwise for all $z\in \C\setminus[1,+\oo)$. The same argument as in the proof of part (1) shows that its derivatives also converge pointwise.	
\end{proof}

It remains to prove Lemma~\ref{lemma:toproveGp} and  Lemma~\ref{lemma:toproveH}. We start with the former.

\begin{proof}[Proof of Lemma~\ref{lemma:toproveGp}]
	We have to compute the inner product
	\be\label{eq:innerprodGp}
	z_{12}^{-2n}z_{34}^{-2n}\cG_p(z)=z_{3}^{-2n}z_4^{-2n}\p{C_{p}(\bar{z_3}^{-1},\bar{z_4}^{-1}),C_{p}(z_1,z_2)}.
	\ee
To do that, we use the observation from~\cite{DO2} that such inner products satisfy a second-order differential equation, coming from the action of the quadratic Casimir element $c_2$, defined in~\eqref{eq:casimir}. Recall that $C_p$ represents a $G$-invariant map $\cD^{\infty}_n\times\cD^{\infty}_n\rightarrow \cD_{p}$, and that $c_2$ acts on $\cD_{p}^{\infty}$ by multiplication by the constant $p(p-1)$. It follows that if we precompose $C_p$ by $c_2$, we get $p(p-1) C_p$. Now, the generators of $\mathfrak{sl}_2(\mathbb{R})$ act on the vectors $\cO(z)$ through the differential operators~\eqref{eq:coherentactionLs}. It follows that $C_{p}(z_1,z_2)$ satisfies the following partial differential equation, which can also be easily verified using the explicit form~\eqref{eq:cpExplicit}:
\be
\begin{aligned}
&\left[
(L^{(1)}_0+L^{(2)}_0)^2 -\frac{(L^{(1)}_{-1}+L^{(2)}_{-1})(L^{(1)}_{1}+L^{(2)}_{1}) + (L^{(1)}_{1}+L^{(2)}_{1})(L^{(1)}_{-1}+L^{(2)}_{-1})}{2}
\right]C_{p}(z_1,z_2) \\
&= p(p-1)C_{p}(z_1,z_2)\,.
\end{aligned}
\ee
The superscript $(1),\,(2)$ on the Lie algebra generators denotes whether the derivatives are taken with respect to $z_1$ or $z_2$. The explicit form of this differential equation is
\be
\left[-(z_1-z_2)^2 \partial_{z_1}\partial_{z_2}+2 n (z_1-z_2) (\partial_{z_1}-\partial_{z_2})+2 n (2 n-1)-p(p-1)
\right] C_{p}(z_1,z_2)
= 0\,.
\ee
It follows that both sides of equation~\eqref{eq:innerprodGp} satisfy the same differential equation. Since the left-hand side depends on $z_1$ and $z_2$ through the single variable $z = z_{12}z_{34}/(z_{13}z_{24})$, this becomes the following ordinary differential equation for $\cG_p(z)$
\be
		z^2(1-z)\ptl_z^2\cG_p(z)-z^2\ptl_z\cG_p(z)=p(p-1)\cG_p(z).
\ee

	For $p\geq 2n>1$ the solutions of this equation have the form\footnote{The function ${}_2F_1(1-p,1-p;1-2p;z)$ is well-defined even though $1-2p$ is a negative integer because the hypergeometric series truncates before this can pose a problem.}
	\be
		\cG_p(z)=Az^p{}_2F_1(p,p;2p;z)+Bz^{1-p}{}_2F_1(1-p,1-p;1-2p;z).
	\ee
	We know that $\cG_p(z)$ has to be holomorphic at $z=0$, which implies $B=0$. To compute $A$, we set $z_1=z_2=0$ and $z_3=z_4=\oo$. This gives
	\be
		\cG_p(z)\sim z^p\|v\|^2_{\cD_n}=z^p,
	\ee
	where $v(z)=\sqrt{\frac{2n-1}{p}}z^{-2p}$. This implies $A=1$ and completes the proof of the lemma.
\end{proof}

\begin{proof}[Proof of Lemma~\ref{lemma:toproveH}]
	We have to compute the inner product
	\be
	z_{12}^{-2n}z_{34}^{-2n}\p{\frac{z}{1-z}}^{2n}\cH_{\De_k}(z)=z_{2}^{-2n}z_3^{-2n}(\tl C_{k}(\bar{z_3}^{-1},\bar{z_2}^{-1}),\tl C_{k}(z_1,z_4)).
	\ee
	We use the same strategy as in the proof of Lemma~\ref{lemma:toproveGp} to show that the Casimir equation and regularity conditions imply
	\be
		\cH_{\De_k}(z)= A{}_2F_1(\De,1-\De;1;\tfrac{z}{z-1}).
	\ee
	It only remains to fix the normalization. We again set $z_1,z_2=0$ and $z_3,z_4=\oo$, which implies
	\be
		\cH_{\De_k}(0)=(N_{k},N_{k})=1.
	\ee
	This shows that $A=1$ and completes the proof of the lemma.
\end{proof}

\subsection{Derivation of the bounds}
\label{sec:derivation}

We now explain how to extract bounds on the eigenvalues $\l_k$ from the crossing equation~\eqref{eq:crossineq}. The resulting bounds themselves will be summarized in section~\ref{sec:bounds}. While the following material is mostly standard for a conformal bootstrap expert, there are a few important distinctions from the usual story.

First of all, we rewrite~\eqref{eq:crossineq} in the following form,\footnote{The fact that odd $p$ do not appear follows from the exchange symmetry between $\sO_n(z_1)$ and $\sO_n(z_2)$. Taking it into account is equivalent to imposing the third crossing equation which equates $s$- and $t$-channels to $u$-channel, where $u$-channel is defined by taking the product expansions in pairs $\sO_n(z_1)\widetilde\sO_n(z_3)$ and $\sO_n(z_2)\widetilde\sO_n(z_4)$. This is the reason why we are only equating two channels.}
\be\label{eq:crossineq2}
\sum_{p=2n\atop p\text{ even}}^{\oo}S_p \cG_p(z)=\p{\frac{z}{1-z}}^{2n}\p{1+\sum_{k=1}^{\oo}T_k \cH_{\De_k}(z)},
\ee
where $S_p=\sum_{a=1}^{\ell_p}|f_{p,a}|^2\geq 0$ and $T_k=\sum_{a=1}^{d_k}c_{k,a}^2\geq 0$. According to theorem~\ref{theorem:main}, both sides converge for $z\in \C\setminus[1,+\oo)$ (uniformly on compact subsets).

The following comment is intended for conformal bootstrap experts and is not essential for what follows. The left-hand side is exactly the standard expansion in $s$-channel four-point $d=1$ conformal blocks, and it converges where one would expect. What is not usual is that the right-hand side, the $t$-channel, converges nicely in the $s$-channel OPE limit $z\to 0$. The blocks $\cH_{\De_k}(z)$ appearing in the right-hand side are just the sums of the usual $t$-channel blocks and their shadows, which makes this even stranger. The reason for this behavior is that $\De_k$ go to infinity in the imaginary direction, so the standard factor\footnote{Here $\r_t$ is the radial coordinate of~\cite{Hogervorst:2013sma} in $t$-channel, i.e. $\r_t=\frac{1-z}{(1+\sqrt z)^2}$.} $\rho_t^{\De_k}$ penalizes not for large $|\rho_t|$, but for large $|\arg \rho_t|$, which grows towards the cut $z\in [1,+\oo)$. The fact that $\De_k$ are not generally real also means that the usual technique of expanding around $z=\thalf$ will not work because the factor $\rho_t^{\De_k}$ is not positive anymore. We will use a different strategy.

The uniform convergence of the sums in~\eqref{eq:crossineq2} allows us to solve for the coefficients $S_p$ by applying the orthogonality relation~\eqref{eq:orthogonalityG}. The result is the following set of identities
\be
S_p=\cF_{0;p}^n(0)+\sum_{k=1}^\oo T_k \cF_{0;p}^n(\De_k)\,.
\ee
Here
\be
\cF_{0;p}^n(\Delta) = \frac{1}{2\pi i}\oint dz z^{-2}\cG_{1-p}(z)\left(\tfrac{z}{1-z}\right)^{2n}\cH_{\Delta}(z)\,.
\ee

From the explicit expressions~\eqref{eq:sblock} and~\eqref{eq:tblock} for $\cG_p$ and $\cH_\De$ it follows that $\cF_{0;p}^n(\De)$ is a polynomial in $\De$. Furthermore, $\cH_{\De}=\cH_{1-\De}$ implies that $\cF_{0;p}^n(\De)=\cF_{0;p}^n(1-\De)$, and so we can write $\cF_{0;p}^n(\De)=\cF_{p}^n(\l)$ where $\l=\De(1-\De)$ and $\cF_{p}^n(\l)$ is a polynomial in $\l$. This allows us to write the above equation as
\be\label{eq:crossedcrossing}
S_p=\cF_{p}^n(0)+\sum_{k=1}^\oo T_k \cF_{p}^n(\l_k).
\ee
The explicit expression for $\cF_{p}^n(\l)$ is
\be\label{eq:Fexpression}
\cF_{p}^n(\l)=
	\sum_{a+b+c=p-2n}
	\frac{(-1)^{a}(2n+a)_c (1-p)_{b}^2}{c!(2-2p)_{b}b!(a!)^2}
	\prod_{k=0}^{a-1}(\l+k+k^2),
\ee
where the sum is over non-negative $a,b,c$ subject to $a+b+c=p-2n$, and $(a)_n=a(a+1)\cdots(a+n-1)$.
It is a polynomial in $\l$ of degree $p-2n$ with rational coefficients. Let us stress that we get finite-degree polynomials without resorting to any approximations, which is another feature that is different form the usual numerical conformal bootstrap story.

We have ignored so far the fact that odd $p$ are absent from the sum in the left-hand side of~\eqref{eq:crossineq2}.\footnote{Also $p<2n$ are absent, but the expansion of right-hand side also starts at $z^{2n}$, so these do not lead to non-trivial constraints.} We can extend the sum to all $p$, with the understanding that $S_p=0$ for odd $p$. The argument leading to~\eqref{eq:crossedcrossing} is valid for all $p$, and so we then simply get
\be\label{eq:crossingfinal}
-S_p+\cF_{p}^n(0)+\sum_{k=1}^\oo T_k \cF_{p}^n(\l_k)&=0,\qquad \text{even }p\geq 2n,\\
\cF_{p}^n(0)+\sum_{k=1}^\oo T_k \cF_{p}^n(\l_k)&=0,\qquad\text{odd }p> 2n.\label{eq:crossingfinal2}
\ee

We now choose some integer $\L\geq 0$ and consider a sequence of real numbers $\a=\{\a_l\}_{l=0}^\L$. We will call such a sequence a functional. We take a linear combination of~\eqref{eq:crossingfinal} and~\eqref{eq:crossingfinal2} with coefficients $\a$, i.e.
\be\label{eq:functional_acted}
0=\sum_{l=0}^{\L}\cF_{2n+l}^n(0)\a_l+\sum_{k=1}^\oo \sum_{l=0}^{\L}\cF_{2n+l}^n(\l_k)\a_l T_k+\sum_{l=0\atop l\text{ even}}^{\L}(-\a_l) S_{2n+l}.
\ee
Suppose we manage to find a functional $\a$ such that $\a_l\leq 0$ for all even $l$, and the polynomial 
\be
P_\a^n(\l)\equiv \sum_{l=0}^{\L}\cF_{2n+l}^n(\l)\a_l
\ee
satisfies the following conditions for some value $\l_{gap}\in \R$,
\begin{itemize}
	\item $P_\a^n(0)=1$,
	\item $P_\a^n(\l)\geq 0$ for all $\l\geq \l_{gap}$.
\end{itemize}
Then we claim that $\l_1<\l_{gap}$. Indeed, equation~\eqref{eq:functional_acted} becomes
\be\label{eq:functionalcross}
0=1+\sum_{k=1}^\oo P_\a^n(\l_k) T_k+\sum_{l=0\atop l\text{ even}}^{\L}(-\a_l) S_{2n+l}.
\ee
If $\l_1\geq \l_{gap}$, then all the terms in this equation are non-negative (recall $T_k,S_p\geq 0$), and the first term is strictly positive, which leads to a contradiction.

We search for such functionals $\a$ using the following strategy. First, we choose some $\L$ and a large $\l_{gap}$. Then we use the \texttt{SDPB} software~\cite{Simmons-Duffin:2015qma,Landry:2019qug}, which was designed to solve this very problem, to find an $\a$ which satisfies the above conditions. This is not rigorous since internally \texttt{SDPB} uses non-exact (even if highly precise) arithmetic. Nevertheless, we proceed with \texttt{SDPB} to get some approximate answers which we then verify using a rigorous procedure.

We use binary search to determine the smallest (up to some predefined tolerance) $\l_{gap}$ for which \texttt{SDPB} can find an $\a$. We then improve $\l_{gap}$ by repeating this strategy with increasing $\L$, until we see no meaningful improvement (typically we stop at $\L$ between $25$ and $41$). At this point, we have a strong but non-rigorous bound $\l_{gap}$ and the corresponding numerical functional $\a_{numerical}$.

Finally, we approximate $\l_{gap}$ by a rational number which we call $\l_1^\text{single}(n)$, and similarly we approximate $\a_{numerical}$ by a functional $\a_{exact}$ with rational coefficients. We do not know yet if the functional $\a_{exact}$ satisfies the required conditions, so we verify that it does. This can be done rigorously since $P_{\a_{exact}}^n(\l)$ is a polynomial with rational coefficients. We describe the specific algorithm that we use in appendix~\ref{app:polynomial}. We find that $\a_{exact}$ always satisfies the required conditions, provided it approximates $\a_{numerical}$ with sufficiently accuracy. Mathematica notebooks which perform the verification procedure in rational arithmetic are included with the submission.

The result of this procedure is a rational number $\l_1^\text{single}(n)$ such that the bound
\be\label{eq:boundsingle}
	\l_1<\l_1^\text{single}(n)
\ee
is satisfied by any hyperbolic manifold for which $\ell_n>0$, i.e.\ for which the crossing equation~\eqref{eq:crossineq} makes sense. In this paper we report the values $\l_1^\text{single}(n)$ (and their generalizations that we describe below) as exact rational numbers in decimal notation. There is some arbitrariness in the reported number due to the choice of $\L$ and at which point the binary search is stopped. We expect that for the values we report, at most the last reported digit can be improved by further increasing $\L$ or running the binary search for more iterations. The expectation about $\L$ is mostly speculative, we did not estimate the rate of convergence in most cases.

We now discuss several generalizations of the bound~\eqref{eq:boundsingle} that will be needed in section~\ref{sec:bounds}. These generalizations are technical and can safely be skipped on the first reading.

\subsubsection*{Bounds with multiple coherent states.}

The above discussion used the crossing equation~\eqref{eq:crossineq} that is valid whenever $\ell_n>0$, i.e. there exists a discrete series coherent state $\sO_n(z)$. We will see that $\ell_n>1$ in many cases, and then it makes sense to consider the correlators involving all $\sO_{n,\a}(z)$ with $\a=1,\ldots,\ell_n$. In other words, we focus on
\be\label{eq:correlatormutli}
	\<\sO_{n,a_1}(z_1)\sO_{n,a_2}(z_2)\widetilde\sO_{n,a_3}(z_3)\widetilde\sO_{n,a_4}(z_4)\>.
\ee
The situation here is closely analogous to conformal bootstrap with global symmetries~\cite{Kos:2013tga}, except that we do not assume that there is any particular symmetry acting on the indices $a_i$. As we will see, simply assuming a degeneracy $\ell_n>1$, perhaps surprisingly, leads to a non-trivial improvement of the bounds.

The crossing equation corresponding to the above correlator is a straightforward generalization of~\eqref{eq:crossineq},
\be\label{eq:crossineqmulti}
	\sum_{p=2n}^{\oo}\sum_{a=1}^{\ell_p}f_{p,a}^{a_1,a_2}\overline{f_{p,a}^{a_4,a_3}} \cG_p(z)=\p{\frac{z}{1-z}}^{2n}\p{\de^{a_1a_4}\de^{a_2a_3}+\sum_{k=1}^{\oo}\sum_{a=1}^{d_k}c_{k,a}^{a_1;a_4}c_{k,a}^{a_2;a_3} \cH_{\De_k}(z)}.
\ee
Note that there is no restriction to even $p$ anymore. Here, the coefficients $f_{p,a}^{a_1,a_2}$ and $c_{k,a}^{a_1;a_4}$ have the following properties
\be\label{eq:symmetries_of_matrices}
f_{p,a}^{bc}=(-1)^p f_{p,a}^{cb},\qquad \overline{c_{p,a}^{b;c}}=c_{p,a}^{c;b}.
\ee
Taking into account the above symmetry property of $f_{p,a}^{\a\b}$ is, as before, equivalent to adding $u$-channel to the crossing equations, which is why we don't consider it separately. 

In principle, one could proceed with analyzing this system using a semidefinite generalization of the linear program considered in the previous section. This has a disadvantage that the size of this system grows quickly with $\ell_n$, significantly slowing down the calculation. It turns out that one can significantly optimize the calculations using the following observation. 

One can imagine a hypothetical situation in which the unitary group $U(\ell_n)$ acts on $\G\@ G$. In such a situation the function correlator~\eqref{eq:correlatormutli} as well as the product expansions would be constrained by $U(\ell_n)$ symmetry. It is possible to write down a $U(\ell_n)$-symmetric analogue of~\eqref{eq:correlatormutli} which is, as we will see, much easier to analyze. Of course, we have no reason to expect that such a $U(\ell_n)$ action exists, and so the bounds derived under the assumption of $U(\ell_n)$ symmetry do not obviously apply to~\eqref{eq:correlatormutli}. Instead, we expect them to be strictly stronger than the bounds we can obtain without assuming $U(\ell_n)$ symmetry.

A surprise first observed in~\cite{Poland:2011ey} is that in practice it often happens that bootstrap bounds obtained under a weaker symmetry assumption coincide with the bounds obtained under a stronger symmetry assumption. In our case, we will show that the bounds obtainable from~\eqref{eq:correlatormutli} are the same as the bounds obtainable under the assumption of $U(\ell_n)$ symmetry.  Coincidences of this kind were first explained in~\cite{Li:2020bnb} using a detailed analysis of matrices entering the semidefinite programs. Here we will present a more conceptual explanation, which is easily generalizable.

To show that the bounds coincide we demonstrate the following statements. First, if we have an $U(\ell_n)$-symmetric function correlator, it also satisfies~\eqref{eq:crossineqmulti}, and thus the functionals constructed for~\eqref{eq:crossineqmulti} can be used to construct functionals for the $U(\ell_n)$-symmetric case. This is obvious. The less trivial point is as follows: there is a linear map which maps any function correlator satisfying~\eqref{eq:crossineqmulti} to a $U(\ell_n)$-symmetric function correlator. This allows us to pull back the $U(\ell_n)$-symmetric functionals to construct functionals for~\eqref{eq:crossineqmulti}. Instead of explaining this map precisely (which is possible) we will simply demonstrate that $U(\ell_n)$ symmetric-bounds apply to~\eqref{eq:crossineqmulti}.

For this, consider the following expression 
\be
	\sum_{a'_1,\cdots,a_4'}\int dU U^{a'_1}_{a_1}U^{a'_2}_{a_2}\overline{U^{a'_3}_{a_3}}\overline{U^{a'_4}_{a_4}} \<\sO_{n,a'_1}(z_1)\sO_{n,a'_2}(z_2)\widetilde\sO_{n,a'_3}(z_3)\widetilde\sO_{n,a'_4}(z_4)\>,
\ee
where the integral is taken with the unit-normalized Haar measure over the unitary group $U(\ell_n)$. This expression can be interpreted as a new correlator
\be\label{eq:symmcorr}
=\<\sO_{n,a_1}(z_1)\sO_{n,a_2}(z_2)\widetilde\sO_{n,a_3}(z_3)\widetilde\sO_{n,a_4}(z_4)\>_\text{sym}
\ee
which is now invariant under $U(\ell_n)$ action. Plugging in the expansions in the left-hand side of~\eqref{eq:crossineqmulti}, we encounter sums of the form
\be
	\sum_{a'_1,\cdots,a_4'}\int dU U^{a'_1}_{a_1}U^{a'_2}_{a_2}\overline{U^{a'_3}_{a_3}}\overline{U^{a'_4}_{a_4}} f_{p,a}^{a'_1,a'_2}\overline{f_{p,a}^{a'_4,a'_3}}.
\ee
Note that $f_{p,a}^{a'_1,a'_2}=f_{p,a}^{[a'_1,a'_2]}+f_{p,a}^{(a'_1,a'_2)}$,\footnote{We define $A^{[a,b]}=\thalf\p{A^{a,b}-A^{b,a}}$, $A^{(a,b)}=\thalf\p{A^{a,b}+A^{b,a}}$.} where each term transforms irreducibly under $U(\ell_n)$. Schur orthogonality relations then imply
\be
	&\sum_{a'_1,\cdots,a_4'}\int dU U^{a'_1}_{a_1}U^{a'_2}_{a_2}\overline{U^{a'_3}_{a_3}}\overline{U^{a'_4}_{a_4}} f_{p,a}^{a'_1,a'_2}\overline{f_{p,a}^{a'_4,a'_3}}\nn\\
	&=(\delta_{a_1,a_4}\delta_{a_2,a_3}-\delta_{a_1,a_3}\delta_{a_2,a_4})A_{p,a}+(\delta_{a_1,a_4}\delta_{a_2,a_3}+\delta_{a_1,a_3}\delta_{a_2,a_4})S_{p,a},
\ee
where $0\leq A_{p,a}\propto \sum_{a_1,a_2}|f_{p,a}^{[a_1,a_2]}|^2$ and $0 \leq S_{p,a}\propto \sum_{a_1,a_2} |f_{p,a}^{(a_1,a_2)}|^2,$ up to some inessential normalization constants. Of course,~\eqref{eq:symmetries_of_matrices} implies that $A_{p,a}=0$ for even $p$ and $S_{p,a}=0$ for odd $p$.

A similar decomposition holds for $c_{k,a}^{x;y}=\p{c_{k,a}^{x;y}-\frac{1}{\ell_n}\de^{x,y}\sum_i c^{i,i}_{k,a}}+\frac{1}{\ell_n}\de^{x,y}\sum_i c^{i,i}_{k,a}$, which leads to
\be
	&\sum_{a'_1,\cdots,a_4'}\int dU U^{a'_1}_{a_1}U^{a'_2}_{a_2}\overline{U^{a'_3}_{a_3}}\overline{U^{a'_4}_{a_4}}c_{k,a}^{a_1;a_4}c_{k,a}^{a_2;a_3}\nn\\
	&=\de_{a_1,a_4}\de_{a_2,a_3} Q_{k,a}+\p{\de_{a_1,a_3}\de_{a_2,a_4}-\frac{1}{\ell_n}\de_{a_1,a_4}\de_{a_2,a_3}}T_{k,a},
\ee
where $Q_{k,a},T_{k,a}\geq 0$. Importantly, if at least one component of $c_{k,a}$ is non-zero, then at least one of $Q_{k,a}$ or $T_{k,a}$ is non-zero.  This implies that the $U(\ell_n)$-symmetric correlator has exactly the same spectral gap as~\eqref{eq:correlatormutli}.

Physicists will recognize that what we have found is that the $U(\ell_n)$-symmetric correlator~\eqref{eq:symmcorr} has $U(\ell_n)$-symmetric conformal block decompositions, and coefficients $A,S,Q,T$ correspond, respectively, to exchanges in the anti-symmetric, symmetric, scalar, and traceless tensor channels. Analysis of such correlators is a standard procedure in the physics literature~\cite{Rattazzi:2010yc}. We now sketch how it is performed.

The result of this calculation is that the correlator~\eqref{eq:symmcorr} has the following two expansions
\be
	&\p{\frac{z}{1-z}}^{-2n}\sum_{p=2n}^{\oo}\sum_{a=1}^{\ell_p}\cG_p(z)\p{(\delta_{a_1,a_4}\delta_{a_2,a_3}-\delta_{a_1,a_3}\delta_{a_2,a_4})A_{p,a}+(\delta_{a_1,a_4}\delta_{a_2,a_3}+\delta_{a_1,a_3}\delta_{a_2,a_4})S_{p,a}}\nn\\
	&=\de^{a_1a_4}\de^{a_2a_3}+\sum_{k=1}^{\oo}\sum_{a=1}^{d_k}\p{\de_{a_1,a_4}\de_{a_2,a_3} Q_{k,a}+\p{\de_{a_1,a_3}\de_{a_2,a_4}-\frac{1}{\ell_n}\de_{a_1,a_4}\de_{a_2,a_3}}T_{k,a}} \cH_{\De_k}(z),
\ee
where we moved the factor $\p{\frac{z}{1-z}}^{2n}$ to the left-hand side. Equating the coefficients of the products of Kronecker deltas we find
\be
	\sum_{p=2n}^{\oo}\sum_{a=1}^{\ell_p}\cG_p(z)\p{A_{p,a}+S_{p,a}}&=\p{\frac{z}{1-z}}^{2n}\p{1+\sum_{k=1}^{\oo}\sum_{a=1}^{d_k}\p{Q_{k,a}-\frac{1}{\ell_n}T_{k,a}}} \cH_{\De_k}(z),\\
	\sum_{p=2n}^{\oo}\sum_{a=1}^{\ell_p}\cG_p(z)\p{-A_{p,a}+S_{p,a}}&=\p{\frac{z}{1-z}}^{2n}\sum_{k=1}^{\oo}\sum_{a=1}^{d_k}T_{k,a} \cH_{\De_k}(z).
\ee
Using the same logic that lead us to~\eqref{eq:crossedcrossing} we conclude for even $p\geq 2n$
\be\label{eq:red1}
S_p&=\cF_{p}^n(0)+\sum_{k=1}^\oo \p{Q_{k}-\frac{1}{\ell_n}T_{k}} \cF_{p}^n(\l_k)=\sum_{k=1}^\oo T_{k}\cF_{p}^n(\l_k),
\ee
and for odd $p>2n$
\be\label{eq:red2}
A_p&=\cF_{p}^n(0)+\sum_{k=1}^\oo \p{Q_{k}-\frac{1}{\ell_n}T_{k}} \cF_{p}^n(\l_k)=-\sum_{k=1}^\oo T_{k}\cF_{p}^n(\l_k).
\ee
Above we set $A_{p}=\sum_a A_{p,a}$ and so on.
These equations can be analyzed in exactly the same way as in the case of a single $\sO_n(z)$, this time introducing slightly more components for the functional $\a$. We denote the resulting upper bound on $\l_1$ by $\l_1^\text{multi}(n,\ell_n)$. 

\subsubsection*{Bounds from a mixed system $\sO_n(z)$ and $\sO_m(z)$.}

Finally, we will find it useful to consider bounds derived from the function correlator involving two distinct discrete series representations, (i.e.\ $\sO_n(z)$ and $\sO_m(z)$ with $m\neq n$),  as well as three distinct discrete series representations. 

Compared to the cases considered above, the new ingredient is that now discrete series representations $\sD_l$ or $\bar\sD_l$ can contribute to the products $\sO_m(z_1)\widetilde\sO_n(z_2)$, with $l\leq |m-n|$. While it is straightforward to analyze these contributions using the methods used in the proof of Theorem~\ref{theorem:main}, we will not need these. This is because in all examples where we will apply the bounds coming from such crossing equations, we will have $\ell_l=0$ for all $l\leq |m-n|$, and so while these contributions are allowed by symmetries, they will not appear in practice. 

For the same reason, the four-point function correlators of the form $\<\sO\sO\sO\widetilde\sO\>$ and $\<\widetilde\sO\widetilde\sO\widetilde\sO\sO\>$, where $\sO$ stands for either $\sO_m$ or $\sO_n$, will vanish in our applications. So the only non-zero correlators that we can study are of the form $\<\sO\sO\widetilde\sO\widetilde\sO\>$. 

Their analysis proceeds in a way similar to before, the main difference being that the $s$- and $t$-channel expansions now take a different form, with more general versions of $\cG_p$ and $\cH_\De$ appearing. The new ingredient that we will need is an expansion of a general correlator
\be\label{eq:correlatormutliGeneral}
	\<\sO_{n_1,a_1}(z_1)\sO_{n_2,a_2}(z_2)\widetilde\sO_{n_3,a_3}(z_3)\widetilde\sO_{n_4,a_4}(z_4)\>
\ee
into s- and t-channel conformal blocks. By generalizing the previous steps to this case, we find the following crossing equation
\be\begin{aligned}\label{eq:crossineqmultiGeneral}
	&\sum_{p=K}^{\oo}\sum_{a=1}^{\ell_p}f_{p,a}^{a_1,a_2}\overline{f_{p,a}^{a_4,a_3}} \cG_{p;\{n_i\}}(z) =\\
	&=(-1)^{\sum_i n_i}\frac{z^{n_1+n_2}}{(1-z)^{n_2+n_3}}\p{\de_{n_1n_4}\de_{n_2n_3}\de_{a_1a_4}\de_{a_2a_3}+\sum_{k=1}^{\oo}\sum_{a=1}^{d_k}c_{k,a}^{a_1;a_4}c_{k,a}^{a_2;a_3} \cH_{\De_k;\{n_i\}}(z)}.
	\end{aligned}
\ee
where $K=\text{max}\left(n_1+n_2,n_3+n_4\right)$ and the conformal blocks are given by
\begin{equation}
\begin{aligned}
&\cG_{p;\{n_i\}}(z)=z^{p} \, _2F_1(p-n_{12},p+n_{34};2p;z)\\
&\cH_{\De_k;\{n_i\}(z)} =\\
&=\begin{cases}
(1-z)^{n_{32}} \, _2F_1\left(\De_k+n_{23},-\De_k+n_{23}+1;n_{13}+n_{24}+1;\frac{z}{z-1}\right)\,;\ &n_{13} \geq n_{42}\\
(1-z)^{n_{23}} z^{n_{41}+n_{32}} \, _2F_1\left(\De_k-n_{23},n_{32}+1-\De_k;n_{32}+n_{41}+1;\frac{z}{z-1}\right)&\text{otherwise}
\end{cases}
\end{aligned}
\end{equation}
where $n_{ij}=n_i-n_j$. As explained above, we have assumed that $\{n_i\}$ is such that there is no contribution from discrete series on the right hand side of~\eqref{eq:crossineqmultiGeneral}.

One can now set up a mixed correlator bootstrap, using all crossing equations $\<\sO_i\sO_j\widetilde\sO_k\widetilde\sO_l\>$, where $i,j,k,l$ range over some set. We will not spell out all the details here.\footnote{This generalization is completely standard in numerical conformal bootstrap and was introduced for the first time in~\cite{Kos:2014bka}.}

\section{Bounds on hyperbolic orbifolds}
\label{sec:bounds}
In this section, we will present our bounds on the spectral data of hyperbolic 2-orbifolds. Specifically, we will obtain upper bounds on the first positive eigenvalue $\lambda_1$, as well as bounds on triple product integrals of modular forms.
\subsection{Classification of hyperbolic orbifolds}
Let us recall that in this work by hyperbolic surface, we mean a closed connected smooth orientable 2-manifold with a hyperbolic metric of sectional curvature $-1$, and hyperbolic orbifold means the same but generalized from manifolds to orbifolds.

As we discussed in the Introduction, every hyperbolic surface is isometric to $X=\G\@ G/K$ for some discrete cocompact torsion-free subgroup $\G\subset G$. However, our method for deriving bounds applies also if $\Gamma$ has torsion, i.e.\ nontrivial elements of finite order. We will therefore take $\G$ to be a general discrete cocompact subgroup of $G$. If $\G$ has torsion, $X$ is not a smooth hyperbolic surface, but it is a hyperbolic orbifold. 

In practice, we can think about hyperbolic orbifolds in the following simple-minded way. Every $\G$ has a torsion-free finite-index normal subgroup $\G'\unlhd \G$~\cite{normaltorsionfree1,normaltorsionfree2,normaltorsionfree3,normaltorsionfree4}. A simple exercise then shows that the finite group $S=\G/\G'$ acts by isometries on the hyperbolic surface $X'=\G'\@ G/K$, and $X=S\@ X'$. In other words, any hyperbolic orbifold can be thought of as a quotient of a hyperbolic surface by a finite group of isometries. In particular, the Laplacian spectrum of $X$ consists of the eigenvalues on $X'$ whose eigenfunctions are invariant under $S$. The necessity of considering this option should sound familiar to a reader experienced in the conformal bootstrap: there too one can obtain consistent solutions of crossing equations by restricting to a singlet sector under a global symmetry.

The subgroups $\G$ have a simple classification. For each $\G$, there exist non-negative integers $g,r$ and positive integers $k_1,\cdots, k_r\geq 2$ such that
$\G$ is generated by elements $a_i,b_i,t_j\in G$ ($1\leq i\leq g,\,1\leq j\leq r$), subject only to the relations $t_j^{k_j}=1$ and $[a_1,b_1]\cdots[a_g,b_g]t_1\cdots t_r=1$. Here $[a,b]=a^{-1}b^{-1}ab$. In order to conveniently summarize this information, we associate to $\G$ its \emph{signature} $[g;k_1,\ldots,k_r]$.\footnote{Since there is no canonical ordering on the punctures, we will take $k_1\leq k_2\leq\ldots\leq k_r$ without loss of generality.} There exists a $\G$ with signature $[g;k_1,\ldots,k_r]$ if and only if $\chi(g;k_1,\ldots, k_r)<0$, where
\be\label{eq:chi}
	\chi(\G)=\chi(g;k_1,\ldots, k_r)=2-2g-\sum_{i=1}^r\frac{k_i-1}{k_i}
\ee
is the orbifold Euler characteristic. When $\G$ exists, $-2\pi\chi(\G)$ is the hyperbolic area of $X$.

If $\G$ has the signature  $[g;k_1,\ldots,k_r]$, then $X$ has genus $g$ and $r$ orbifold points of integer orders $k_1,\ldots, k_r$. In other words, the signature specifies the topological information about $X$. In most cases, there exist many different $\G$ with the same signature which lead to non-isometric $X$. In such cases we say that there is a non-trivial moduli space of $\G$ (or $X$) with signature $[g,k_1,\ldots k_r]$. 

\begin{figure}[h]
	\centering
	\includegraphics[width=.75\textwidth]{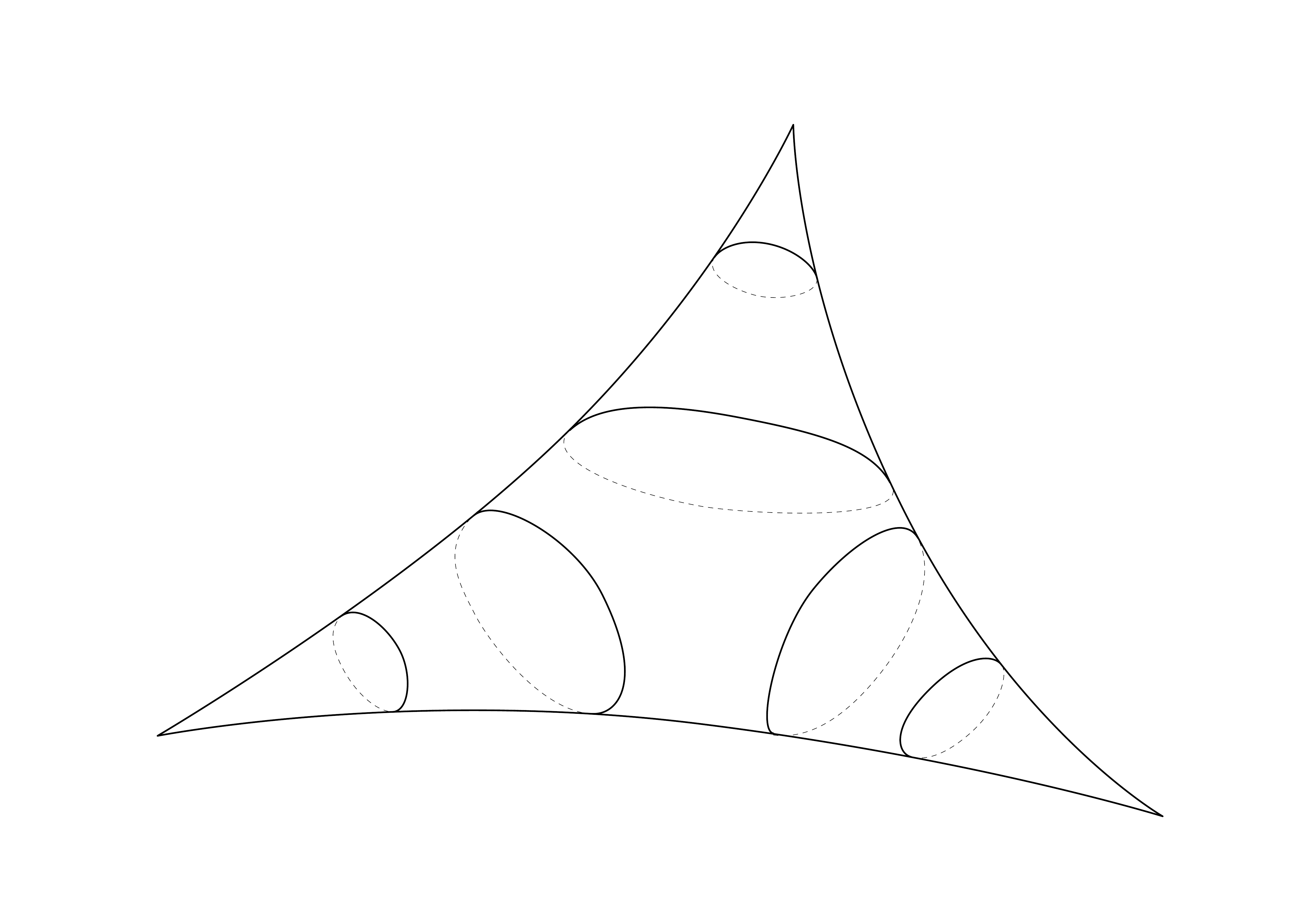}
	\caption{Artist's impression of a $[0;k_1,k_2,k_3]$ orbifold (not to scale).}
	\label{fig:sphere}
\end{figure}

The only exception to this is the case $g=0$ and $r=3$ (it is not possible to have $g=0$ and $r<3$ due to $\chi(\G)<0$). For a given signature $[0;k_1,k_2,k_3]$, there exists a unique $\G$ (up to conjugation), and correspondingly a unique hyperbolic orbifold $X$ (up to isometry). We will often refer to it as the $[0;k_1,k_2,k_3]$ orbifold. Intuitively, it is a sphere with three conical singularities with angle deficits $2\pi\tfrac{k_i-1}{k_i}$, see Figure~\ref{fig:sphere}. It can be constructed in the following way. First, consider a hyperbolic triangle in $\mathbb{H}$ with interior angles $\pi/k_1,\pi/k_2,\pi/k_3$, i.e.\ the $(k_1,k_2,k_3)$ hyperbolic triangle. It exists iff $\chi(0;k_1,k_2,k_3)<0$ and is unique up to an isometry of $\mathbb{H}$. Let $\G_0$ be the group generated by hyperbolic reflections in the sides of this triangle,  usually called the $(k_1,k_2,k_3)$ triangle group. Then $\G$ is just the orientation-preserving subgroup of $\G_0$, and is usually called the $(k_1,k_2,k_3)$ von Dyck group. Geometrically, $X$ consists of two copies of the $(k_1,k_2,k_3)$ triangle glued along the corresponding sides. The $[0;k_1,k_2,k_3]$ orbifolds will play an important role in the following discussion.

\subsection{Riemann-Roch theorem}
It is a standard result that one can introduce a unique complex structure on $X$ such that in local holomorphic coordinates $w$ the metric takes the form $ds^2=e^{2\f(z,\bar z)}dw d\bar w$ for some smooth function $\f(w,\bar w)$. Viewing $X$ as $\G\@ \mathbb{H}$, this complex structure is obtained from the standard complex structure $z=x+iy$ on $\mathbb{H}$.

Let $\cK$ be the canonical line bundle of $X$. Simply put, holomorphic sections of $\cK$ are, in local coordinates, objects of the form $f(w)dw$, where $f(w)$ is a holomorphic function. Similarly, sections of the $n$-th tensor power $\cK^{\otimes n}$ are objects of the form $f(w)dw^n$. If $f$ is a section of $\cK^{\otimes n}$ on $X=\G\@ \mathbb{H}$, then its pullback to $\mathbb{H}$ is an object of the form $f(z)dz^n$ that is invariant under the action of $\G$. The function $f(z)$ then transforms as a modular form of weight $2n$. In Section~\ref{sec:bootstrapMethod}, we identified the dimension of the space of such modular forms with the multiplicity $\ell_n$, which is the number of times $\cD_n$ appears in~\eqref{eq:SL2RdecompositionFine}.

This shows that the multiplicity $\ell_n$ is equal to the dimension of the space of holomorphic sections of $\cK^{\otimes n}$. Riemann-Roch theorem allows one to compute the latter dimension, and thus provides a formula for $\ell_n$. The result only depends on the signature $[g;k_1,\ldots,k_r]$ of $\G$ and reads (see e.g.\ Theorem 4.9 in \cite{MilneModularForms})
\be\label{eq:RRoch}
	\ell_n=(g-1)(2n-1)+\sum_{i=1}^r\left\lfloor n\frac{k_i-1}{k_i}\right\rfloor+\de_{n,1}\,.
\ee

Our strategy will be to use this formula to show that $\ell_n>0$ for some $n$, and then study the crossing equations for the four-point function correlator involving $\sO_{n,a}(z)\in\sD_n$ to obtain bounds on the Laplace eigenvalues $\l_k$.

\subsection{Numerical estimates of Laplace eigenvalues}

\label{sec:ff}
We are not aware of any exact formulas for the eigenvalues of hyperbolic orbifolds. However, it is possible to obtain numerical estimates using various methods. For example, very precise eigenvalues for the Bolza surface (which will be discussed in Section~\ref{sec:higherg}) with rigorous error bars were obtained in~\cite{strohmaier2013algorithm}. A non-rigorous approach was used in~\cite{cook2021properties}, where the eigenvalues of various hyperbolic surfaces were obtained using FreeFEM software~\cite{freefem}.\footnote{We were only able to reproduce several digits of the eigenvalues provided in~\cite{cook2021properties} using the method described there, the rest being sensitive to the choice of discretization.}

In this work, we used the latter approach in order to compare the eigenvalues against our bounds. In practice, this approach amounts to discretizing a fundamental domain of $\G$ in $\mathbb{H}$, computing the eigenvalues of a finite-dimensional system, and then seeing how the results change with the number of discretization points $N$. We took some care in ensuring that our results remain stable as we increase $N$. In some cases we found that we could do a reliable extrapolation by an inverse power law in $N$.

We do not discuss these computations in detail since they are, in any case, non-rigorous. However, we do expect that all approximate eigenvalues of specific surfaces quoted in this paper are correct up to a possible error in the last digit.

\subsection{Universal bounds on hyperbolic orbifolds}
\label{sec:universalorbifold}

\begin{table}[!t]
	\begin{center}
		\begin{tabular}{l|l|l|l||ll}

			$n_{min}$ & $r$ & $g$ & $\l_1^{\text{single}}(n_{min})$ & largest known $\l_1$ & signature \\
			\hline
			1& $\geqslant 0$&$\geqslant 1$& 8.47032& 8.46776 & [1;2] \\

			2& $\geqslant 4$& 0& 15.79144& 15.79023& [0;2,2,2,3] \\
						3& 3& 0& 23.07917& 23.07855 &[0;3,3,4]\\
						4& 3& 0& 30.35432& 28.07984 &[0;2,4,5]\\
						6 & 3& 0& 44.8883537& 44.88835& [0;2,3,7]\\
		\end{tabular}
	\end{center}
	\caption{Single correlator bound as a function of $n_{min}$. The values of $\l_1^{\text{single}}(n_{min})$ should be interpreted as exact rational numbers. For comparison, we provide also the approximate non-rigorous estimates of the largest known $\l_1$ among orbifolds with the given $n_{min}$. The examples with $n_{min}=1,2$ are at special points in the moduli space described in the main text.\label{table:singlecorr}}
\end{table}

In Section~\ref{sec:derivation}, we explained that if $\ell_n>0$ then one can study the crossing equation~\eqref{eq:crossineq} for the four-function correlator
\be
\<\sO_n(z_1)\sO_n(z_2)\widetilde\sO_n(z_3)\widetilde\sO_n(z_4)\>
\ee
to obtain bounds of the form
\be\label{eq:single}
	\l_1< \l_1^{\text{single}}(n),
\ee
where $\l_1^{\text{single}}(n)$ are obtained by a computer-assisted numerical search. 

The bound $\l_1^{\text{single}}(n)$ for some small values of $n$ is given in Table~\ref{table:singlecorr}. We can apply this bound for any $n$ for which $\ell_n>0$. However, we find that $\l_1^{\text{single}}(n)$ is an increasing function of $n$. Therefore, we get the strongest bound by applying it to $n=n_{min}$ which is defined to be the smallest value of $n$ for which $\ell_n>0$.

A simple counting argument using~\eqref{eq:RRoch} and the requirement that $\chi(\G)<0$, c.f.\ \eqref{eq:chi}, shows that $n_{min}\in \{1,2,3,4,6\}$ for any $\G$. The proof, which we omit, proceeds by verifying the following facts
\begin{itemize}
	\item $n_{min}=1$ for all orbifolds with $g>0$,
	\item $n_{min}=2$ for all orbifolds with $g=0$ and $r>3$,
	\item $n_{min}=3$ for all orbifolds with signature $[0;k_1,k_2,k_3]$ with $k_1,k_2,k_3\geq 3$,
	\item $n_{min}=4$ for all orbifolds with signature $[0;2,k_2,k_3]$ with $k_2,k_3\geq 4$,
	\item $n_{min}=6$ for all orbifolds with signature $[0;2,3,k_3]$ (these all have $k_3\geq 7$),
	\item this list exhausts all possibilities for $\G$.
\end{itemize}
The value of $\l_1^\text{single}(n_{min})$ then provides an upper bound on $\l_1$ among the hyperbolic orbifolds of the corresponding class described in the above list. In particular, by observing the values in Table~\ref{table:singlecorr}, we immediately obtain the following
\begin{theorem}
	For any hyperbolic 2-orbifold, the first non-zero eigenvalue $\l_1$ of the Laplace operator satisfies $\l_1< 44.8883537$.
\end{theorem}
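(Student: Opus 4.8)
The plan is to reduce the universal statement to the single-correlator bound \eqref{eq:single} applied at a carefully chosen weight. The crucial observation is that \eqref{eq:single} is available whenever $\ell_n>0$, i.e.\ whenever $X$ admits a nonzero holomorphic modular form of weight $2n$, and that the tabulated values $\l_1^{\text{single}}(n)$ increase with $n$. Thus for each orbifold I would apply the bound at $n=n_{min}$, the smallest weight for which $\ell_{n_{min}}>0$ (which exists since some $\ell_n$ is eventually positive by Riemann--Roch), and then control $\l_1^{\text{single}}(n_{min})$ uniformly by showing that $n_{min}$ can take only finitely many values across all hyperbolic orbifolds.

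The combinatorial heart of the argument is the claim that $n_{min}\in\{1,2,3,4,6\}$ for every admissible $\G$. I would prove this directly from the Riemann--Roch formula \eqref{eq:RRoch} together with the hyperbolicity constraint $\chi(\G)<0$ from \eqref{eq:chi}, splitting into cases on the signature $[g;k_1,\ldots,k_r]$. Since $\lfloor\tfrac{k_i-1}{k_i}\rfloor=0$ for $k_i\geq 2$, one has $\ell_1=g$; hence for $g\geq 1$ we get $n_{min}=1$, while for $g=0$ we have $\ell_1=0$ and must go further. For $g=0$ hyperbolicity forces $r\geq 3$ and $\sum_i\tfrac{k_i-1}{k_i}>2$, and the floors $\lfloor n\tfrac{k_i-1}{k_i}\rfloor$ can be evaluated explicitly for small $n$. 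One finds: if $r\geq 4$ then $\ell_2=r-3\geq 1$, giving $n_{min}=2$; if $r=3$ with all $k_i\geq 3$ then $\ell_3=-5+3\cdot 2=1$, giving $n_{min}=3$; if $r=3$ with exactly one $k_i=2$ and the others $\geq 4$ then $\ell_3=0$ but $\ell_4=-7+2+3+3=1$, giving $n_{min}=4$; and the remaining hyperbolic signatures with $r=3$ are exactly those of the form $[0;2,3,k]$ with $k\geq 7$, for which $\ell_4=\ell_5=0$ but $\ell_6=-11+3+4+5=1$, giving $n_{min}=6$. A short check that these cases are mutually exclusive and exhaustive---in particular that two orbifold points of order $2$, or the pairs $[0;2,3,k]$ with $k\leq 6$, violate $\chi<0$---completes the classification; note in passing that $n_{min}=5$ never occurs, because of the gap created by the $[0;2,3,k]$ family.

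With the classification in hand, the theorem follows by inspection of Table~\ref{table:singlecorr}: for each of the five possible values of $n_{min}$ the corresponding bound satisfies $\l_1<\l_1^{\text{single}}(n_{min})\leq \l_1^{\text{single}}(6)=44.8883537$, the maximum being attained only in the $[0;2,3,k]$ class. Since the first inequality is strict and the second is an equality at worst, every hyperbolic orbifold obeys $\l_1<44.8883537$.

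The main obstacle is not the packaging above but the rigorous certification of the numbers $\l_1^{\text{single}}(n)$ themselves, and in particular of $\l_1^{\text{single}}(6)$. These are produced by a computer-assisted search with \texttt{SDPB}, which internally uses inexact arithmetic; the genuinely hard step is to replace the numerical functional by an explicit functional $\a_{exact}$ with rational coefficients and to verify rigorously that the associated polynomial $P_\a^6(\l)$ satisfies $P_\a^6(0)=1$ and $P_\a^6(\l)\geq 0$ for all $\l\geq 44.8883537$, as carried out via the algorithm of appendix~\ref{app:polynomial} and the derivation of Section~\ref{sec:derivation}. One also needs either the (empirical) monotonicity of $\l_1^{\text{single}}(n)$ or, more safely, a direct comparison of the five tabulated values, to be sure that weight $6$ really realizes the worst case.
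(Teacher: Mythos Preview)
Your proposal is correct and follows essentially the same route as the paper: reduce to the single-correlator bound at $n=n_{min}$, classify $n_{min}\in\{1,2,3,4,6\}$ via Riemann--Roch and the hyperbolicity constraint, and read off the worst case from Table~\ref{table:singlecorr}. In fact you supply the explicit Riemann--Roch computations that the paper itself omits, and your closing remark---that one only needs to compare the five tabulated values rather than rely on monotonicity in $n$---is a fair and slightly cleaner way to close the argument.
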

More specialized versions of this theorem can be formulated based on the above discussion.

It is interesting to ask how close the bound~\eqref{eq:single} is to being saturated for various values of $n=n_{min}$. As shown in Table~\ref{table:singlecorr}, we have found concrete examples of $\G$ for which $\l_1$ is rather close to $\l_1^\text{single}(n_{min})$, except in the case $n_{min}=4$. Let us say a few words about these examples. Given our results, it is natural conjecture that they maximize $\l_1$ for the given $n_{min}$.\footnote{The question of uniqueness is more subtle. For example, an orbifold and its covering may have the same value of $\l_1$.}
\begin{itemize}
	\item For $n_{min}=6$, the bound appears to be saturated by the $[0;2,3,7]$ hyperbolic orbifold, i.e.\ a sphere with three orbifold singularities of orders 2,3,7. This is the hyperbolic orbifold of the smallest area, which is $\pi/21$.
	\item For $n_{min}=4$, the largest $\l_1$ that we found occurs in the orbifold $[0;2,4,5]$, although it is not particularly close to saturating the bound. This is the orbifold of the smallest area such that $n_{min}=4$. The area is $\pi/10$. 
	\item For $n_{min}=3$, the bound is nearly saturated by the $[0;3,3,4]$ hyperbolic orbifold of area $\pi/6$. The $[0;2,3,8]$ orbifold has the same $\l_1$ (it can be double-covered by $[0;3,3,4]$), but has $n_{min}=6$.
	\item For $n_{min}=2$, the bound is nearly saturated by the orbifold $X$ with signature $[0;2,2,2,3]$ (a sphere with four orbifold points), at the most symmetric point in the moduli space. Concretely, it has a $\Z_3$ symmetry which cyclically permutes the order-2 orbifold points and fixes the order-3 orbifold point. The quotient $\Z_3\@ X$ is the hyperbolic orbifold with signature $[0;2,3,9]$, which has the same $\l_1$.\footnote{Strictly speaking, we have not computed the spectrum of $X$. Instead, we computed the spectrum for $[0;2,3,9]$. We expect that the first non-constant Laplace eigenfunction on $X$ is $\Z_3$-invariant and thus is also present on $[0;2,3,9]$.}
	\item For $n_{min}=1$, the bound is nearly saturated by the most symmetric hyperbolic orbifold $X$ with signature $[1;2]$, i.e.\ a torus with one orbifold singularity. Specifically, it has a $\Z_6$ symmetry and $\Z_6\@ X$ is the $[0;2,3,12]$ orbifold (which has the same $\l_1$).
\end{itemize}

The best bound that we could find in the literature which applies to $g=0$ and $g=1$ orbifolds is the Yang-Yau bound~\cite{yang1980eigenvalues,el1983volume}. It will be discussed in more detail in Section~\ref{sec:higherg}. In the cases $g=0$ and $g=1$ it yields
\be
	\l_1\leq \frac{4(g+1)}{|\chi(\G)|}.
\ee
We can then obtain a bound at fixed $n_{min}$ by minimizing $|\chi(\G)|$ over $\G$ with that $n_{min}$. The result is given in the Table~\ref{table:comparisonOrbifold}. We see that in all cases our bound is much stronger. It should be noted, however, that Yang-Yau bound is not tailored to hyperbolic orbifolds and is more general.
\begin{table}[t]
	\begin{center}
		\begin{tabular}{l|l|l|l||l}
			$n_{min}$ & $r$ & $g$ & bound here & Yang-Yau bound \\
			\hline
			1& $\geqslant 0$&$\geqslant 1$& {8.47032}&16\\
			2& $\geqslant 4$& 0& {15.79144}& 24\\
			3& 3& 0& {23.07917}& 48 \\
			4& 3& 0& {30.35432}& 80 \\
			6 & 3& 0& {44.8883537}& 168\\
		\end{tabular}
	\end{center}
	\caption{
		Comparison of the bounds from Table~\ref{table:singlecorr} with the Yang-Yau bound~\cite{yang1980eigenvalues,el1983volume}.
		\label{table:comparisonOrbifold}
	}
\end{table}

\subsection{Bounds on hyperbolic orbifolds with $g\geq 2$.}\label{sec:higherg}

The bounds discussed above are very strong for $g=0$ ($n_{min}>1$) and $g=1$ ($n_{min}=1$), being nearly saturated by known orbifolds. However, from the point of view of the $n_{min}$ invariant, all $g\geq 2$ orbifolds belong to the $n_{min}=1$ case, and therefore the bounds described so far are not sensitive to the genus $g\geq 2$.

In this subsection, we use another observation to give bounds on the spectra of $g\geq 2$ orbifolds specific to a given genus. Concretely, for any $g$, the Riemann-Roch formula~\eqref{eq:RRoch} implies $\ell_1=g$. This allows us to use the bounds derived from the system of four-function correlators
\be
	\<\sO_{1,a}(z_1)\sO_{1,b}(z_2)\widetilde\sO_{1,c}(z_3)\widetilde\sO_{1,d}(z_4)\>,
\ee
where $1\leq a,b,c,d\leq g$, as discussed in Section~\ref{sec:derivation}. There, we explained how to obtain bounds of the form
\be
	\l_1<\l_1^\text{multi}(1,g),
\ee
where $\l_1^\text{multi}(1,g)$ is obtained by a computer-assisted numerical search.

We list the values of $\l_1^\text{mutli}(1,g)$ in Table~\ref{table:mutli}, along with the previously known bounds, which we will review below. The most notable cases in this table are $g=2$ and $g=3$.

\begin{table}[t]
	\begin{center}
		\begin{tabular}{l|l|l|l|l}
			
			$g$ & $\l_1^\text{multi}(1,g)$ & YY\cite{yang1980eigenvalues,el1983volume}  & R\cite{ros2020first}, KV\cite{karpukhin2021improved} & H\cite{huber1980spectrum}  \\
			\hline
			2& 3.8388976481\optimal  & 4 & 4 & 5.13439\\
			
			3& 2.6784823893\optimal   & 3 & $2.7085$&3.05862 \\
			
			4 &2.1545041334 & 2\optimal & 2\optimal & 2.33877\\
			
			5& 1.8526509456\optimal & 2 & $ 1.96788$& 1.96497 \\
			
			6& 1.654468363 & 1.6\optimal & 1.6\optimal &1.73262\\
			
			7 &  1.51326783\optimal & $1.66667$ & $1.61628 $ &1.57257\\
			
			8 & 1.40690466\optimal & $1.42858$ & $1.42858$ & 1.45464\\
			
			9  & 1.32348160\optimal& $1.5$ & $ 1.39445$ &1.36365\\
			
			10  & 1.25602193\optimal& $1.33334$ & $1.33334$ &1.29087\\
			
			11 & 1.2001524\optimal& $1.4$ & $ 1.34987$ & 1.23115\\
			
			12 & 1.1529856\optimal & $1.27273$ & $  1.26994$ & 1.18109\\
			
			13 & 1.1125346\optimal & $1.33334$ & $1.26177$ & 1.13840\\
			
			14 & 1.077385\optimal & $1.23078$ & $ 1.23078$ & 1.10147\\
			
			15 & 1.046501\optimal & $1.28572$ & $ 1.20059$ &1.06921 \\
			
			16 & 1.019105\optimal & $1.2$ & $ 1.17158$ &1.04054 \\
			
			17 & 0.9946005\optimal& $1.25$ & $1.17158$ & 1.01514\\
			
			18 & 0.972525\optimal& $1.17648$ & $1.15762$ & 0.992157\\
			
			19 & 0.95251\optimal& $1.22223$ & $ 1.17084$ & 0.971396\\
			
			20 & 0.93426\optimal & $1.15790$ & $ 1.11431$ &0.952519  \\
			
			$\vdots$ & $\vdots$ &  $\vdots$ &$\vdots$ & $\vdots$\\
			
			$\infty$ & 0.52 & 1 & 0.854061 & 0.25\optimal \\
			
		\end{tabular}
	\end{center}
	\caption{
		\label{table:mutli}
		The bound $\l_1^\text{mutli}(1,g)$ as a function of genus $g$. We also list previously known bounds for comparison. The second column contains the Yang-Yau bound \eqref{eq:ESI}. The third contains the improved version due to \cite{ros2020first} and \cite{karpukhin2021improved}. The fourth one comes from \cite{huber1980spectrum}. In each row, the best bound is marked with a red asterisk\optimal.}
\end{table}

In the case $g=2$, our bound is
\be
	\l_1\leq 3.8388976481.
\ee
This is to be compared with the largest known $\l_1$ of any $g=2$ orbifold, 
\be
	\l_1 = 3.838887258\ldots,
\ee
which occurs for $X$ the Bolza surface.\footnote{The Bolza surface is the genus 2 hyperbolic surface with the highest possible order of orientation preserving isometry group. This group is $\mathrm{GL}(2,3)$ and has order $48$. The group $\G$ defining the Bolza surface is an index $96$ subgroup of the $(2,3,8)$ triangle group; in particular, one can tile the Bolza surface by $96$ $(2,3,8)$ hyperbolic triangles. As a complex surface, the Bolza surface can be viewed as the smooth completion of the affine algebraic curve $y^2=x^5-x$ in $\mathbb{C}^2$.} This eigenvalue is known with high precision due to~\cite{strohmaier2013algorithm}. We see that our bound is extremely close to the Bolza $\l_1$, giving very strong evidence for the conjecture that the Bolza surface maximizes $\l_1$ among all $g=2$ orbifolds.

In the case $g=3$, our bound is
\be
	\l_1 \leq 2.6784823893,
\ee
which should be compared to the eigenvalue~\cite{cook2021properties}\footnote{See Section~\ref{sec:ff}.}
\be
	\l_1\approx2.6779
\ee
of the Klein quartic.\footnote{The Klein quartic is a genus 3 hyperbolic surface with the highest possible order of the orientation preserving isometry group. This group is $\mathrm{PSL}(2,7)$ and has order $168$. The group $\G$ defining the Klein quartic is an index $336$ subgroup of the $(2,3,7)$ triangle group; in particular the surface can be can be tiled by 336 $(2,3,7)$ hyperbolic triangles. As a complex surface, it can be defined by the equation $x^3y+y^3z+z^3x=0$ in the homogeneous coordinates on $\C \mathrm{P}^2$.} We again see that our bound is very close to an eigenvalue which is realized by a known surface. Our bound strongly supports the conjecture that the Klein quartic maximizes the value of $\l_1$ among all genus 3 orbifolds.

As we move to genus $g=4$, we find from Table~\ref{table:mutli} that our bound is rather far from being saturated since there exist stronger previously known bounds in this case. The same happens at $g=6$. While our bound is still the strongest that we know for other values of $g\leq 20$, these examples (and the large-$g$ behavior discussed below) make us skeptical that it can be nearly saturated by an orbifold for $g\geq 4$. 

Let us finally comment about the last line in the Table~\ref{table:mutli}, corresponding to $g=\oo$. By $\l_1^\text{mutli}(1,\oo)$ we mean the following. If one examines the system~\eqref{eq:red1}-\eqref{eq:red2} in Section~\ref{sec:derivation}, one finds that it has a well-defined limit as $\ell_1=g\to \oo$. The value $\l_1^\text{mutli}(1,\oo)$ is the one obtained from this system with $\ell_1=g=\oo$. While we expect that $\l_1^\text{mutli}(1,\oo)=\lim_{g\to\oo}\l_1^\text{mutli}(1,g)$, we have not attempted to prove this rigorously; the value $\l_1^\text{mutli}(1,\oo)$ should be interpreted with care. The values of other bounds in this line are, on the other hand, rigorous $g\to \oo$ limits of the finite-$g$ bounds. We have also found that the value $\l_1^\text{mutli}(1,\oo)$ converges extremely slowly as a function of $\L$.\footnote{We estimate the convergence rate to be $\L^{-\half}$.} While we do not expect to be able to improve the other bounds in Table~\ref{table:mutli} much beyond the last digit by merely increasing $\L$, we are confident that $\l_1^\text{mutli}(1,\oo)$ can be improved significantly. The value cited in the table is for $\L=81$, and we expect it to be as low as $0.34$ for very large $\L$ based on a naive extrapolation by a power law.

In the rest of this subsection we will review the bounds previously known in the literature and discuss how our bounds compare with them.
We will consider a more general setup in which one puts bounds on the eigenvalues of the Laplace-Beltrami operator on a connected closed orientable two-dimensional Riemannian manifold $X$ with metric $h$. We will only discuss the bounds on the first non-zero eigenvalue $\l_1$, but some of them admit generalizations for the subleading eigenvalues as well.

We define
\begin{equation}\label{eq:Ldefn}
	\Lambda_k(X) = \underset{h}{\mathrm{sup}} \left(\lambda_k(X, h) \mathrm{Area}(X,h) \right)
\end{equation}
where the supremum is taken over all Riemannian metrics $h$ on the Riemann surface $X$. In~\cite{yang1980eigenvalues} Yang and Yau derived the following bound 
\begin{equation}\label{eq:YY}
	\Lambda_1(X) \leqslant 8\pi (g+1)\,.
\end{equation}
The bound \eqref{eq:YY} holds for orbifolds as well. Their argument was shown to yield a better bound (henceforth called YY) in~\cite{el1983volume}
\begin{equation}\label{eq:ESI}
	\Lambda_1(X) \leqslant 8\pi \bigg\lfloor \frac{g+3}{2}\bigg\rfloor.
\end{equation}
This bound is sharp for $g=0$ and $g=2$~\cite{karpukhin2019yang,nayatani2019metrics}. This bound was improved first by A.~Ros for $g=3$~\cite{ros2020first}, and then by~\cite{karpukhin2021improved} for any $g$. These bounds are an improvement as long as $g\not\in\{4,6,8,10,14\}$. For $g\geqslant 102$ the best improved bound is
\begin{equation}
	\Lambda_{1}\left(X_{h}\right) \leqslant \frac{2 \pi}{13-\sqrt{15}}\left(g+(33-4 \sqrt{15})\left\lceil\frac{5 g}{6}\right\rceil+4(41-5 \sqrt{15})\right)\,,
\end{equation}
while for smaller $g$ different formulas should be used as described in~\cite{karpukhin2021improved}. The above expression is stronger than YY for $\g\geqslant 25$. Note that all the bounds discussed so far are valid for any Riemannian metric on $X$, not just the hyperbolic one. The bound for the hyperbolic metric is obtained from the definition~\eqref{eq:Ldefn} by using the fact that for a hyperbolic metric $h$ of sectional curvature $-1$, we have $\mathrm{Area}(X,h)=-2\pi\chi(\G)$ where $\chi(\G)$ is defined in~\eqref{eq:chi}.

A bound that is valid only for hyperbolic metrics was derived in~\cite{huber1980spectrum}, where it was shown that for a hyperbolic manifold, $\lambda_1$ is bounded from above by a $g$-dependent constant which goes to $1/4$ as $g\to \infty$. In fact, one can adapt their method to numerically put bound on $\lambda_1$ a finite $g$.\footnote{See Appendix~\ref{app:finiteg} for a short discussion of how this is done.} As was recently shown by Hide and Magee \cite{HideMagee}, there exists a sequence of closed hyperbolic surfaces with $g\rightarrow\infty$ and $\lambda_1\rightarrow 1/4$. This means the bound of ~\cite{huber1980spectrum} is sharp in the limit $g\rightarrow\infty$.

We list the numerical values of all these bounds in Table~\ref{table:mutli}, and mark the best bound with a red asterisk\optimal. We see that our bound is an improvement over the previous bounds for all $2\leqslant g\leqslant 20$ apart from $g=4,6$. On the other hand, it is clear that for sufficiently large $g$, the bound of~\cite{huber1980spectrum} becomes the best.

\subsection{Values of $\lambda_1$ attained by hyperbolic orbifolds}\label{ssec:mixed}
We would now like to address the following question. What is the image of the map $X\mapsto \lambda_1(X)$ when $X$ ranges over all hyperbolic orbifolds? Our proposal is shown in Figure~\ref{fig:conjecture}. We can summarize it as follows

\begin{conjecture}\label{conjecture:theoneandtheonlys}
Let $E\subset \R_{>0}$ be the set of $\lambda_1(X)$ as $X$ runs over all orbifolds. Then $E$ is the union of the interval $(0,\lambda_1^{[2,3,9]}]$ with the finite set $\{\lambda_1^{[2,3,8]}\}\cup\{\lambda_1^{[2,4,5]}\}\cup\{\lambda_1^{[2,3,7]}\}$. Here $\lambda_1^{[k_1,k_2,k_3]}$ denotes $\lambda_1$ of the orbifold with signature $[0;k_1,k_2,k_3]$.

Furthermore,
\begin{enumerate}
	\item $\lambda^{[2,3,7]}_1\approx44.88835$ only occurs for the type $[0;2,3,7]$,
	\item $\lambda^{[2,4,5]}_1\approx28.07984$ only occurs for the type $[0;2,4,5]$,
	\item $\lambda^{[2,3,8]}_1\approx23.07855$ only occurs for the types $[0;2,3,8]$ and $[0;3,3,4]$,
	\item $\lambda_1^{[2,3,9]}\approx15.79023$ only occurs for the type $[0;2,3,9]$ and for the $\Z_3$-symmetric point on the moduli space of type $[0;2,2,2,3]$.
\end{enumerate}
\end{conjecture}

In the rest of this subsection, we will use the bootstrap method and the Yang-Yau bound to make significant progress towards proving Conjecture~\ref{conjecture:theoneandtheonlys}. In particular, we will prove

\begin{theorem}\label{thm:setE}
Let $X$ be an orbifold such that $\lambda_1(X)\geq 15.7$. Then $X$ must have one of the following signatures:
\begin{align*}
&[0;2,3,7],\, [0;2,3,8],\,[0;2,3,9],\,[0;2,3,10],\,[0;2,4,5],\, [0;2,4,6],\,[0;2,5,5],\,[0;3,3,4],\\
&[0;2,2,2,3]\,.
\end{align*}
\end{theorem}
In order to complete the proof of Conjecture~\ref{conjecture:theoneandtheonlys} assuming Theorem~\ref{thm:setE}, one would have to show that
\begin{enumerate}
\item $\lambda_1^{[2,3,9]}\geq 15.7$,
\item $\lambda_1^{[2,3,10]}<\lambda_1^{[2,3,9]}$,
\item $\lambda_1^{[2,4,6]}<\lambda_1^{[2,3,9]}$,
\item $\lambda_1^{[2,5,5]}<\lambda_1^{[2,3,9]}$,
\item on the moduli space of $[0;2,2,2,3]$, $\lambda_1$ has a unique global maximum at the $\mathbb{Z}_3$-symmetric point, and it satisfies $\lambda_1 = \lambda_1^{[2,3,9]}$ at this point.
\end{enumerate}
We expect that points (1)--(4) can be taken care of by computing $\lambda_1$ for the surfaces in question using one of the rigorous methods, for example starting from the Selberg trace formula as developed in \cite{BookerStrombergsson}. Point (5) is more subtle and will require new ideas. Note that once (5) is established, it follows that every value $\lambda_1\in(0,\lambda_1^{[2,3,9]}]$ is realized for some $X$ of type $[0;2,2,2,3]$. This is because $\lambda_1$ is a continuous function on the moduli space and $\lambda_1 \rightarrow 0$ in the degeneration limit where a long tube is formed between pairs of punctures \cite{MR573440}.

\begin{proof}[Proof of Theorem~\ref{thm:setE}]
The idea of the proof is to use the bootstrap method and the Yang-Yau bound to show that $\lambda_1(X)<15.7$ for all signatures except those listed in the Theorem. There are several cases to consider. Firstly, all orbifolds of positive genus satisfy $n_{min}=1$, and therefore $\lambda_1<8.47032$ as a result of the single-correlator bound shown in Table~\ref{table:singlecorr}.

It remains to analyze genus-0 orbifolds. For genus-0 orbifolds with at least 5 orbifold points, the Yang-Yau bound given by \eqref{eq:ESI} implies $\lambda_1\leq 8$. This is because 8 is the Yang-Yau bound for the type $[0;2,2,2,2,2]$ and the bound is monotonic decreasing as a function of the Euler characteristic.

Moving on to genus 0 with 4 punctures, the Yang-Yau bound similarly takes care of all signatures except for $[0;2,2,2,3]$ and $[0;2,2,2,4]$. Indeed, the bound gives $\lambda_1\leq 40/3$ for $[0;2,2,2,5]$. To take care of $[0;2,2,2,4]$, we use the fact that orbifolds of this signature admit two linearly independent holomorphic forms with $n=4$. We set up the mixed correlator bootstrap for the system
\be
	\<\sO_{4,a}(z_1)\sO_{4,b}(z_2)\widetilde\sO_{4,c}(z_3)\widetilde\sO_{4,d}(z_4)\>,
\ee
with $a,b,c,d\in{1,2}$. We verified that the resulting bound is at least as good as $\lambda_1< 15.7$.

Let us set aside $[0;2,2,2,3]$ for now and move on to genus 0 with 3 punctures. To treat this case, the main new idea is to use the mixed correlator bootstrap for a system of holomorphic forms of different weights. We define $n'_{min}$ to be the second smallest $n$ for which $\ell_n>0$. It turns out that the $[0;k_1,k_2,k_3]$ orbifolds that came close to saturating our bounds in Section~\ref{sec:universalorbifold} have a distinguished value of $n'_{min}$. Specifically,
\begin{itemize}
	\item if $n_{min}=3$ then either $n'_{min}\leq 5$ or $n'_{min}=6$ and $X$ is the $[0;3,3,4]$ orbifold,\footnote{$n'_{min}=5$ for $[0;3,3,p]$ with $p>4$ and $n'_{min}=4$ otherwise.}
	\item if $n_{min}=4$ then either $n'_{min}\leq 6$ or $n'_{min}=8$ and $X$ is the $[0;2,4,5]$ orbifold,\footnote{$n'_{min}=6$ for $[0;2,4,p]$ with $p>5$ and $n'_{min}=5$ otherwise.}
	\item if $n_{min}=6$ then either $n'_{min}=8$ or $n'_{min}=12$ and $X$ is the $[0;2,3,7]$ orbifold.
\end{itemize}

Given this specificity of the pair $(n_{min},n'_{min})$, it is useful to study the crossing equations for correlators involving $\sO_{n_{min}}(z)$ and $\sO_{n'_{min}}(z)$ (and their conjugates). This leads to the following bound
\be
	\l_1<\l_1^{\text{mixed}}(n_{min},n'_{min}),
\ee
where the values of $\l_1^{\text{mixed}}(n_{min},n'_{min})$ are listed in Table~\ref{table:mixed}. The algorithm that we used to obtain them is discussed in Section~\ref{sec:derivation}.\footnote{Note that in all cases that we consider here $n'_{min}-n_{min}\leq 2$, and $n_{min}\geq 3$. This is important to exclude contributions of discrete series to the product $\sO_{n'_{min}}\widetilde\sO_{n_{min}}$, as was discussed in Section~\ref{sec:derivation}.} We also list in Table~\ref{table:mixed} the orbifolds which, as far as we know, come the closest to saturating the bound. We can see that we get more detailed bounds than in Section~\ref{sec:universalorbifold}. For example, suppose we have an orbifold $X$ with $n_{min}=4$. From Table~\ref{table:singlecorr} we can only conclude that $\l_1<30.35432$. On the other hand, from the above discussion and Table~\ref{table:mixed} we infer that either $X$ is isometric to $[0;3,3,4]$ or $\l_1<15.9536$.

\begin{table}[t]
	\begin{center}
		\begin{tabular}{l|l|l||ll}

			$n_{min}$& $n'_{min}$ & $\l_1^{\text{mixed}}(n_{min},n'_{min})$ & largest known $\l_1$ & signature \\
			\hline
			3 & 4 & 12.2906 & 11.1982 & $[0;3,4,4]$\\
			3 & 5 & 12.1527 & 12.1362 & $[0;3,3,5]$\\
			\hline
			4 & 5 &15.9536 & 13.2389 & $[0;2,5,5]$\\
			4 & 6 &15.8107 & 15.0315 & $[0;2,4,6]$\\
			\hline
			6 & 8 & 23.0997 & 23.0785 &$[0;2,3,8]$\\
 
		\end{tabular}
	\end{center}
	\caption{Bootstrap bound $\l_1^{\text{mixed}}(n_{min},n'_{min})$ from a pair of coherent states with dimensions $n_{min}$ and $n'_{min}$. The values for the largest known $\l_1$ are non-rigorous and were computed using FreeFEM~\cite{freefem}.\label{table:mixed}}
\end{table}

Bounds in Table~\ref{table:mixed} allow us to make progress towards proving Conjecture~\ref{conjecture:theoneandtheonlys} by showing that $\lambda_1(X)<15.7$ for a large class of orbifolds of type $[0;k_1,k_2,k_3]$. Indeed, suppose that $k_1\geq 3$ and $k_2\geq 4$ (hence $k_3\geq 4$ since $k_1\leq k_2\leq k_3$). Then $n_{min} = 3$ and $n'_{min}=4$. Hence $\lambda_1\leq 12.2906$. Similarly, suppose $k_1=k_2=3$ and $k_3\geq 5$. Then $n_{min} = 3$ and $n'_{min}=5$. Hence $\lambda_1\leq 12.1527$. In other words, we have shown that $\lambda_1<15.7$ for all triangles $[k_1,k_2,k_3]$ such that $k_1\geq 3$ with the exception of $[3,3,4]$.

It remains to treat the triangles of the form $[2,k_2,k_3]$. To do that, we will employ the system of bootstrap equations involving modular forms of three distinct weights $n_{min}<n'_{min}<n''$ and their antiholomorphic counterparts. Consider first $[2,k_2,k_3]$ with $k_2\geq 5$ and $k_3\geq 6$. These triangles have $n_{min}=4$, $n'_{min}=5$. They also have a modular form with $n''=6$. We consider the set of all four-point correlators $\langle\sO_{n_1}\sO_{n_2}\widetilde{\sO}_{n_3}\widetilde{\sO}_{n_4}\rangle$, where $n_1,n_2,n_3,n_4\in\{4,5,6\}$. We checked that the resulting bootstrap bound implies $\lambda_1<15.7$. The same set-up with $n_{min}=4$, $n'_{min}=6$ and $n''=7$ proves $\lambda_1<15.7$ for all triangles $[2,4,k_3]$ with $k_3\geq 7$. Finally, the choice $n_{min}=6$, $n'_{min} = 8$, $n''=11$ proves $\lambda_1<15.7$ for all triangles $[2,3,k_3]$ with $k_3\geq 11$.
\end{proof}

\subsection{Bounds on structure constants}

As we mentioned in Section~\ref{sec:derivation}, it is also possible to put bounds on the structure constants appearing in~\eqref{eq:schannel} and~\eqref{eq:tchannel}. Here we briefly explore the bounds on the coefficients $S_{2n}=\sum_{a=1}^{\ell_{2n}}f_{2n,a}^2$ that enter the expansion~\eqref{eq:schannel}. Recall that in terms of the coherent states, the coefficients $f_{2n,a}$ can be defined by
\be
\sO_n(0)\sO_n(0)=\sum_{a=1}^{\ell_{2n}}f_{2n,a} \sO_{2n,a}(0).
\ee

Using the methods described in Section~\ref{sec:derivation}, we can derive a lower bound
\be
S_{2n}\geq S_{2n}^\text{single},
\ee
valid for any orbifold with $\ell_n>0$. For $n=6$ we find the bound\footnote{We did not verify this particular bound rigorously using rational arithmetic.}
\be\label{eq:fbound}
S_{12}\geq S_{12}^\text{single}=1.15409694432.
\ee

In Appendix~\ref{app:opetheory} we explain how value of $S_{2n}$ can be computed to a high precision in the case of $[0;k_1,k_2,k_3]$ orbifolds. We find that for the $[0;2,3,7]$ orbifold and $n=6$,
\be\label{eq:237ope}
S_{12}=1.1540969443944852107791801492\cdots.
\ee
This is remarkably close to the bound~\eqref{eq:fbound}. We should note that this bound was computed at $\L=37$, and we do not know whether it converges to the exact answer~\eqref{eq:237ope}. We noticed that convergence slows down dramatically (from exponential to an inverse power law) around $\L=21$.

Finally, let us note that it is also possible to obtain upper bounds on $S_{2n}$, provided one assumes a lower bound $\l_{1,\text{lower}}$ on the value of $\l_1$. As $\l_{1,\text{lower}}$ approaches $\l_{1}^\text{single}(n)$, the upper bound merges with the lower bound.

Note that bounds on integrals of triple products of automorphic forms have been much studied in the literature \cite{Sarnak1994,BernsteinReznikov2010,MichelVenkatesh2010,Nelson2021}. This is in part due to their connection to special values of L-functions in the arithmetic cases thanks to Watson's formula \cite{Watson2008}. In these works, the interest is in asymptotic bounds when one or more of the eigenvalues goes to infinity. Our method is closely related to one of Bernstein and Reznikov \cite{BernsteinReznikov2010} who studied constraints from the consistency of the spectral decomposition of integrals of quadruple product of functions in the principal series in $L^2(\Gamma\backslash G)$. In the present work, we have shown that when combined with linear programming, the analogous consistency constraints with discrete series can lead to nearly sharp bounds. It would be interesting to apply our methods to the situation when the eigenvalue tends to infinity.

\section{Final remarks}
\label{sec:finalremarks}
We conclude this paper with an additional remark and possible future directions.

\subsection*{Extremal functionals and spectrum extraction}
In the cases where the upper bound on $\lambda_1$ is nearly saturated by actual hyperbolic orbifolds, we can use our bootstrap method to give estimates not only of $\lambda_1$ but also of the remaining eigenvalues $\lambda_2,\,\lambda_3,\ldots$. To explain why this is the case, let us focus for simplicity on the bounds $\lambda_1^{\text{single}}$ coming from the crossing equation of a single correlator. Furthermore, let us imagine that in the limit of infinite truncation order $\Lambda\rightarrow\infty$, the upper bound $\lambda_1^{\text{single}}$ indeed converges to $\lambda_1(X)$ of an actual hyperbolic orbifold $X$. Consider the functions $P_\a^n(\l)$ discussed in Section~\ref{sec:derivation}, obtained by optimizing the bound at fixed $\Lambda$. Based on the examples that we analyzed, we conjecture that we can choose constants $c_{\Lambda}>0$ such that $c_{\Lambda}\rightarrow 0$ as $\Lambda\rightarrow\infty$ and such that $c_{\Lambda}P_\a^n(\l)$ has a limit $P^{n}_{\infty}(\lambda)$ as $\Lambda\rightarrow \infty$, where $P^{n}_{\infty}(\lambda)$ is a smooth function, not identically zero. Under our assumptions, we must have $P^{n}_{\infty}(0) = 0$ and $P^{n}_{\infty}(\lambda)\geq 0$ for all $\lambda\geq\lambda_1(X)$. The only way that these properties are consistent with the existence of the hyperbolic orbifold $X$ is if in fact $P^{n}_{\infty}(\lambda)$ vanishes on all the eigenvalues $\lambda_k(X)$, $k\geq 1$ which contribute to the correlator in question. Furthermore, since $P^{n}_{\infty}(\lambda)\geq 0$ for $\lambda\geq\lambda_1(X)$, the zeros at $\lambda_k(X)$ with $k\geq 2$ must all be of even order. Assuming there are no accidental zeros, it follows that the entire spectrum can be extracted from zeros of $P^{n}_{\infty}(\lambda)$.

Let us compare this expectation with our data obtained from the bootstrap algorithm of Section~\ref{sec:derivation} at finite $\Lambda$. Figure~\ref{fig:functional} shows the plot of $P_\a^n(\l)$ with $n=6$, obtained at $\Lambda = 41$. We see that besides the simple zero at $\approx44.8835$, $P_\a^n(\l)$ also has a sequence of local minima (almost double zeros) at $\lambda \approx 142.5552,\,201.4709,\,323.40$ and $456.3$. These values are remarkably close to the Laplace eigenvalues of the $[0;2,3,7]$ orbifold, which we obtained by solving the Laplace equation numerically using FreeFEM
$$
44.88835,\,142.5551,\,201.4705,\,323.3603,\,373.2063,\,454.6009\ldots\,.
$$

We note that the $P_\a^n(\l)$ does not seem to capture the eigenvalue $373.2063$, which is presumably due to it not appearing in the correlator $\<\sO_6\sO_6\widetilde{\sO}_6\widetilde{\sO}_6\>$ for symmetry reasons.

We remark that the extremal functional $P^{n}_{\infty}(\lambda)$ plays the same role in our analysis as the magic functions known from the work on the sphere packing problem~\cite{MR1973059,sphere8,sphere24}, and the analytic extremal functionals for the 1D conformal bootstrap~\cite{Mazac:2016qev,Mazac:2018mdx}. It has been shown that there is a close connection between the latter two cases~\cite{Hartman:2019pcd}. It would be remarkable if one could give an analytic proof of exact saturation of the bounds on $\lambda_1$, similarly to what has been achieved in \cite{sphere8,sphere24,Mazac:2016qev,Mazac:2018mdx}.

\begin{figure}[h]
\centering
\includegraphics[width=.75\textwidth]{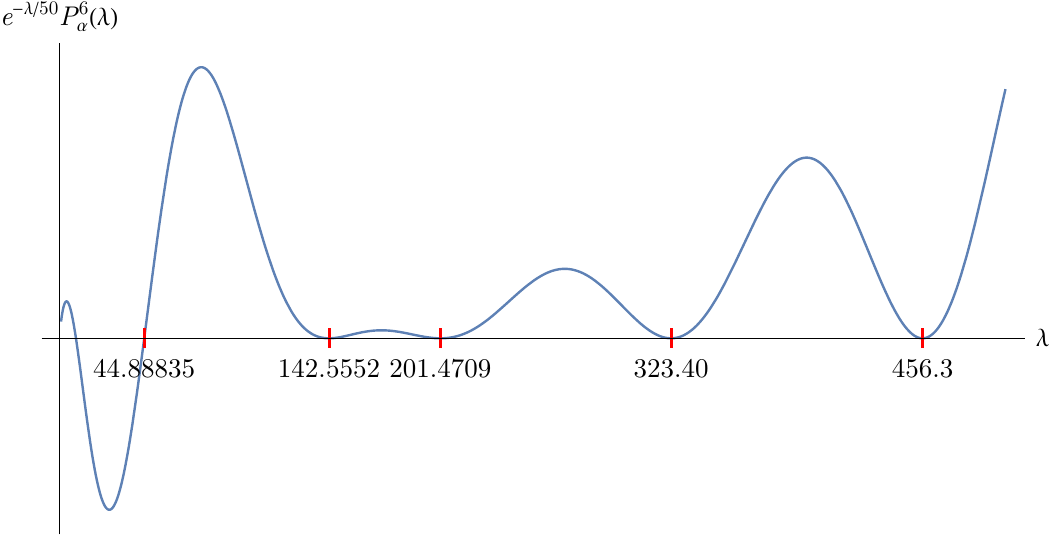}
\caption{Plot of the extremal functional $P_\a^n(\l)$ for $n=6$ and $\Lambda=41$, multiplied by $e^{-\lambda/50}$ for clarity. Besides the simple zero at $\lambda\approx 44.88835$, the functional has a sequence of local minima (almost double zeros). The location of the local minima are in a very good agreement with higher eigenvalues of the $[0;2,3,7]$ orbifold.}
\label{fig:functional}
\end{figure}

\subsection*{Generalizations}
In this paper, we obtained bounds on the Laplacian spectra of two-dimensional hyperbolic orbifolds by studying the crossing equation coming from consistency of several simple correlators of functions transforming in discrete series representations of $G=\PSL(2,\mathbb{R})$.

There are many possible generalizations of this simple set-up. Most directly, we can consider larger systems of mixed correlators of discrete series states, along the lines of Section~\ref{ssec:mixed}. We hope that this approach can lead to sharper bounds at large genus. Next, it is natural to study correlators of functions transforming in the principal series representations, as well the corresponding crossing equations, and thereby formalize the analysis of \cite{Bonifacio:2021msa}.

It is also very natural to replace $G=\PSL(2,\mathbb{R})$ with other semisimple Lie groups. For example, the analysis of this paper can be repeated with $G=\SO_0(1,d)$, which will lead to bounds on $d$-dimensional hyperbolic manifolds and orbifolds. We can also consider the less dramatic change $\PSL(2,\mathbb{R})\rightarrow \SL(2,\mathbb{R})$. In this way, we would introduce additional unitary irreducible representations. They correspond to spinor bundles on $X$. The set-up with $G=\SL(2,\mathbb{R})$ is thus expected to lead to bounds on the spectrum of the Dirac operator on hyperbolic orbifolds.

It would also be interesting to study in more detail the relation of our setup the conformal field theory. It appears that the role of Euclidean unitarity of the path integral has not so far been explored (at least in the context of the modern numerical bootstrap). One may speculate that it can provide new bounds on conformal field theories. Furthermore, our equations (and their generalization to external $\mathbb{O}_{k,a}$) provide a systematic way of numerically studying the recently proposed de Sitter bootstrap~\cite{Hogervorst:2021uvp,DiPietro:2021sjt}. We hope to report on these questions in near future.

\newpage

\appendix

\section{Smooth vectors of complementary series representations}
\label{app:complementarysmooth}

In this appendix we give an elementary proof of the following proposition.
\begin{proposition}
	$\cC_{s}^\oo=C^\oo(\ptl\D)$ (with coincidence of topologies).
\end{proposition}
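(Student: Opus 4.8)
The plan is to compare the two Fréchet spaces through their decomposition under the maximal compact subgroup $K=\{r_\theta\}$. Since the kernel in the complementary-series norm \eqref{eq:Cnorm} depends only on the combination $z\bar w$ along the circle, it is $K$-invariant, so the monomials $f_j(z)=z^j$ are mutually orthogonal in $\cC_s$ and simultaneously diagonalize $L_0$. First I would compute the diagonal norms $c_j:=\|f_j\|_{\cC_s}^2$. Writing $z=e^{i\theta}$ and using rotation invariance reduces $c_j$ to a single Fourier integral of $|2\sin(\tfrac{\psi}{2})|^{-(2-2\De)}$, a classical beta-type integral evaluating to a ratio of Gamma functions; Stirling's formula then yields the two-sided estimate $c_j\asymp(1+|j|)^{1-2\De}$. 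The only facts I need are that $c_j>0$ for all $j$, which holds precisely in the complementary range $s\in(0,\thalf)$, and that the weight is comparable to a fixed power of $|j|$.

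Next I would identify the $K$-smooth vectors. A vector $v=\sum_j a_j f_j$ is smooth for the one-parameter group $K$ iff $\theta\mapsto r_\theta v=\sum_j a_j e^{ij\theta}f_j$ is smooth in $\cC_s$, i.e.\ iff $\sum_j |a_j|^2|j|^{2N}c_j<\oo$ for every $N\geq 0$. Because $c_j$ is comparable to the fixed polynomial weight $(1+|j|)^{1-2\De}$, this is equivalent to rapid decay of $(a_j)$, which is exactly the statement that $\sum_j a_j z^j\in C^\oo(\ptl\D)$; the same comparison shows that the $K$-Sobolev seminorms $v\mapsto\|(1+L_0^2)^N v\|_{\cC_s}$ and the standard seminorms of $C^\oo(\ptl\D)$ generate the same topology. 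Thus the space of $K$-smooth vectors equals $C^\oo(\ptl\D)$ as a topological vector space. Since every $G$-smooth vector is in particular $K$-smooth, this already gives the continuous inclusion $\cC_s^\oo\subseteq C^\oo(\ptl\D)$.

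For the reverse inclusion I would argue, in the spirit of the proof of Proposition~\ref{prop:coherent}, that every $v\in C^\oo(\ptl\D)$ is automatically $G$-smooth. The action \eqref{eq:ContAction} reads $(u\cdot v)(z)=|{-}\b z^*+\a^*|^{-2\De}\,v(u^{-1}\cdot z)$, and the cocycle $|{-}\b z^*+\a^*|$ is smooth and bounded below on $\ptl\D$ because $|\a|^2-|\b|^2=1$ forces $|\a|>|\b|$, so that $|{-}\b z^*+\a^*|\geq|\a|-|\b|>0$. Hence $(u,z)\mapsto(u\cdot v)(z)$ is jointly smooth on $G\times\ptl\D$, which for the compact manifold $\ptl\D$ is equivalent to $u\mapsto u\cdot v$ being a smooth map into the Fréchet space $C^\oo(\ptl\D)$. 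Composing with the continuous inclusion $C^\oo(\ptl\D)\hookrightarrow\cC_s$ established above shows $u\mapsto u\cdot v$ is smooth into $\cC_s$, i.e.\ $v\in\cC_s^\oo$. Combined with the previous paragraph this gives $\cC_s^\oo=C^\oo(\ptl\D)$ as sets, and the coincidence of topologies follows from the seminorm comparison (or from the open mapping theorem between Fréchet spaces).

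The main obstacle is precisely this last upgrade from $K$-smoothness to full $G$-smoothness: one must be careful that smoothness of a map $G\to C^\oo(\ptl\D)$ is genuinely controlled by joint smoothness on the \emph{compact} circle $\ptl\D$, since otherwise weak (pointwise) smoothness of the orbit map need not give Fréchet-valued smoothness. The joint-smoothness argument resolves this cleanly; as an alternative one could instead invoke elliptic regularity by exhibiting a positive elliptic element $\Delta_e=1-(J_{-1}^2+J_0^2+J_1^2)\in\cU(\mathfrak{g})$, verifying that it maps each $K$-type into a bounded range of $K$-types with coefficients polynomial of degree at most $2$ in $j$, and identifying $\cC_s^\oo$ with $\bigcap_k\mathrm{Dom}(d\pi(\Delta_e)^k)$. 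The norm asymptotics and the resulting topology comparison are routine but require honest care near the singularity of the kernel at $z=w$.
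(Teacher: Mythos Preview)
Your proof is correct and follows the same overall architecture as the paper's: compute the Fourier/$K$-type norms, observe their polynomial asymptotics $c_j\asymp(1+|j|)^{1-2\De}$, use $L_0$-smoothness to get $\cC_s^\oo\subseteq C^\oo(\ptl\D)$, and finish the topology with the open mapping theorem. The one substantive difference is in the reverse inclusion $C^\oo(\ptl\D)\subseteq\cC_s^\oo$. The paper rewrites the $\cC_s$ pairing $(h,g\cdot f)_{\cC_s}$ as an $L^2$ pairing $(\tilde h,g\cdot f)_{L^2}$ against a reweighted vector $\tilde h_m=c_m h_m\in L^2$, checks weak smoothness by elementary $L^2$ estimates, and then invokes the theorem that weak smoothness implies smoothness. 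You instead argue joint smoothness of $(u,z)\mapsto(u\cdot v)(z)$ on $G\times\ptl\D$ and use that, for a compact manifold, this is equivalent to smoothness of $u\mapsto u\cdot v$ into the Fr\'echet space $C^\oo(\ptl\D)$. Both routes are standard; yours trades the weak-smoothness theorem for the (equally standard) characterization of smooth maps into $C^\oo$ of a compact manifold, and is arguably the more direct of the two.
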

\begin{proof}
The space $\cC_s$ in which the complementary series representation acts can be viewed as the completion of $L^2(\ptl \D)$ with respect to the norm~\eqref{eq:Cnorm}. A direct computation shows that if $f\in L^2(\ptl \D)$ is given by
\be
	f(\theta)=(2\pi)^{-\half}\sum_{m=-\oo}^{+\oo} f_m e^{im\theta},
\ee
then
\be
	&\|f\|_{L^2(\ptl \D)}^2=\sum_{m=-\oo}^{+\oo}|f_m|^2,\\
	&\|f\|_{\cC_s}^2=C\sum_{m=-\oo}^{+\oo}\frac{(\thalf-s)_{|m|}}{(\thalf+s)_{|m|}}|f_m|^2,\label{eq:CnormF}
\ee
where $C>0$ and $(a)_n=a(a+1)\cdots (a+n-1)$. We can thus identify elements $f$ of $\cC_s$ with sequences $f_m$ for which~\eqref{eq:CnormF} is finite. For a given $f$, the element $f_m$ of the sequence can be computed by taking inner product in $\cC_s$ with $e^{im\theta}$.

Let $L_0\in\mathfrak{su}(1,1)$ be the element of Lie algebra that is represented by constant shifts in $\theta$. Note for any $m$, $e^{im\theta}$ is $K$-finite and thus smooth. If $f\in \cC^\oo_s$ we can then compute the action of $L_0$ on $f$ in terms of $f_m$ by writing $f_m$ as an inner product with $e^{im\theta}$ and moving the action of $L_0$ onto $e^{im\theta}$. We can normalize $L_0$ so that
\be
	(L_0f)_m=m f_m.
\ee
By assumption, $L_0f$ should still be in $\cC_s$, and so the norm $\|L_0 f\|_{\cC_s}^2$ should converge. Iterating, we find that if $f\in \cC^\oo_s$, then $f_m$ should decay faster than any power of $m$. This implies that $f\in C^\oo(\ptl\D)$. 

Conversely, let $f\in C^\oo(\ptl\D)\subset\cC_s$ and $h\in \cC_s$. Define $\tl h_m=C\frac{(\half-s)_{|m|}}{(\half+s)_{|m|}}h_m$. Then $\tl h\in L^2(\ptl\D)$ and for $g\in G$
\be
	(h,g\.f)_{\cC_s}=(\tl h,g\.f)_{L^2(\ptl\D)}.
\ee
We can easily verify by the usual estimates in $\theta$ variable that $(\tl h,g\.f)_{L^2(\ptl\D)}$ is a smooth function of $g$, and so $f$ is weakly smooth. Then~\cite{Neto} it is smooth, and so $f\in \cC_s^\oo$. This proves $\cC_s^\oo=C^\oo(\ptl\D)$ as sets.

If a sequence $f_n\in C^{\oo}(\ptl\D)$ converges to $0$, $f_n\to 0$ in $C^\oo(\ptl\D)$, it is easy to check that $D\.g\.f_n\to 0$ in $L^2(\ptl\D)$ for any $D\in \cU(\mathfrak{su}(1,1))$ and uniformly in $g$ over compact subsets of $G$. Since $\|\.\|_{\cC_s}^2\leq C\|\.\|_{L^2(\ptl \D)}^2$, it follows that $D\.g\.f_n\to 0$ in $\cC_s$ (i.e.\ the seminorms of $f_n$ in $\cC^\oo_s$ all tend to zero), and so the identity map $C^\oo(\ptl\D)\to \cC^\oo_s$ is continuous. Since it is surjective, and $C^\oo(\ptl\D)$ and $\cC^\oo_s$ are both Frech\'et, it is also open by the open mapping theorem. This implies that the topologies on $C^\oo(\ptl\D)$ and $\cC^\oo_s$ coincide.
\end{proof}

\section{Proving positivity of polynomial matrices}
\label{app:polynomial}

Suppose that we are given a polynomial $P(x)\not\equiv 0$,
\be
	P(x)=p_0+p_1x+p_2x^2+\cdots +p_n x^n,
\ee
where $p_i\in \R$. In practice our $p_i$ will be rational, so we can perform the arithmetic computations described below
on a computer without any rounding errors. We would like to prove that $P(x)\geq 0$ for all $x\geq 0$. 

Let $Q(y)\equiv(1-y)^n P(y/(1-y))$. Note that $P(x)\geq 0$ for all $x\geq 0$ iff $Q(y)\geq 0$ for all $y\in [0,1]$. On the other hand $Q(y)\geq 0$ for all $y\in [a,b]$ iff the polynomial
\be
	P_{[a,b]}(x)\equiv (x+1)^n Q\p{\frac{bx+a}{x+1}}
\ee
satisfies $P_{[a,b]}(x)\geq 0$ for all $x\geq 0$. With this preparation, we can now describe the algorithm.

Let \texttt{proof}$(a,b)$ be the following procedure. 
\begin{enumerate}
	\item Check whether there is a trivial proof that $Q(y)\geq 0$ for all $y\in [a,b]$. For this, we verify whether $P_{[a,b]}(x)$ has only non-negative coefficients. If it does, then trivially $P_{[a,b]}(x)\geq 0$ for all $x\geq 0$ and thus $Q(y)\geq 0$ for all $y\in [a,b]$ by the above discussion. In this case \texttt{proof}$(a,b)$ terminates with a success.
	\item We check whether $P_{[a,b]}(x)$ has only non-positive coefficients. If it does, then trivially $Q(y)\leq 0$ for all $y\in [a,b]$, and \texttt{proof}$(a,b)$ terminates with failure.
	\item We run \texttt{proof}$(a,(a+b)/2)$ and \texttt{proof}$((a+b)/2,b)$. Terminate with success if both terminate with success, and terminate with failure otherwise.
\end{enumerate}

If \texttt{proof}$(0,1)$ terminates with a success, it generates a division of $[0,1]$ into closed subintervals on each of which $Q(y)$ is non-negative for a trivial reason.\footnote{In fact, positive in the interior.} If it terminates with a failure, it generates a closed subinterval of $[0,1]$ on which $Q(y)$ is non-positive.\footnote{In fact, negative in the interior.}

\texttt{proof}$(0,1)$ may not terminate if $P(x)$ has positive roots of even multiplicity. It can be shown that \texttt{proof}$(0,1)$ always terminates if $P(x)$ is strictly positive for $x>0$, but this is not necessary for our purposes: we simply state that it terminated on all inputs that we ran it for.

One can verify positivity of matrix polynomials $M(x)$ by applying \texttt{proof}$(0,1)$ to $P(x)=\det M(x)$ and verifying a finite number of additional inequalities (i.e.\ $M(1)\succ 0$ and that $P(x)$ doesn't vanish at the endpoints of the intervals generated by  \texttt{proof}$(0,1)$).

\section{Bounds of~\cite{huber1980spectrum} for finite $g$}
\label{app:finiteg}
In this appendix we describe briefly how the results of~\cite{huber1980spectrum} can be used to obtain bounds on $\l_1$ at finite genus. 

First, we define for $\mu>1/4$ the function $s(\mu)$ by
\begin{equation}
s(\mu)\equiv \frac{\left(\int_{1}^{t_*}\mathrm{d}t\ P(\frac{1}{2} \sqrt{1-4 \mu }-\frac{1}{2},t)\right)^2}{\int_{1}^{t_*}\mathrm{d}t\ P(\frac{1}{2} \sqrt{1-4 \mu }-\frac{1}{2},t)^2},
\end{equation}
where $P(x,y)\equiv P_x(y)$ are the Legendre $P$-functions, $t_*=\mathrm{min}\{ t \geq 1\vert P(\frac{1}{2} \sqrt{1-4 \mu }-\frac{1}{2},t)=0\}$. Note that for $\mu>1/4$ the $P(\frac{1}{2} \sqrt{1-4 \mu }-\frac{1}{2},t)$ has infinitely many zeros in $(1,\infty)$.

In~\cite{huber1980spectrum} the following inequality is shown to hold for any $\mu>\frac{1}{4}$,
\begin{equation}
\left(1-\frac{s(\mu)}{2(g-1)}\right)\lambda_1 \leq \mu \,.
\end{equation} 
We then simply scan over $\mu\in M_g$ where $M_g=\left\{\mu>\tfrac{1}{4}\Big\vert 1-\frac{s(\mu)}{2(g-1)}>0 \right\}$ to obtain the bound
\begin{equation}
\lambda_1 \leq \inf_{\mu\in M_g}\frac{\mu}{1-\frac{s(\mu)}{2(g-1)}}\,.
\end{equation} 
The resulting bounds are listed in Table~\ref{table:mutli}.

\section{Structure constants on $[0;k_1,k_2,k_3]$}
\label{app:opetheory}

In this appendix we briefly review how the coefficients of the form $f_{m+n,a}$ appearing in 
\be
	\sO_m(0)\sO_n(0)=\sum_{a=1}^{\ell_{m+n}} f_{m+n,a}\sO_{m+n}(0)
\ee
can be computed exactly in the case of $[0;k_1,k_2,k_3]$ orbifolds. Here we assume for simplicity that $\ell_m=\ell_n=\ell_{m+n}=1$, so there is only one coefficient $f_{m+n}$ to compute.

The basic idea is that the hyperbolic metric for $[0;k_1,k_2,k_3]$ can be computed exactly in the following sense. Consider the Riemann sphere as a complex surface. By this we mean, technically, two copies of $\C$ parametrized by $z$ and $z'$, glued via $z=1/z'$. We can then find a closed-form expression for the conformal factor $e^{2\f(z,\bar z)}$ such that the metric
\be
	ds^2=e^{2\f(z,\bar z)}dzd\bar z
\ee
is a smooth hyperbolic metric on the Riemann sphere apart from three conical singularities of orders $k_1,k_2,k_3$ at $z=0,1,\oo$.

Note that this is non-trivially different from constructing the metric in $\G\@\mathbb{H}$ picture. There, the hyperbolic metric is self-evident, but the complex structure is not: we need to work to construct the holomoprhic modular forms for $\G$. Here, there is no $\G$ and the complex structure is explicit, but the metric takes a non-trivial form, which is described below.

Suppose we found the hyperbolic metric. Then we can construct the holomorphic sections
\be
	g_n(z)dz^n
\ee
that correspond to $\sO_n(0)$ by first picking some holomorphic sections and then normalizing them according to our metric. The coefficient $f_{m+n}$ is then computed by
\be
	f_{m+n}=\frac{g_m(z)g_n(z)}{g_{m+n}(z)}.
\ee

Determining the basis holomorphic sections $g_n(z)dz^n$ is slightly non-trivial because the holomorphic sections of $\cK^{\otimes n}$ on an orbifold do not correspond to holomorphic functions $g_n(z)$. This is also related to the $k_i$-dependent correction terms in~\eqref{eq:RRoch}. First, let us explain what a holomorphic section of $\cK^{\otimes n}$ is near an orbifold point. 

Let $w$ be the coordinate in an orbifold chart which covers an orbifold point of order $k$, situated at $w=0$. Transition functions from this chart to other charts should be invariant under $w\to e^{2\pi i/k}w$. A holomorphic section of $\cK^{\otimes n}$ is locally a section
\be
	h(w)dw^n
\ee
where $h(w)$ is holomorphic in the chart, and the section is invariant under $w\to e^{2\pi i/k}w$. That is,
\be\label{eq:hinv}
	h(w)=e^{2\pi i n/k}h(e^{2\pi i/k}w).
\ee
We can always define a new chart with a coordinate $u$ related to $w$ by $u=w^k$. If we do this for every orbifold point of $X$, we give $X$ a new complex structure which makes it into a complex surface $X_{smooth}$. In the $g=0$ case $X_{smooth}$ is just the Riemann sphere. Holomorphic sections of $\cK^{\otimes n}(X)$ become sections of $\cK^{\otimes n}(X_{smooth})$ which are holomorphic away from the orbifold points of $X$, but can have poles at these points. Indeed, in $u$ coordinate we have the section
\be
	t(u)du^n=h(w)dw^n,
\ee
and so
\be
	t(u)=k^{-n} u^{-n(k-1)}h(w).
\ee
By analyzing the Taylor series expansion of $h(w)$ near $w=0$ under the condition~\eqref{eq:hinv}, we find that $t(u)$ is single-valued but can have a pole at $u=0$ of order $\lfloor n(k-1)/k\rfloor$.

So we conclude that holomorphic sections of $\cK^{\otimes n}(X)$ are meromorphic sections of $\cK^{\otimes n}(X_{smooth})$ which have poles of order at most $\lfloor n(k_i-1)/k_i\rfloor$ at an orbifold point of $X$ of order $k_i$. Using this fact and the Riemann-Roch theorem for smooth surfaces, one can recover the orbifold version~\eqref{eq:RRoch}.

The coordinates $z$ and $z'$ on our Riemann sphere describe the complex structure of $X_{smooth}$. So in order to construct a holomorphic section $g_n(z)dz^n$ of $\cK^{\otimes n}(X)$, we need to find a function $g_n(z)$ that is holomorphic in $\C\setminus\{0,1\}$, and 
\begin{itemize}
	\item has a pole of order at most $\lfloor n(k_1-1)/k_1\rfloor$ at $z=0$,
	\item has a pole of order at most $\lfloor n(k_2-1)/k_2\rfloor$ at $z=1$,
	\item grows at most as $z^{-\lceil n(k_3+1)/k_3\rceil}$ as $z\to \oo$.
\end{itemize}
In the rest of this appendix we focus mostly on the $[0;2,3,7]$ orbifold. For example, the unique section with $n=6$ for $[0;2,3,7]$ orbifold is
\be
	g_6(z)dz^6=N_6 z^{-3}(z-1)^{-4}dz^6,
\ee
where we choose $N_6$ so that $\|g_6\|=1$, computed below. Similarly, the unique choice for $g_{12}(z)$ is
\be
	g_{12}(z)dz^{12}=N_{12} z^{-6}(z-1)^{-8}dz^{12}.
\ee
The structure constant is then given by
\be
	f_{12}=\frac{N_6^2}{N_{12}}.
\ee
It only remains to discuss the computation of $N_6$ and $N_{12}$.

\subsection{Explicit form of the hyperbolic metric}

The problem of finding the hyperbolic metric on a sphere with $r$ singularities is well-known both in physics (in the context of semiclassical Liouville theory~\cite{Zamolodchikov:1995aa}) and in mathematics~\cite{Hempel}. In general it requires solving a Fuchsian differential equation with $r$ regular singularities and determining a set of $r-3$ accessory parameters from some monodromy constraints. In the case of $r=3$, the differential equation becomes hypergeometric, and there are no accessory parameters to fix. This allows one to obtain a relatively simple closed-form solution, which we give here without a derivation (see, e.g.~\cite{Harlow:2011ny}).

First we define the constants
\be
	\eta_i=\frac{1}{2}\frac{k_i-1}{k_i},
\ee
and the functions
\be
\tl w_1(z)&=z^{\eta_1}(1-z)^{\eta_2} {}_2F_1(\eta_1+\eta_2-\eta_3,\eta_1+\eta_2+\eta_3-1,2\eta_1,z),\\
\tl w_2(z)&=z^{1-\eta_1}(1-z)^{\eta_2} {}_2F_1(1-\eta_1+\eta_2-\eta_3,-\eta_1+\eta_2+\eta_3,2(1-\eta_1),z).
\ee
Then the hyperbolic metric is given by
\be\label{eq:trimetric}
e^{2\f}=\frac{4r(1-2\eta_1)^2}
{
	\p{r\tl w_1(z)\tl w_1(\bar z)-\tl w_2(z)\tl w_2(\bar z)}^2
},
\ee
where
\be
	r=\frac{\G(2(1-\eta_1))^2\G(1+\eta_1-\eta_2-\eta_3)\G(\eta_1+\eta_2-\eta_3)\G(\eta_1-\eta_2+\eta_3)\G(\eta_1+\eta_2+\eta_3-1)}
{\G(2\eta_1)^2\G(2-\eta_1-\eta_2-\eta_3)\G(1-\eta_1+\eta_2-\eta_3)\G(1-\eta_1-\eta_2+\eta_3)\G(-\eta_1+\eta_2+\eta_3)}.
\ee

\subsection{Evaluation of the normalization integrals}

In the rest of this appendix we outline the main steps of the computation of the normalization factors $N_a$. The $L^2$ norm of a section $f_a(z)dz^p$ is given by
\be
	\|f_a(z)dz^a\|^2=\frac{1}{\vol X}\int_X dzd\bar z e^{(2-2a)\f(z,\bar z)}|f_a(z)|^2=\frac{i}{2\vol X}\int_X dz\wedge d\bar z e^{(2-2a)\f(z,\bar z)}f_a(z)\overline{f_a(z)},
\ee
From the expression~\eqref{eq:trimetric} we see that for $a\geq 1$ the integrals that we need to compute have the general form
\be
\cI=\int_X dz\wedge d\bar z \sum_{i} F_i(z) G_i(\bar z).
\ee
The functions $F_i$ and $G_i$ are not completely independent. Each of these has cuts, but the sum $\sum_{i} F_i(z) G_i(\bar z)$ is single-valued. These functions have singularities at $z=0,1$ and $\oo$. We put the cuts on $(-\oo,0]$ and $[1,+\oo)$. We pick some $p\in \C$ and define
\be
H_i(z)=\int_{p}^z dz F_i(z),
\ee
with contour avoiding the cuts. The dependence on $p$ will go away eventually. In the cut plane we have $dH_i=F_idz$, so
\be
dz\wedge d\bar z \sum_{i} F_i(z) G_i(\bar z)=d\p{H_i(z)G_i(\bar z)d\bar z},
\ee
and thus the integral becomes
\be
\cI=\sum_i \int_C d\bar z H_i(\bar{z})G_i(\bar z),
\ee
where $C$ wraps the two cuts. We can separate it into two contributions $\cI=\cI_{left}+\cI_{right}$. Let us consider the left cut $(-\oo,0]$, and split the contour in definition of $H_i(z)$ into a segment going from $p$ to $0$ and from $0$ to $z$ along a straight line. We then have
\be
\cI_{left}=&\sum_i \int_{-\oo}^0 dt G_i(t-i0)\p{\int_p^0 dt' F_i(t')+\int_0^{t} dt' F_i(t'+i0)}\nn\\
&-\sum_i \int_{-\oo}^0 dt G_i(t+i0)\p{\int_p^0 dt' F_i(t')+\int_0^{t} dt' F_i(t'-i0)}.
\ee
The single-valuedness property of $\sum_i F_i(z) G_i(\bar z)$ mentioned above means that $\sum_i G_i(t-i0)F_i(t'+i0)=\sum_i G_i(t+i0)F_i(t'-i0)$ for $t,t'<0$. So we get for the left cut
\be
\cI_{left}=\sum_i \int_{-\oo}^0 dt (G_i(t-i0)-G_i(t+i0))\int_p^0 dt' F_i(t').
\ee
Similarly, on the right cut we get
\be
\cI_{right}=\sum_i \int_{1}^{+\oo} dt (G_i(t-i0)-G_i(t+i0))\int_p^1 dt' F_i(t').
\ee
For $G_i$ well-behaved at infinity we have
\be
\int_{1}^{+\oo} dt (G_i(t-i0)-G_i(t+i0))=-\int_{-\oo}^0 dt (G_i(t-i0)-G_i(t+i0))
\ee
And so in total the integral splits into a sum of products of one-dimensional integrals
\be
\cI=\sum_i \int_{1}^{+\oo} dt (G_i(t-i0)-G_i(t+i0)) \int_0^1 dt' F_i(t').
\ee
These integrals are products of powers of $z$, $1-z$ and of hypergeometric functions of $z$. We don't know how to evaluate them analytically, but they can be computed numerically to arbitrarily high precision. In particular, we find the OPE coefficient,
\be
	S_{12}=(f_{12})^2=1.1540969443944852107791801492\cdots.
\ee

\printbibliography


\end{document}